\newcommand{\beq}{\begin{equation}}
\newcommand{\eeq}{\end{equation}}
\DeclareMathOperator*{\esssup}{ess\,sup}
\newtheorem{theorem}{Theorem}[section]
\newtheorem{lemma}[theorem]{Lemma}
\newtheorem{coroll}[theorem]{Corollary}
\newtheorem{prop}[theorem]{Proposition}
\newtheorem{definition}[theorem]{Definition}
\newtheorem{remark}[theorem]{Remark}
\newtheorem{ass}[theorem]{Assumption}
\newcommand*{\lnr}{\left|\mkern-1mu\left|\mkern-1mu\left|}
\newcommand*{\rnr}{\right|\mkern-1mu\right|\mkern-1mu\right|}
\newcommand{\msc}[1]{\textbf{MSC2010 Classification:} #1.}
\newcommand{\keywords}[1]{\textbf{Key words:} #1.}
\def\theequation{\arabic{section}.\arabic{equation}}
\begin{document}
\title{\textbf{Analytical Pricing of American Put Options \\ on a Zero Coupon Bond \\ in the Heath-Jarrow-Morton Model}\thanks{These results extend a portion of the second Author PhD dissertation \cite{PhD-T} under the supervision of the first Author. In an earlier version of this work, circulated under the title ``Analytical Pricing of American Bond Options in the Heath-Jarrow-Morton Model'', we considered the positive part of the spot rate so to deal with a bounded stochastic discount factor. We acknowledge an anonymous referee for pointing out a possible inconsistency of this choice with the martingale theory of arbitrage-free pricing of derivatives. Here we consider the problem in its full generality.}}
\author{Maria B. Chiarolla\thanks{Dipartimento di Metodi e Modelli per l'Economia, il Territorio e la Finanza (MEMOTEF) Universit\`a di Roma `La Sapienza', via del Castro Laurenziano 9, 00161 Roma, Italy; \texttt{maria.chiarolla@uniroma1.it}}\:\:\:\:\:\:and\:\:\:\:\:Tiziano De Angelis\thanks{Corresponding author.~School of Mathematics, University of Manchester, Oxford Rd.~M13 9PL Manchester (UK); \texttt{tiziano.deangelis@manchester.ac.uk}}}
\maketitle
\begin{abstract}
We study the optimal stopping problem of pricing an American Put option on a Zero Coupon Bond (ZCB) in the Musiela's parametrization of the Heath-Jarrow-Morton (HJM) model for forward interest rates.

First we show regularity properties of the price function by probabilistic methods. Then we find an infinite dimensional variational formulation of the pricing problem by approximating the original optimal stopping problem by finite dimensional ones, after a suitable smoothing of the payoff. As expected, the first time the price of the American bond option equals the payoff is shown to be optimal.
\end{abstract}
\msc{91G80, 91G30, 60G40, 49J40, 35R15}
\vspace{+8pt}

\noindent\keywords{American Put options on a Bond, HJM model, forward interest rates, Musiela's parametrization, optimal stopping, infinite-dimensional stochastic analysis}

\section{Introduction}

A major challenge in mathematical finance is pricing derivatives with an increasing degree of complexity. A huge theoretical effort has been made in the last forty years to provide suitable tools for this purpose. The volume of traded options and the wide variety of their structures require a deep analysis of both theoretical and numerical methods. 

An important class of traded options is that of \textit{American options}. The mathematical formulation of this problem was given in the eighties by A.\ Bensoussan \cite{Ben84} and I.\ Karatzas \cite{Kar88}, among others. In mathematical terms pricing an American option corresponds to solving an \textit{optimal stopping} problem (for a survey cf.\ \cite{Pes-Shir}) in which the state dynamics is that of the security underlying the contract, usually a diffusion process (cf.~\cite{Pes-Shir}, Section 25, for a 1-dimensional geometric Brownian motion and \cite{Jai-Lam90} for more general diffusions). In such case one may find a variational formulation of the optimal stopping problem; that is, a free-boundary problem in the language of PDE (cf.~for instance \cite{Ben-Lio82} and \cite{Fri82} for a survey).

Here we aim to study the problem of pricing an American Put option on a Zero Coupon Bond (American Bond option) with the forward interest rate process as underlying. This option gives the holder the right to sell the ZCB for a fixed price $K$ at any time prior to the maturity $T$. The forward rate is the instantaneous interest rate agreed at time $t$ for a loan which will take place at a future time $s\geq t$. It is often denoted by $f(t,s)$ and taking $s=t$ one recovers the ``so called'' spot rate $R(t)=f(t,t)$. The price of the Bond, $B(t,s)$, is linked to the forward rate by the ordinary differential equation
\begin{eqnarray}\label{eq1}
f(t,s)=-\frac{\partial}{\partial s}\,\ln\left(B(t,s)\right).
\end{eqnarray}
For simplicity we will consider a ZCB with maturity equal to the maturity of the option, i.e.~$B(t,T)$. The option payoff at time $t$ is given by $[K-B(t,T)]^+$, where $[\,\cdot\,]^+$ denotes the positive part. The arbitrage free price of the American bond option is defined as
\begin{align}\label{eq-p}
V(t,f(t,\,\cdot\,))=\sup_{t\leq \tau\leq T}\mathbb{E}\Big\{e^{-\int_t^\tau{R(u)du}}\big[K-B(\tau,T)\big]^+\Big\}.
\end{align}
Notice that $V$ depends on the entire forward curve as it is typical of infinite dimensional optimization problems; hence one expects that it should solve an infinite dimensional variational inequality. However, for American options with an infinite dimensional underlying process it is not straightforward to establish a connection with PDE's in Hilbert spaces (cf.\ for instance \cite{DaPr-Zab04}). Such connection is instead known for \textit{European options} under forward rates; in fact their prices may be uniquely characterized through specific Kolmogorov equations (cf.\ \cite{Gol-Mus}). In some sense, that is a natural generalization of the Black and Scho\-les pricing formula to the infinite dimensional setting.  Infinite dimensional variational inequalities have not received as much attention as their finite dimensional counterparts. A good survey may be found in \cite{Bar-Mar08}, \cite{Ch-DeA12a}, \cite{Ch-Men89}, \cite{Gat-Sw99}, \cite{Zab01} and the references therein.

There exists a large literature on interest rate models concerning both theoretical and numerical aspects (for good surveys cf.\ \cite{Bjork}, \cite{Br-Mer}, \cite{Mu-Rut} for instance). In this paper, for the forward interest rates we choose the framework of the famous HJM model, one of the most reliable ones, which was introduced by D.~Heath, R.~Jarrow and A.~Morton \cite{HJM92} in 1992. The peculiarity of the stochastic process representing the forward interest rate is its infinite dimensional character. In essence, at each time $t$, the HJM model describes the family of rates $f(t,s)$, with $s\ge t$, that is the whole term structure of forward rates. A suitable parametrization of $f(t,s)$, modeled by an infinite dimensional stochastic differential equation, was obtained by M.\ Musiela \cite{Mus93} in 1993. An exhaustive description of the HJM model and its offspring may be found in \cite{Fil01} and \cite{Fil09}.

{In the last decade a significative effort has been made in order to establish conditions under which the forward rate curve of the HJM model admits a so-called finite-dimensional realization. In that case the forward curve may be described as a function of a finite dimensional diffusion (see~for instance \cite{Bj-Ch99}, \cite{Bj-La02}, \cite{Bj-La-Sv04}, \cite{ChKw03}, \cite{Fil-Te03}, \cite{Fil-Te03b}, \cite{Ta10}), and pricing American options reduces to solving variational inequalities in $\mathbb{R}^n$ in the spirit of \cite{Jai-Lam90}. Our problem (instead) is fully infinite dimensional. We do rely on a Galerkin-type finite-dimensional approximation of the forward curve but such reduction has no evident connection with the aforementioned theory.}

Our financial problem has been studied in \cite{Gat-Sw99} by means of viscosity theory, although in a different framework; that is, under the Goldys-Musiela-Sondermann parametrization (\cite{GMS}) of the HJM model. That completely determines the volatility structure of the dynamics and simplifies the underlying infinite dimensional stochastic differential equation by removing an unbounded term in the drift. A possible drawback of the model in \cite{Gat-Sw99} is the lack of consistency with the market's observations. This fact has been extensively discussed by D.\ Filipovic in \cite{Fil01}.

We provide a variational formulation of the pricing problem \eqref{eq-p} which is the infinite dimensional extension of that in \cite{Jai-Lam90}. 
We also find an optimal exercise time for the American Bond option. Our approach is partially based on our recent results on infinite dimensional optimal stopping and variational inequalities \cite{Ch-DeA12a}. However, here the payoff is less regular than the one studied in \cite{Ch-DeA12a} and the discount factor is stochastic, whereas in \cite{Ch-DeA12a} it was zero. To deal with the present setting we need to prove \emph{a-priori} regularity of $V$ in \eqref{eq-p} rather than obtaining it afterwards from the variational problem.

The paper is organized as follows. In Section 2 we introduce the financial model of the forward interest rate dynamics. In Section 3 we give the mathematical formulation of the corresponding optimal stopping problem and we carry out a detailed probabilistic analysis of the regularity properties of the American Bond option's price $V$. Section 4 is devoted to a regularization of the Put payoff $\Psi$. We associate an optimal stopping problem with value function $V_k$ to each smooth approximation $\Psi_k$ of the original payoff $\Psi$. Then we show that $V_k\to V$ as $k\to\infty$. In Section 5 we approximate the infinite dimensional optimal stopping problem $V_k$ by a sequence of finite dimensional ones. By using arguments as in \cite{Ch-DeA12a} we prove that $V_k$ is a suitable solution of an infinite dimensional variational inequality. Finally an infinite dimensional variational inequality for the price $V$ of the original American Bond option is obtained in Section 6. Also, we show that the first time at which $V$ equals the payoff $\Psi$ is an optimal exercise time for the option's holder. A technical appendix completes the paper.
\section{The interest rate model}

The forward rate at time $t$ for a loan taking place at a future time $s\geq t$ and returned at $s+ds$ is commonly denoted by $f(t,s)$. The instantaneous spot rate is obtained by setting $s=t$ and it is denoted by $R(t):=f(t,t)$. For every fixed maturity $s$ the time evolution of the forward rate is described by the map $t\mapsto f(t,s)$ with $t\leq s$. 

Consider a probability space $(\Omega,\mathcal{F},\mathbb{P})$ and denote by $(\mathcal{F}_t)_{t\ge0}$ the filtration generated by a $d$-dimensional Brownian motion
$B$, completed with the null sets. For simplicity but with no loss of generality we take $d=1$. Let $C^{0,1}_b(\mathbb{R})$ denote the set of bounded, Lipschitz-continuous real functions. Take $\tilde{\sigma}\in C^{0,1}_b(\mathbb{R})$, $\tilde{\sigma}$ non-negative and time-homogeneous {(other volatility structures which are possibly unbounded and non time-homogeneous are considered for instance in the recent papers \cite{BaZa12} and \cite{BaZa12b})}. According to the Heath-Jarrow-Morton model (HJM) (cf.~\cite{HJM92}), $\mathbb{P}$ may be assumed to be the risk-neutral probability measure on $(\Omega,\mathcal{F})$ and the forward rate with maturity $s$ may be described by the SDE
\begin{align}\label{SDE-HJM}\hspace{-5pt}
f(t,s)=f(0,s)&+\int_0^t\tilde{\sigma}(f(u,s))\int_u^s{\tilde{\sigma}(f(u,v))dv\,}du+\int_0^t{\tilde{\sigma}(f(u,s))
dB_u},\qquad t\in[0,s],
\end{align}
where $f(0,s)$ is deterministic and denotes the initial data at time zero. {The existence of a risk neutral probability measure is equivalent to assuming the particular form of the drift as given in \eqref{SDE-HJM} (cf.~\cite{HJM92} or \cite{Fil09}, Chapter 6).}

There exists a unique strong solution $f(\cdot,\cdot)$ of \label{SDE-HJM} continuous in both variables (cf.~\cite{Mor88}). Unfortunately the process $\big(f(t,s)\big)_{0\leq t\leq s}$ is not Markovian since the drift in \eqref{SDE-HJM} depends on the evolution of the whole forward curve. On the other hand, the Markov property holds for the infinite dimensional process $t\mapsto\{f(t,v),v\ge t\}$; therefore, pricing derivatives often requires to set dynamics in the infinite dimensional SDE's framework (cf.~\cite{DaPr-Zab}). This is accomplished by means of the so-called Musiela's parametrization (cf.~\cite{Mus93}) that describes the forward rate curve $f(t,s)$ in terms of the time to maturity $x:=s-t$ rather than the maturity time $s$; hence $f(t,s)=f(t,t+x)$. Then, in terms of the original forward curve we define the map $(t,x)\mapsto r_t(x)$ by setting $r_t(x):=f(t,t+x)$; that is, at any given time $t$ the model's input is the forward rate curve $x\mapsto r_t(x)$. The spot rate is obtained by taking $x=0$ and it is denoted by $r_t(0)$.

The process $t\mapsto r_t(\cdot)$ may be interpreted as an infinite-dimensional process taking values in a suitable Hilbert space $\mathcal{H}$. On such space the unbounded linear operator $A:={\frac{\partial}{\partial x}}$ generates a $C_0$-semigroup of bounded linear operators $\left\{S(t)\,|\,t\in\mathbb{R}_+\right\}$. In particular, $S(\cdot)$ is the semigroup of left-shifts defined by $S(t)h(x)=h(t+x)$ for any function $h:\mathbb{R}_+\rightarrow\mathbb{R}$ (for further details on semigroup theory the reader may refer to \cite{Pazy}). Define $\sigma(r_t)(x):=\tilde{\sigma}(f(t,s))$ and set
\begin{equation}\label{driftHJM}
F_\sigma(r_t)(x):=\sigma(r_t)(x)\int_0^x{\sigma(r_t)(y)dy},\qquad x\in\mathbb{R}_+.
\end{equation}
Straightforward calculations allow to write \eqref{SDE-HJM}
as
\begin{equation}\label{infiniteSDE}
r_t(x)=S(t)r_0(x)+\int_0^t{S(t-u)F_\sigma(r_u)(x)du}+\int_0^t{S(t-u)\sigma(r_u)(x)dB_u}.
\end{equation}
The link to the theory of infinite dimensional SDE's is now rather natural; in fact, under appropriate conditions on $\sigma$ and $\mathcal{H}$, (\ref{infiniteSDE}) turns out to be the unique \textit{mild} solution of
\begin{eqnarray}\label{infiniteSDEb}
\left\{
\begin{array}{l}
dr_t=\left[Ar_t+F_\sigma(r_t)\right]dt+\sigma(r_t)dB_t,\qquad t\in[0,T],\\
\\
r_0=r\in\mathcal{H},
\end{array}
\right.
\end{eqnarray}
where $0<T<\infty$ (cf.~\cite{DaPr-Zab}, Chapter 7).

In the present work $\mathcal{H}$ is chosen according to \cite{Fil01} {(cf.~also \cite{Fil-Te03}, Example 4.2)} and the notation $\mathcal{H}=\mathcal{H}_w$ is adopted (other possible models are available in \cite{BaZa12b}, \cite{Gol-Mus} and \cite{GMS}, among others). {We denote by $AC(\mathbb{R}_+)$ the set of absolutely continuous functions on $\mathbb{R}_+$}. Some fundamental facts are recalled in what follows.
\begin{definition}\label{hilb-sp}
Let $w:\mathbb{R}_+\rightarrow[1,+\infty)$ be a non decreasing $C^1$-function such that
\begin{equation}\label{def-w}
w^{-\frac{1}{3}}\in L^1(\mathbb{R}_+).
\end{equation}
Define
\begin{equation}\label{H-s}
\mathcal{H}_w:=\{h\in AC(\mathbb{R}_+)\,\,\,|\,\,\,\|h\|_w<\infty\},
\end{equation}
where
\begin{equation}\label{def-norm}
\|h\|^2_w:=|h(0)|^2+\int_{\mathbb{R}_+}{|h^\prime(x)|^2w(x)dx}.
\end{equation}
\end{definition}
\noindent The derivatives in Definition \ref{hilb-sp} are weak derivatives 
and the space $(\mathcal{H}_w,\|\cdot\|_w)$ is a Hilbert space (cf.~\cite{Fil01}, Theorem 5.1.1). An important consequence of \eqref{def-w} and \eqref{def-norm} is the  continuous injection  $\mathcal{H}_w\hookrightarrow L^\infty(\mathbb{R}_+)$ (cf.~\cite{Fil01}, Chapter 5, Eq.~(5.4)), i.e.~there exists $C>0$ such that
\begin{equation}\label{injection}
\sup_{x\in\mathbb{R}_+}|h(x)|\leq C\,\|h\|_w,\qquad h\in\mathcal{H}_w.
\end{equation}
Also, we point out that if $h\in\mathcal{H}_w$ then ${h(\infty)}:=\lim_{x\to\infty}h(x)$ exists and is finite (cf.~\cite{Fil01}, p.~77).

Recall that $\tilde{\sigma}\in C^{0,1}_b(\mathbb{R})$ and hence $\sigma$ inherits the same regularity as a function $\sigma:\mathbb{R}\to\mathbb{R}$. However in \eqref{infiniteSDEb} one must think of $\sigma$ and $F_\sigma$ as functions: $\mathcal{H}_w\to \mathcal{H}_w$. Define the set
\begin{equation}
\mathcal{H}_w^0:=\{h\in\mathcal{H}_w\,|\,h(\infty)=0\},
\end{equation}
then the following proposition holds (cf.~\cite{Fil01}, Chapter 5, Eq. (5.13)).
\begin{prop}\label{vol}
$\mathcal{H}_w^0$ is a closed subspace of $\mathcal{H}_w$. Moreover, $F_\sigma$ takes values in $\mathcal{H}_w$ if and only if $\sigma$ takes values in $\mathcal{H}_w^0$.
\end{prop}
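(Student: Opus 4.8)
The plan is to treat the two assertions separately and, for the equivalence, to reduce the operator statement to a pointwise claim about a single function: for arbitrary $r\in\mathcal{H}_w$ I set $g:=\sigma(r)$, $I(x):=\int_0^x g(y)\,dy$ and $G:=F_\sigma(r)$, so that $G(x)=g(x)I(x)$. Since $\sigma$ is assumed to map $\mathcal{H}_w$ into $\mathcal{H}_w$, I may take $g\in\mathcal{H}_w$; the claim ``$\sigma$ takes values in $\mathcal{H}_w^0$'' then means $g(\infty)=0$ for every such $g$, while ``$F_\sigma$ takes values in $\mathcal{H}_w$'' means $G\in\mathcal{H}_w$ for every such $g$. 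Hence it suffices to prove, for a fixed $g\in\mathcal{H}_w$, that $G\in\mathcal{H}_w$ if and only if $g(\infty)=0$, and then apply this to $g=\sigma(r)$ for each $r$.

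For the first assertion I would first note that $h\mapsto h(\infty)$ is linear, so $\mathcal{H}_w^0$ is a linear subspace. To get closedness I take $h_n\in\mathcal{H}_w^0$ with $h_n\to h$ in $\mathcal{H}_w$ and use the continuous injection \eqref{injection}: it turns norm convergence into uniform convergence, $\sup_x|h_n(x)-h(x)|\le C\|h_n-h\|_w\to0$. Letting $x\to\infty$ in this bound gives $|h(\infty)-h_n(\infty)|\le C\|h-h_n\|_w$, and since $h_n(\infty)=0$ we conclude $h(\infty)=0$, i.e.\ $h\in\mathcal{H}_w^0$.

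For the easy implication of the equivalence (namely $G\in\mathcal{H}_w\Rightarrow g(\infty)=0$) I would argue by contraposition. If $g(\infty)=L\neq0$ then $\frac1xI(x)\to L$ by a Ces\`aro-type argument, hence $|I(x)|\to\infty$ and therefore $|G(x)|=|g(x)||I(x)|\to\infty$. As every element of $\mathcal{H}_w$ has a finite limit at infinity, this forces $G\notin\mathcal{H}_w$, which is the desired contrapositive.

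The substantial implication is the converse, and here the hard part will be a pointwise decay estimate for $g$. Writing $g(x)=-\int_x^\infty g'(y)\,dy$, applying Cauchy-Schwarz, and exploiting the monotonicity of $w$ (so that $w(y)^{-1}\le w(x)^{-2/3}w(y)^{-1/3}$ for $y\ge x$) together with \eqref{def-w}, I expect to obtain
\begin{equation*}
|g(x)|\le\Big(\int_x^\infty|g'|^2w\Big)^{1/2}\Big(\int_x^\infty w^{-1}\Big)^{1/2}\le C\,\|g\|_w\,w(x)^{-1/3},\qquad C:=\|w^{-1/3}\|_{L^1}^{1/2}.
\end{equation*}
This one bound controls both ingredients of $G$: integrating it gives $\|I\|_\infty\le C\|g\|_w\|w^{-1/3}\|_{L^1}<\infty$, while $|g|^4w\le C^4\|g\|_w^4\,w^{-1/3}\in L^1$ by \eqref{def-w}. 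Since $G(0)=0$ and $G'=g'I+g^2$ a.e., I would then bound
\begin{equation*}
\int_0^\infty|G'|^2w\le2\|I\|_\infty^2\int_0^\infty|g'|^2w+2\int_0^\infty|g|^4w<\infty,
\end{equation*}
so that, recalling \eqref{def-norm} and $G\in AC(\mathbb{R}_+)$, one gets $G\in\mathcal{H}_w$. The crux of the whole argument is the exponent $1/3$: it is exactly the value for which the single hypothesis $w^{-1/3}\in L^1$ simultaneously bounds the sup-norm of the antiderivative $I$ and the weighted integral $\int|g|^4w$, which is the reason behind the normalization \eqref{def-w}.
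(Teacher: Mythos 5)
Your proof is correct, and it is essentially the argument behind the result: the paper itself gives no proof, deferring to Filipovi\'c \cite{Fil01}, Chapter 5, Eq.~(5.13), where the same route is taken --- closedness of $\mathcal{H}_w^0$ via the continuous injection \eqref{injection}, the Ces\`aro argument showing $|F_\sigma(r)(x)|\to\infty$ when $\sigma(r)(\infty)\neq0$, and, for the converse, the pointwise decay estimate $|g(x)|\le C\|g\|_w\,w(x)^{-1/3}$ for $g\in\mathcal{H}_w^0$ (obtained exactly as you do, from $g(x)=-\int_x^\infty g'(y)\,dy$, Cauchy--Schwarz, and monotonicity of $w$), which simultaneously bounds $\|I\|_\infty$ and $\int_0^\infty|g|^4w$ thanks to \eqref{def-w}. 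Your closing remark correctly identifies why the exponent $1/3$ in \eqref{def-w} is the right normalization, and the one step you leave implicit --- that $g'\in L^1(x,\infty)$ so the representation $g(x)=-\int_x^\infty g'$ is legitimate --- follows from the same Cauchy--Schwarz bound since $w^{-1}\le w^{-1/3}\in L^1(\mathbb{R}_+)$.
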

\noindent As a consequence of Proposition \ref{vol} the only volatility structures allowed in \eqref{infiniteSDEb} are those such that $\sigma(h)(x)\rightarrow 0$ when $x\rightarrow\infty$ for any $h\in\mathcal{H}_w$. From now on we will make the following Assumption.
\begin{ass}\label{ass-sigma0}
The volatility $\sigma:\mathcal{H}_w\rightarrow\mathcal{H}^0_w$ is bounded and Lipschitz; i.e
\begin{align}\label{basic}
\|\sigma(h)\|_w<C_\sigma\qquad\textrm{and}\qquad\|\sigma(f)-\sigma(h)\|_w
\le L_\sigma\|f-h\|_w\quad\:\:\textrm{for all $f,h\in\mathcal{H}_w$}
\end{align}
and for some positive constants $C_\sigma$ and $L_\sigma$.
\end{ass}
A simple extension of \cite{Fil01}, Corollary 5.1.2, gives the following
\begin{prop}\label{lipschitz}
Under Assumption \ref{ass-sigma0} there exists $L_F>0$ depending on $C_\sigma$, $L_\sigma$ and such that
\begin{equation}\label{lipF}
\|F_\sigma(f)-F_\sigma(h)\|_w\leq L_F\|f-h\|_w,\qquad\textrm{for $f,h\in\mathcal{H}_w$.}
\end{equation}
\end{prop}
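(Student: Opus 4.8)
The plan is to recognise $F_\sigma$ as the diagonal of a bounded symmetric bilinear map and to reduce the Lipschitz bound to the continuity of that map. Concretely, for $g,h\in\mathcal{H}^0_w$ I set $\Phi(g,h)(x):=g(x)\int_0^x h(y)\,dy$, so that $F_\sigma(r)=\Phi(\sigma(r),\sigma(r))$ by \eqref{driftHJM}. Since $\mathcal{H}^0_w$ is a closed linear subspace by Proposition \ref{vol} and $\sigma$ takes values in it, the difference $\sigma(f)-\sigma(h)$ again lies in $\mathcal{H}^0_w$, and bilinearity gives the telescoping identity
\[
F_\sigma(f)-F_\sigma(h)=\Phi\big(\sigma(f)-\sigma(h),\,\sigma(f)\big)+\Phi\big(\sigma(h),\,\sigma(f)-\sigma(h)\big).
\]
Once I know $\|\Phi(g,h)\|_w\le K\|g\|_w\|h\|_w$ for a constant $K$ depending only on $w$, the boundedness $\|\sigma(\cdot)\|_w<C_\sigma$ and the Lipschitz bound $\|\sigma(f)-\sigma(h)\|_w\le L_\sigma\|f-h\|_w$ from Assumption \ref{ass-sigma0} immediately yield \eqref{lipF} with $L_F=2KC_\sigma L_\sigma$.

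The core of the argument is thus the bilinear estimate, which is exactly (a mild extension of) the content of \cite{Fil01}, Corollary 5.1.2. To prove it I would observe $\Phi(g,h)(0)=0$ and compute the weak derivative $\big(\Phi(g,h)\big)'(x)=g'(x)\int_0^x h(y)\,dy+g(x)h(x)$, so that by \eqref{def-norm}
\[
\|\Phi(g,h)\|_w^2\le 2\int_{\mathbb{R}_+}|g'(x)|^2\Big|\int_0^x h(y)\,dy\Big|^2 w(x)\,dx+2\int_{\mathbb{R}_+}|g(x)|^2|h(x)|^2 w(x)\,dx.
\]
The decay condition $g,h\in\mathcal{H}^0_w$ is what makes both integrals finite: writing $g(x)=-\int_x^\infty g'(y)\,dy$ and applying Cauchy--Schwarz gives the pointwise bound $|g(x)|\le\|g\|_w\,W(x)^{1/2}$ with $W(x):=\int_x^\infty w^{-1}(y)\,dy$, and likewise for $h$.

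The decisive point, where \eqref{def-w} enters, is the weight estimate $W(x)\le C_0\,w(x)^{-2/3}$ with $C_0:=\|w^{-1/3}\|_{L^1}$, obtained by factoring $w^{-1}=w^{-1/3}w^{-2/3}$ and using that $w$ is nondecreasing (so $w(y)^{-2/3}\le w(x)^{-2/3}$ for $y\ge x$). This turns $W(x)^2 w(x)\le C_0^2\, w(x)^{-1/3}$ into an $L^1$ function, whence the second integral is bounded by $C_0^3\|g\|_w^2\|h\|_w^2$; the same estimate gives the uniform control $\big|\int_0^x h\big|\le\int_0^\infty|h|\le C_0^{3/2}\|h\|_w$, which bounds the first integral by $C_0^3\|g\|_w^2\|h\|_w^2$ after pulling $\int_{\mathbb{R}_+}|g'|^2w\le\|g\|_w^2$ outside. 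Collecting terms yields $K=2C_0^{3/2}$.

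I expect this weight manipulation, rather than the telescoping, to be the only delicate step, and it is precisely the mechanism already supplied by Filipovic's corollary, which establishes $F_\sigma:\mathcal{H}_w\to\mathcal{H}_w$ together with the underlying multiplicative bound. The ``simple extension'' is then just the passage from the quadratic expression $F_\sigma(h)=\Phi(\sigma(h),\sigma(h))$ to its bilinear polarisation, combined with Assumption \ref{ass-sigma0}; no new analytic input beyond \eqref{injection} and the $L^1$-integrability of $w^{-1/3}$ is required. \cvd
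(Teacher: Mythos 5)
Your proposal is correct and follows exactly the route the paper intends: the paper gives no proof of its own, deferring to Filipovi\'c \cite{Fil01}, Corollary 5.1.2 plus a ``simple extension'', and your argument is precisely that extension made explicit --- the polarisation $F_\sigma(f)-F_\sigma(h)=\Phi(\sigma(f)-\sigma(h),\sigma(f))+\Phi(\sigma(h),\sigma(f)-\sigma(h))$ combined with Assumption \ref{ass-sigma0}, while the bilinear bound $\|\Phi(g,h)\|_w\le K\|g\|_w\|h\|_w$ on $\mathcal{H}^0_w$, via $W(x)\le C_0\,w(x)^{-2/3}$ and $\int_0^\infty|h|\le C_0^{3/2}\|h\|_w$, is the content of the cited corollary and the weighted estimates of \cite{Fil01}, Chapter 5. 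All steps check out, including the use of $\mathcal{H}^0_w$ being a closed subspace so that $\sigma(f)-\sigma(h)$ retains the decay needed for the pointwise bounds.
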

Now the main results of \cite{Fil01}, Chapter 5, may be summarized in the following
\begin{theorem}
Let $\mathcal{H}_w$ be as in \eqref{H-s}, then {the semigroup $\left\{S(t)\,|\,t\in\mathbb{R}_+\right\}$ is strongly continuous in $\mathcal{H}_w$ with infinitesimal generator denoted by $A$, where $D(A)=\{h\in\mathcal{H}_w\,|\,h^\prime\in\mathcal{H}_w\}$ and $Ah=h^\prime$.}
Moreover, under Assumption \ref{ass-sigma0} there exists a constant {$C_F>0$} such that
\begin{equation}\label{b-F}
\|F_\sigma(h)\|_{w}\leq C_F\,C_\sigma^2,\qquad\textrm{for all $h\in\mathcal{H}_w$.}
\end{equation}
\end{theorem}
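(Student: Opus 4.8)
The plan is to treat the two assertions separately, since the first is a statement about the shift semigroup on $\mathcal{H}_w$ while the second is a quantitative bound on the HJM drift $F_\sigma$. For the semigroup claim I would first verify strong continuity by writing the norm explicitly:
\[
\|S(t)h-h\|_w^2=|h(t)-h(0)|^2+\int_{\mathbb{R}_+}\big|h'(t+x)-h'(x)\big|^2 w(x)\,dx.
\]
The first term vanishes as $t\to 0^+$ by continuity of $h$; the integral is handled by continuity of translations, approximating $h'$ by a smooth compactly supported function and using that, since $w$ is nondecreasing, $w(x)\le w(t+x)$ gives $\int_{\mathbb{R}_+}|h'(t+x)|^2 w(x)\,dx\le\int_t^\infty|h'(y)|^2 w(y)\,dy\le\|h\|_w^2$. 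Combined with $|h(t)|\le C\|h\|_w$ from the injection \eqref{injection}, this yields the uniform bound $\|S(t)h\|_w^2\le(C^2+1)\|h\|_w^2$, which together with density closes strong continuity. To identify the generator I would show directly that $t^{-1}(S(t)h-h)\to h'$ in $\mathcal{H}_w$ precisely when $h'\in\mathcal{H}_w$, so that $D(A)=\{h\in\mathcal{H}_w:h'\in\mathcal{H}_w\}$ and $Ah=h'$; this is the content of Theorem 5.1.1 in \cite{Fil01}.

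For the bound on $F_\sigma$, set $g:=\sigma(h)$, so that $\|g\|_w<C_\sigma$ and, by Proposition \ref{vol}, $g\in\mathcal{H}_w^0$, i.e.\ $g(\infty)=0$. Writing $G(x):=\int_0^x g(y)\,dy$ we have $F_\sigma(h)=g\,G$, whence $F_\sigma(h)(0)=0$ and $F_\sigma(h)'=g'\,G+g^2$, so that
\[
\|F_\sigma(h)\|_w^2=\int_{\mathbb{R}_+}\big|g'(x)G(x)+g(x)^2\big|^2 w(x)\,dx\le 2\,\|G\|_\infty^2\,\|g\|_w^2+2\int_{\mathbb{R}_+}|g(x)|^4 w(x)\,dx.
\]
The crux is therefore to control simultaneously $\|G\|_\infty$ and the quartic term, and for both I would extract pointwise decay of $g$ from the weight condition. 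Since $g(\infty)=0$, Cauchy--Schwarz gives $|g(x)|^2\le\Phi(x)\,\|g\|_w^2$ with $\Phi(x):=\int_x^\infty w^{-1}$, and the key elementary estimate is that monotonicity of $w$ together with $w^{-1/3}\in L^1$ yields $\Phi(x)\le M\,w(x)^{-2/3}$, where $M:=\|w^{-1/3}\|_{L^1(\mathbb{R}_+)}$: one simply splits $w^{-1}=w^{-1/3}w^{-2/3}$ and bounds $w^{-2/3}$ on $[x,\infty)$ by its value at $x$. This delivers the decay $|g(x)|\le M^{1/2}\|g\|_w\,w(x)^{-1/3}$.

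From this decay both terms close cleanly. For the antiderivative, $\|G\|_\infty\le\int_{\mathbb{R}_+}|g|\le M^{1/2}\|g\|_w\int_{\mathbb{R}_+}w^{-1/3}=M^{3/2}\|g\|_w$, while for the quartic term $\int_{\mathbb{R}_+}|g|^4w\le M^2\|g\|_w^4\int_{\mathbb{R}_+}w^{-4/3}w=M^2\|g\|_w^4\int_{\mathbb{R}_+}w^{-1/3}=M^3\|g\|_w^4$. Substituting into the display above gives $\|F_\sigma(h)\|_w^2\le 4M^3\|g\|_w^4\le 4M^3 C_\sigma^4$, which is the asserted bound with $C_F:=2M^{3/2}$.

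The routine part is the semigroup assertion, which is classical for left shifts. The main obstacle is the weighted estimate just described: one must squeeze exactly the right amount of decay out of membership in $\mathcal{H}_w^0$ to dominate at once the sup-norm of the antiderivative $G$ and the genuinely nonlinear quartic integral $\int|g|^4 w$, and it is precisely here that the somewhat unusual exponent in the hypothesis $w^{-1/3}\in L^1$ enters in an essential way (a weighted Poincar\'e bound of the form $\int|g|^2w\lesssim\|g\|_w^2$ is in fact \emph{false} for such weights, so the pointwise decay route is not merely a convenience). This is the simple extension of \cite{Fil01}, Corollary 5.1.2, already invoked in Proposition \ref{lipschitz}.
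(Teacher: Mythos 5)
Your proof is correct. Note, however, that the paper itself offers no proof of this theorem at all: it is stated explicitly as a summary of the main results of \cite{Fil01}, Chapter 5 (the semigroup part is Theorem 5.1.1 there, and the drift bound \eqref{b-F} is the content of the estimates leading to Eq.~(5.13) and Corollary 5.1.2). So what you have done is reconstruct the cited argument in a self-contained way, and your reconstruction in fact uses the same mechanism as Filipovi\'c's: the verification of strong continuity via monotonicity of $w$ (which gives $\|S(t)h\|_w^2\le (C^2+1)\|h\|_w^2$ and lets translation continuity be transferred from a dense class), and, for the drift, the decomposition $F_\sigma(h)=gG$ with $(gG)'=g'G+g^2$ together with the pointwise decay $|g(x)|\le \|g\|_w\big(\int_x^\infty w^{-1}\big)^{1/2}$ for $g\in\mathcal{H}^0_w$, sharpened by the key observation that $\int_x^\infty w^{-1}\le \|w^{-1/3}\|_{L^1}\,w(x)^{-2/3}$ — which is exactly how condition \eqref{def-w} enters in \cite{Fil01}. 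All the individual steps check out: the boundary term vanishes since $G(0)=0$; the use of $g(\infty)=0$ is legitimate because Assumption \ref{ass-sigma0} forces $\sigma$ to take values in $\mathcal{H}^0_w$ (Proposition \ref{vol}); and the final arithmetic $\|F_\sigma(h)\|_w^2\le 4M^3\|g\|_w^4$ with $M=\|w^{-1/3}\|_{L^1}$ yields the admissible explicit constant $C_F=2M^{3/2}$. What your version buys over the paper's bare citation is precisely this explicit constant and a transparent account of why the unusual exponent $1/3$ in \eqref{def-w} is exactly what is needed to control both $\|G\|_\infty$ and the quartic integral $\int|g|^4w$ simultaneously; your side remark that a weighted Poincar\'e inequality $\int|g|^2w\lesssim\|g\|_w^2$ fails for such weights (take $w$ large and flat on a long interval, then growing) is accurate and correctly explains why the pointwise-decay route is not optional. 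The only part left at the level of a sketch is the identification $D(A)=\{h\in\mathcal{H}_w\,|\,h'\in\mathcal{H}_w\}$, which you defer to \cite{Fil01}, Theorem 5.1.1 — acceptable here, since the paper does the same for the entire theorem.
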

The existence and uniqueness of the solution of (\ref{infiniteSDEb}) now follow.
\begin{theorem}\label{ExistInfinite}
Under Assumption \ref{ass-sigma0} there exists a unique mild solution of (\ref{infiniteSDEb}). 
\end{theorem}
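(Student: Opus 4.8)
The plan is to recast the mild formulation \eqref{infiniteSDE} as a fixed-point equation and solve it by a contraction argument in a space of adapted processes, exactly in the spirit of \cite{DaPr-Zab}, Chapter 7. First I would fix the horizon $T$ and introduce the Banach space $\mathcal{X}_T$ of $\mathcal{H}_w$-valued predictable processes $Y=(Y_t)_{t\in[0,T]}$ with norm $\|Y\|_{\mathcal{X}_T}^2:=\sup_{t\in[0,T]}\mathbb{E}\|Y_t\|_w^2$ (or its exponentially weighted variant $\sup_t e^{-\beta t}\mathbb{E}\|Y_t\|_w^2$, convenient for producing a genuine contraction). On $\mathcal{X}_T$ I would define
\[
(\mathcal{T}Y)_t:=S(t)r+\int_0^t S(t-u)F_\sigma(Y_u)\,du+\int_0^t S(t-u)\sigma(Y_u)\,dB_u,
\]
so that mild solutions of \eqref{infiniteSDEb} correspond precisely to fixed points of $\mathcal{T}$.

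A preliminary but crucial observation is that the left-shift semigroup is \emph{uniformly bounded} on $\mathcal{H}_w$: since $w$ is non-decreasing one has $\|S(t)h\|_w^2=|h(t)|^2+\int_t^\infty|h'(y)|^2 w(y-t)\,dy\le |h(t)|^2+\int_0^\infty |h'(y)|^2w(y)\,dy$, and combining this with the injection \eqref{injection} yields a constant $M\ge 1$ with $\sup_{t\ge 0}\|S(t)\|_{\mathcal{L}(\mathcal{H}_w)}\le M$. With this in hand I would check that $\mathcal{T}$ maps $\mathcal{X}_T$ into itself: the deterministic convolution is estimated by Jensen's inequality together with the uniform bound \eqref{b-F}, while the stochastic convolution is controlled by the It\^o isometry for $\mathcal{H}_w$-valued integrals against the scalar Brownian motion, namely $\mathbb{E}\|\int_0^t S(t-u)\sigma(Y_u)\,dB_u\|_w^2=\int_0^t\mathbb{E}\|S(t-u)\sigma(Y_u)\|_w^2\,du\le M^2 C_\sigma^2\,T$, using Assumption \ref{ass-sigma0}. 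Here Proposition \ref{vol} guarantees that $\sigma(Y_u)\in\mathcal{H}_w^0\subset\mathcal{H}_w$ and that $F_\sigma(Y_u)\in\mathcal{H}_w$, so both integrands are bona fide $\mathcal{H}_w$-valued predictable processes.

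The contraction estimate is the heart of the argument. For $Y,Z\in\mathcal{X}_T$ I would split $(\mathcal{T}Y)_t-(\mathcal{T}Z)_t$ into its drift and diffusion parts and bound each separately: the Lipschitz property \eqref{lipF} of $F_\sigma$ gives, after Jensen, a term of the form $T M^2 L_F^2\int_0^t\mathbb{E}\|Y_u-Z_u\|_w^2\,du$, while the It\^o isometry together with the Lipschitz property of $\sigma$ in Assumption \ref{ass-sigma0} gives $M^2 L_\sigma^2\int_0^t\mathbb{E}\|Y_u-Z_u\|_w^2\,du$. Hence $\sup_{s\le t}\mathbb{E}\|(\mathcal{T}Y)_s-(\mathcal{T}Z)_s\|_w^2\le C\int_0^t\sup_{v\le u}\mathbb{E}\|Y_v-Z_v\|_w^2\,du$ for a constant $C=C(T,M,L_F,L_\sigma)$. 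Choosing $\beta$ large enough in the weighted norm turns $\mathcal{T}$ into a strict contraction on $\mathcal{X}_T$ (equivalently, one iterates and shows that some power $\mathcal{T}^n$ is a contraction via the usual $t^n/n!$ bound). The Banach fixed-point theorem then produces a unique fixed point $r\in\mathcal{X}_T$, which is the sought mild solution, and uniqueness within $\mathcal{X}_T$ is immediate from the contraction property.

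I expect the main obstacle to be the careful treatment of the stochastic convolution: one must verify that $u\mapsto S(t-u)\sigma(Y_u)$ is a well-defined predictable $\mathcal{H}_w$-valued integrand and that the It\^o isometry applies with the uniform semigroup bound $M$ replacing the contractivity usually assumed. Because the semigroup here is merely $C_0$ (not analytic) while the diffusion coefficient is bounded and Lipschitz and the driving noise is one-dimensional, no factorization or smoothing of the convolution is needed, and the bound $M$ derived from the monotonicity of $w$ together with \eqref{injection} suffices to close all the estimates.
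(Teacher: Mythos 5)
Your proposal is correct and is essentially the paper's own argument: the paper proves this theorem by citing \cite{DaPr-Zab}, Theorem 7.4, whose proof is exactly the Banach fixed-point/contraction scheme you spell out, with the required hypotheses supplied by the boundedness and Lipschitz continuity of $F_\sigma$ (via \eqref{b-F} and Proposition \ref{lipschitz}) and of $\sigma$ (Assumption \ref{ass-sigma0}). Your extra verification that the left-shift semigroup is uniformly bounded on $\mathcal{H}_w$, using the monotonicity of $w$ and \eqref{injection}, is a correct unpacking of a fact the paper takes from \cite{Fil01} rather than a departure from its route.
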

\begin{proof}
The proof follows by standard arguments (cf.~\cite{DaPr-Zab}, Theorem 7.4) since $F_\sigma$ is bounded and Lipschitz by \eqref{b-F} and Proposition \ref{lipschitz}.
\end{proof}
\noindent The next Lemma provides standard estimates for the solution.
\begin{lemma}\label{lipSDE}
Let $r^h$ and $r^g$ be the mild solutions of (\ref{infiniteSDEb}) starting at $h$ and $g$, respectively. Then
\begin{eqnarray}
&&\mathbb{E}\Big\{\sup_{0\leq t\leq T}\|r^h_t\|^p_{w}\Big\}\leq C_{p,T}(1+\|h\|^p_{w}),\hspace{+20pt}\qquad1\leq p<\infty,\label{apr1}\\
&&\mathbb{E}\Big\{\sup_{0\leq t\leq T}\|r^h_t-r^g_t\|^p_{w}\Big\}\leq C_{p,T}\|h-g\|^p_{w},\hspace{+5pt}\qquad1\leq p<\infty,\label{lip}
\end{eqnarray}
where the positive constant $C_{p,T}$ depends only on $p$ and $T$.
\end{lemma}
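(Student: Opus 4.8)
The plan is to start from the mild formulation \eqref{infiniteSDE}, split the three contributions (free term, drift convolution, stochastic convolution), estimate each separately, and close the argument with Gronwall's lemma for \eqref{lip}. Throughout I would set $M_T:=\sup_{0\le t\le T}\|S(t)\|_{\mathcal{L}(\mathcal{H}_w)}$, which is finite because $\{S(t)\}$ is a $C_0$-semigroup, hence uniformly bounded on the compact interval $[0,T]$. Since by Jensen's inequality $\mathbb{E}\{\sup_t\|\cdot\|^p_w\}\le(\mathbb{E}\{\sup_t\|\cdot\|^q_w\})^{p/q}$ whenever $p\le q$ (the map $x\mapsto x^{p/q}$ being concave), it suffices to prove both bounds for $p>2$ and then deduce the range $1\le p\le2$ by comparison with some fixed $q>2$. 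So I would assume $p>2$ from the outset and use the convexity inequality $(a+b+c)^p\le 3^{p-1}(a^p+b^p+c^p)$ to treat the three terms independently.

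For \eqref{apr1} the two deterministic terms are immediate. The free term obeys $\|S(t)r_0\|_w\le M_T\|h\|_w$. For the drift convolution, H\"older's inequality in $u$ together with the uniform bound \eqref{b-F} on $F_\sigma$ gives $\|\int_0^t S(t-u)F_\sigma(r_u)\,du\|_w^p\le (M_T T)^p (C_F C_\sigma^2)^p$, a deterministic constant. The stochastic convolution is the one genuinely requiring care: I would invoke a maximal inequality for stochastic convolutions to obtain $\mathbb{E}\{\sup_{0\le t\le T}\|\int_0^t S(t-u)\sigma(r_u)\,dB_u\|_w^p\}\le C_{p,T}\,\mathbb{E}\int_0^T\|\sigma(r_u)\|_w^p\,du$, whence the boundedness $\|\sigma(h)\|_w<C_\sigma$ from Assumption \ref{ass-sigma0} bounds the right-hand side by $C_{p,T}C_\sigma^p$. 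Summing the three pieces yields \eqref{apr1}; note that, since $\sigma$ and $F_\sigma$ are bounded, the only dependence on the initial datum comes from the free term, which is exactly the $\|h\|_w^p$ appearing on the right.

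For \eqref{lip} I would subtract the two mild solutions, so that the free term becomes $S(t)(h-g)$ while the drift and diffusion integrands become the \emph{differences} $F_\sigma(r^h_u)-F_\sigma(r^g_u)$ and $\sigma(r^h_u)-\sigma(r^g_u)$. Setting $\phi(t):=\mathbb{E}\{\sup_{0\le s\le t}\|r^h_s-r^g_s\|_w^p\}$, the free term contributes $M_T^p\|h-g\|_w^p$; the drift convolution, via H\"older and the Lipschitz bound \eqref{lipF} of Proposition \ref{lipschitz}, contributes $(M_T T)^{p-1}L_F^p\int_0^t\phi(u)\,du$; and the stochastic convolution, via the same maximal inequality and the Lipschitz bound on $\sigma$ from Assumption \ref{ass-sigma0}, contributes $C_{p,T}L_\sigma^p\int_0^t\phi(u)\,du$, where in both convolution terms I use $\mathbb{E}\int_0^t\|r^h_u-r^g_u\|_w^p\,du\le\int_0^t\phi(u)\,du$. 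Collecting, one gets $\phi(t)\le 3^{p-1}M_T^p\|h-g\|_w^p+C\int_0^t\phi(u)\,du$ for a constant $C=C(p,T,L_F,L_\sigma)$; since $\phi$ is finite by \eqref{apr1}, Gronwall's lemma gives $\phi(T)\le 3^{p-1}M_T^p e^{CT}\|h-g\|_w^p=:C_{p,T}\|h-g\|_w^p$, which is \eqref{lip}.

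The main obstacle is the treatment of the stochastic convolution $\int_0^t S(t-u)\sigma(r_u)\,dB_u$: because the semigroup sits inside the integral it is not a martingale, so the Burkholder--Davis--Gundy inequality cannot be applied directly to its running supremum. I would resolve this by the factorization method of \cite{DaPr-Zab}: for $\alpha\in(1/p,1/2)$ (a nonempty range precisely because $p>2$) one writes $S(t-u)=\frac{\sin\pi\alpha}{\pi}\int_u^t(t-s)^{\alpha-1}(s-u)^{-\alpha}S(t-s)S(s-u)\,ds$, which recasts the convolution as a deterministic $L^q$-convolution with the kernel $(t-s)^{\alpha-1}$ (integrable in the relevant exponent precisely because $\alpha>1/p$) applied to the auxiliary process $Y_s:=\int_0^s(s-u)^{-\alpha}S(s-u)\sigma(r_u)\,dB_u$; a fixed-time BDG estimate then controls $\mathbb{E}\int_0^T\|Y_s\|_w^p\,ds$ (finite since $2\alpha<1$), and the deterministic convolution transfers this bound to the supremum. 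This is exactly the estimate already underlying the existence result in Theorem \ref{ExistInfinite} (cf.~\cite{DaPr-Zab}, Theorem 7.4), so no machinery beyond \eqref{b-F}, Proposition \ref{lipschitz}, Assumption \ref{ass-sigma0} and standard semigroup bounds is required.
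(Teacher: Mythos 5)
Your proposal is correct and takes essentially the same route as the paper: the paper's proof simply cites \cite{DaPr-Zab}, Theorem 7.4 for \eqref{apr1} and Theorem 9.1 plus ``a simple application of Jensen's inequality'' for \eqref{lip}, and your argument---splitting the mild formulation, handling the stochastic convolution by the factorization maximal inequality for $p>2$, closing with Gronwall, and recovering $1\le p\le 2$ via Jensen---is precisely the content of those cited results, Jensen step included. The only blemish is a harmless bookkeeping slip in the drift estimate for \eqref{lip}, where the constant should read $M_T^p\,T^{p-1}L_F^p$ rather than $(M_T T)^{p-1}L_F^p$.
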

\begin{proof}
The proof of \eqref{apr1} follows from \cite{DaPr-Zab}, Theorem 7.4, whereas the proof of \eqref{lip} is a consequence of \cite{DaPr-Zab}, Theorem 9.1 and a simple application of Jensen's inequality.
\end{proof}
\section{The pricing problem, some estimates, and regularity of the value function}\label{sec-pricing}

In terms of the unique solution of \eqref{infiniteSDEb}, the price at time $t$ of a Zero Coupon Bond (ZCB) with maturity $T\geq t$ may be expressed by
\begin{equation}\label{zcb}
B(t,{T};r_t(\cdot)):=\exp{\left(-\int_0^{{T}-t}{r_t(x)dx}\right)}.
\end{equation}
Recall that $r_\cdot(0)$ is the spot rate, then the stochastic discount factor $\Theta$ at time $t$ is
\begin{equation}\label{disc2}
\Theta(t;r_\cdot(0)):=\exp{\left(-\int_0^{t}{r_s(0)ds}\right)}.
\end{equation}

If the forward rate curve at time $t\in[0,T]$ is described by a function $h\in\mathcal{H}_w$, then the gain function at time $t$ of the American Put option with strike price $K<1$ and maturity $T$ is
\begin{eqnarray}\label{psi}
\Psi(t,h(\cdot)):=\left[K-B(t,T;h)\right]^+=\left[K-e^{-\int_0^{{T}-t}{h(x)dx}}\right]^+.
\end{eqnarray}
Let $r^{t,h}_s$, $s\geq t$ denote the value at time $s$ of the solution of \eqref{infiniteSDEb} with starting time $t$ and initial data $h$. The value function $V$ of the option evaluated at time $t\leq T$ may be written under the risk-neutral probability measure $\mathbb{P}$ as
\begin{equation}\label{priceHJM2}
V(t,h):=\sup_{t\leq\tau\leq T}\mathbb{E}\left\{e^{-\int_t^\tau{r^{t,h}_u(0)du}}\left[K-e^{-\int_0^{{T}-\tau}{r^{t,h}_\tau(x)dx}}\right]^+
\right\},
\end{equation}
where the supremum is taken over all stopping times with respect to the filtration $(\mathcal{F}^t_s)_{s\ge t}:=\sigma\big\{B_s-B_t\,;\,s\ge t\big\}$ generated by the increments of the Brownian motion driving \eqref{SDE-HJM}. Here the Markovian structure of the process $r^{t,h}$ implies that taking expectations conditioned to the Brownian filtration $\mathcal{F}_t$ at time $t$ is equivalent to unconditional expectations since $r^{t,h}_t=h$ is deterministic (cf.~for instance \cite{DaPr-Zab}, Chapter 9).

In what follows it will be sometimes convenient to write \eqref{priceHJM2} in terms of \eqref{disc2} and \eqref{psi} as
\begin{equation}\label{priceHJM3}
V(t,h)=\sup_{t\leq\tau\leq T}\mathbb{E}\left\{D(t,\tau;r^{t,h}_\cdot(0))\,\Psi(\tau,r^{t,h}_\tau(\cdot))
\right\},
\end{equation}
where $D(t,\tau;r_\cdot(0)):=\Theta(\tau;r_\cdot(0))/\Theta(t;r_\cdot(0))$.
Observe that 
the option pricing problem is meaningful only when the maturity of the option is lesser or equal than the maturity of the ZCB. In this work the two maturities are assumed to be equal for sake of simplicity and with no loss of generality.

Notice that both $\Psi$ and $V$ map $[0,T]\times\mathcal{H}_w$ into $\mathbb{R}$. Important regularity properties of $\Psi$ are described in the following
\begin{prop}\label{lip0}
The non-negative function $\Psi$ satisfies
\begin{equation}\label{psi1}
\sup_{(t,h)\in[0,T]\times\mathcal{H}_w}\Psi(t,h)\leq K<1.
\end{equation}
Moreover, there exist constants $C_1,C_2>0$ such that
\begin{eqnarray}
&&\left|\Psi(t,h)-\Psi(t,g)\right|\leq C_1\|h-g\|_w,\qquad\hspace{+13pt} g,h\in\mathcal{H}_w,\:t\in[0,T],\label{lip-psi1}\\
\nonumber\\
&&\left|\Psi(s,h)-\Psi(t,h)\right|\leq C_2\|h\|_w\,|t-s|,\qquad h\in\mathcal{H}_w,\:s,t\in[0,T],\:s\leq t.\label{lip-psi2}
\end{eqnarray}
\end{prop}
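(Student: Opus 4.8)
The plan is to reduce all three estimates to one-variable properties of the scalar function $\psi:\mathbb{R}\to\mathbb{R}$ defined by $\psi(z):=[K-e^{-z}]^+$, together with the continuous injection $\mathcal{H}_w\hookrightarrow L^\infty(\mathbb{R}_+)$ from \eqref{injection}. Indeed, setting $I_t(h):=\int_0^{T-t}h(x)\,dx$ one has $\Psi(t,h)=\psi(I_t(h))$, so every bound on $\Psi$ will come from bounding $\psi$ and then controlling $I_t(h)$.

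For the uniform bound \eqref{psi1} I would simply note that $B(t,T;h)=e^{-I_t(h)}\ge 0$, whence $K-B(t,T;h)\le K$ and therefore $\Psi(t,h)=[K-B(t,T;h)]^+\le K$; since $K<1$ by assumption, this is exactly \eqref{psi1}.

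The crucial observation for the two Lipschitz estimates is that $\psi$ is globally Lipschitz on $\mathbb{R}$ with constant at most $K$ (hence $<1$), \emph{despite} the exponential being unbounded as $z\to-\infty$. To see this I would check that $\psi\equiv 0$ on $\{z\le\ln(1/K)\}$, while on $\{z>\ln(1/K)\}$ we have $\psi'(z)=e^{-z}$ with $0<e^{-z}<K$; thus $0\le\psi'\le K$ wherever $\psi$ is differentiable, and the kink at $z=\ln(1/K)$ causes no difficulty for Lipschitz continuity. This is precisely where the positive part does the work: it confines the effective range of the argument to the region where $B<K<1$, so that the possible explosion of $e^{-z}$ for very negative integrated forward rates never enters the Lipschitz constant. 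Granting this, \eqref{lip-psi1} follows by writing $|\Psi(t,h)-\Psi(t,g)|=|\psi(I_t(h))-\psi(I_t(g))|\le K\,|I_t(h)-I_t(g)|$ and estimating $|I_t(h)-I_t(g)|\le\int_0^{T-t}|h-g|\,dx\le T\,\sup_{x}|h(x)-g(x)|\le TC\,\|h-g\|_w$ by \eqref{injection}, so one may take $C_1=KTC$. For \eqref{lip-psi2} with $s\le t$ I would argue identically, using $I_s(h)-I_t(h)=\int_{T-t}^{T-s}h(x)\,dx$, to get $|\Psi(s,h)-\Psi(t,h)|\le K\,|I_s(h)-I_t(h)|\le K\,|t-s|\,\sup_x|h(x)|\le KC\,\|h\|_w\,|t-s|$, giving $C_2=KC$.

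I expect no genuine obstacle here; the only step requiring care is the Lipschitz-constant computation for $\psi$, namely recognizing that the positive part keeps the argument in the region where $e^{-z}\le K$, which is exactly what prevents the unboundedness of the exponential (arising from possibly negative forward rates) from spoiling a \emph{uniform} constant. The remaining manipulations are routine applications of the triangle inequality and of the Sobolev-type injection \eqref{injection}.
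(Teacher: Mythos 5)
Your proposal is correct and follows essentially the same route as the paper: the paper's proof likewise reduces to the scalar function $\zeta(x):=[K-e^x]^+$ (your $\psi$ up to a sign change of the argument), observes that the positive part forces $\|\zeta^\prime\|_{L^\infty(\mathbb{R})}\leq K<1$, and then estimates the difference of the integrals $\int_0^{T-t}$ via the injection \eqref{injection}, exactly as you do. The only cosmetic differences are that you argue the Lipschitz bound for $\psi$ by piecewise differentiation at the kink, where the paper invokes the weak derivative and a standard Sobolev fact, and that you retain the explicit factor $K$ in the constants $C_1$, $C_2$.
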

\begin{proof}
Define the function $\zeta(x):=[K-e^x]^+$ for $x\in\mathbb{R}$. It is not hard to check that $\|\zeta^\prime\|_{L^\infty(\mathbb{R})}\leq K<1$ with $\zeta^\prime$ the weak derivative of $\zeta$.
It follows that (cf.~for instance \cite{Brezis}, Chapter 8, Proposition 8.4)
\begin{equation}\label{lip1}
|\zeta(x)-\zeta(y)|\leq\|\zeta^\prime\|_{L^\infty(\mathbb{R})}\,|x-y|\leq|x-y|.
\end{equation}
Now define $X:=-\int_0^{{T}-t}{h(x)dx}$ and $Y:=-\int_0^{{T}-t}{g(x)dx}$, then \eqref{psi} and \eqref{lip1} give
\begin{align*}
\big|\Psi(t,h)-\Psi(t,g)\big|&=\big|\big[K-e^X\big]^+-\big[K-e^Y\big]^+\big|\leq |X-Y|\\
&\leq \int_0^{{T}-t}{|h(x)-g(x)|dx}\leq T\sup_{x\in\mathbb{R}_+}|h(x)-g(x)|\leq C\,{T}\,\|h-g\|_w,
\end{align*}
where the last inequality uses the continuous injection \eqref{injection}.

To prove \eqref{lip-psi2} take $s\leq t$ and proceed as above to obtain
\begin{eqnarray}
|\Psi(t,h)-\Psi(s,h)|\leq\int_{{T}-t}^{{T}-s}{|h(x)|dx}\leq\sup_{x\in\mathbb{R}_+}|h(x)|\,|t-s|\leq C\|h\|_w\,|t-s|.
\end{eqnarray}
\end{proof}

The following lemma provides a bound needed to obtain the regularity of the value function $V$. Such bound will be largely used in the paper.
\begin{lemma}\label{bound-discount}
There exist positive constants $\beta$ and $\gamma$ depending only on $C$ of \eqref{injection}, $C_\sigma$ of \eqref{basic}, $C_F$ of \eqref{b-F} and $T$ such that for every $0\le p \le 2$ and $h\in\mathcal{H}_w$ one has
\begin{align}\label{bound-r0}
\mathbb{E}\Big\{\sup_{t^\prime\le v\le T}e^{-p\int_{t^\prime}^v{r^{t^\prime,h}_t(0)dt}}\Big\}\le\gamma e^{\beta\|h\|_w}\qquad\text{for $t^\prime\in[0,T]$.}
\end{align}
\end{lemma}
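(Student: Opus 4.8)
The plan is to strip away the supremum over $v$ and the dependence on $p$ by elementary bounds, reduce the estimate to an exponential moment of the mild solution, and finally control the stochastic convolution, which is the only genuinely random ingredient.

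First I would use the continuous injection \eqref{injection}, $|r^{t',h}_t(0)|\le C\|r^{t',h}_t\|_w$, together with the fact that the integrand in the exponent is non-negative, so that enlarging the domain of integration from $[t',v]$ to $[t',T]$ only increases the value:
\[
-p\int_{t'}^v r^{t',h}_t(0)\,dt\le p\int_{t'}^v |r^{t',h}_t(0)|\,dt\le pC\int_{t'}^T\|r^{t',h}_t\|_w\,dt .
\]
Since this last integral is non-negative and $p\le 2$, the supremum over $v$ disappears and $\sup_{t'\le v\le T}e^{-p\int_{t'}^v r^{t',h}_t(0)\,dt}\le e^{2CT\sup_{t'\le t\le T}\|r^{t',h}_t\|_w}$. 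Hence it suffices to prove $\mathbb{E}\{e^{a\sup_{t'\le t\le T}\|r^{t',h}_t\|_w}\}\le\gamma e^{\beta\|h\|_w}$ with $a:=2CT$, and the resulting $\beta,\gamma$ will automatically be independent of $p$.

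By time-homogeneity of $A$, $F_\sigma$, $\sigma$ and the definition of $(\mathcal{F}^{t'}_s)$, the law of $\sup_{t'\le t\le T}\|r^{t',h}_t\|_w$ equals that of $\sup_{0\le s\le T-t'}\|r^{0,h}_s\|_w\le\sup_{0\le s\le T}\|r^{0,h}_s\|_w$, so without loss of generality I take $t'=0$ and write $r:=r^{0,h}$. From the mild formulation \eqref{infiniteSDE}, using $\sup_{0\le t\le T}\|S(t)\|_{op}=:M_T<\infty$ (a $C_0$-semigroup is bounded on $[0,T]$) and $\|F_\sigma(\cdot)\|_w\le C_FC_\sigma^2$ from \eqref{b-F}, I obtain the splitting
\[
\sup_{0\le t\le T}\|r_t\|_w\le M_T\|h\|_w+M_TC_FC_\sigma^2T+\sup_{0\le t\le T}\|W^A_t\|_w,\qquad W^A_t:=\int_0^tS(t-u)\sigma(r_u)\,dB_u .
\]
Therefore $\mathbb{E}\{e^{a\sup_t\|r_t\|_w}\}\le e^{aM_T(\|h\|_w+C_FC_\sigma^2T)}\,\mathbb{E}\{e^{a\sup_t\|W^A_t\|_w}\}$, and everything reduces to showing $\gamma_0:=\mathbb{E}\{e^{a\sup_{0\le t\le T}\|W^A_t\|_w}\}<\infty$, with $\gamma_0$ depending only on $a,T,C_\sigma,M_T$; one then sets $\beta:=aM_T$ and $\gamma:=\gamma_0\,e^{aM_TC_FC_\sigma^2T}$.

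The remaining and decisive point is the exponential integrability of the stochastic convolution, which hinges on the boundedness $\|\sigma(\cdot)\|_w\le C_\sigma$ of Assumption \ref{ass-sigma0}; I would obtain it from sub-Gaussian moment growth via the factorization method. Fixing $\alpha\in(0,1/2)$ and setting $Y_s:=\int_0^s(s-u)^{-\alpha}S(s-u)\sigma(r_u)\,dB_u$, one has $W^A_t=\frac{\sin\pi\alpha}{\pi}\int_0^t(t-s)^{\alpha-1}S(t-s)Y_s\,ds$. The Hilbert-space Burkholder--Davis--Gundy inequality and $\|S(s-u)\sigma(r_u)\|_w\le M_TC_\sigma$ give
\[
\mathbb{E}\|Y_s\|_w^{2m}\le c_m\Big(M_T^2C_\sigma^2\!\int_0^s(s-u)^{-2\alpha}du\Big)^{m}\le c_m\Big(\tfrac{M_T^2C_\sigma^2T^{1-2\alpha}}{1-2\alpha}\Big)^{m},
\]
with the BDG constant $c_m$ of order $(Cm)^m$. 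For $m$ large enough that $(1-\alpha)\tfrac{2m}{2m-1}<1$, Hölder's inequality applied to the convolution yields $\sup_{t\le T}\|W^A_t\|_w\le C_\alpha\big(\int_0^T\|Y_s\|_w^{2m}ds\big)^{1/2m}$, whence $\mathbb{E}\{(\sup_t\|W^A_t\|_w)^{2m}\}\le (c'm)^m$ uniformly in $h$. This is Gaussian-order growth, so the series $\sum_n\frac{a^n}{n!}\mathbb{E}\{(\sup_t\|W^A_t\|_w)^n\}$ converges and $\gamma_0<\infty$. The hard part is precisely this exponential integrability: the injection reduction and the splitting are routine, whereas making the moment bounds grow no faster than those of a Gaussian is where the boundedness of both $\sigma$ and the semigroup on $[0,T]$ is essential. (Alternatively, since for each fixed $t$ the map $s\mapsto\int_0^sS(t-u)\sigma(r_u)\,dB_u$ is a genuine $\mathcal{H}_w$-valued martingale with bracket bounded by $M_T^2C_\sigma^2T$, one could reduce instead via Jensen's inequality to the marginal $\mathbb{E}\{e^{a\|r_t\|_w}\}$ and invoke a Bernstein-type exponential inequality for Hilbert-valued martingales.)
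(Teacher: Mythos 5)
Your proposal is correct, but it follows a genuinely different route from the paper's. The paper never estimates the $\mathcal{H}_w$-norm of the stochastic convolution: it exploits the fact that only the scalar spot rate $r_t(0)$ enters the discount factor. Evaluating the mild formulation \eqref{infiniteSDE} at $x=0$, the initial-datum and drift terms are bounded pathwise via the injection \eqref{injection} and \eqref{b-F} (so no semigroup operator norm is needed), the supremum over $v$ is removed by Jensen's inequality applied to the time-average in the exponent, which reduces matters to the marginal quantities $\mathbb{E}\{e^{-T M_t}\}$ for the real-valued martingale $M_t=\int_0^t S(t-s)\sigma(r_s)(0)\,dB_s$, and these are bounded by $e^{\frac{1}{2}T^3C^2C^2_\sigma}$ via It\^o's formula, localization and Gronwall --- in effect an exponential-martingale (Novikov-type) estimate, entirely elementary and one-dimensional. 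You instead prove the stronger statement that $\sup_{0\le t\le T}\|r_t\|_w$ has exponential moments uniformly in $h$ (up to the factor $e^{\beta\|h\|_w}$), which requires sub-Gaussian control of $\sup_t\|W^A_t\|_w$ via factorization and Hilbert-space Burkholder--Davis--Gundy with constants $c_m\sim(Cm)^m$; your parenthetical alternative (Jensen to marginals plus a Bernstein-type bound for Hilbert-valued martingales) is in fact the closest analogue of the paper's argument. What your route buys is a reusable and stronger estimate (exponential integrability of the running norm, not just of the discount factor); what it costs is heavier machinery and careful constant-tracking in BDG, where the paper's one-sided scalar trick sidesteps all of this. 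One small point you should close: the lemma asserts that $\beta,\gamma$ depend only on $C$, $C_\sigma$, $C_F$ and $T$, while your constants also involve $M_T:=\sup_{0\le t\le T}\|S(t)\|_{\mathcal{L}(\mathcal{H}_w)}$. This is harmless here because for the left-shift semigroup on $\mathcal{H}_w$ one has, using monotonicity of $w$,
\begin{equation*}
\|S(t)h\|^2_w=|h(t)|^2+\int_0^\infty|h^\prime(t+x)|^2w(x)\,dx\le C^2\|h\|^2_w+\int_t^\infty|h^\prime(y)|^2w(y)\,dy\le(1+C^2)\|h\|^2_w,
\end{equation*}
so $M_T\le\sqrt{1+C^2}$ and the stated dependence is preserved; but you should say so, since for a general $C_0$-semigroup $M_T$ would be an extra dependence. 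The remaining details you gloss (stochastic Fubini in the factorization identity, interpolating low moments from high ones to sum the exponential series) are routine and unproblematic given the boundedness of $\sigma$ in Assumption \ref{ass-sigma0}.
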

\begin{proof}
For simplicity we prove \eqref{bound-r0} for $t^\prime=0$ and $p=1$ but the arguments of the proof apply to the general case due to time-homogeneity of \eqref{infiniteSDE} and the multiplicative character of $p$ in the exponential of the left-hand side of \eqref{bound-r0}. We fix $h\in\mathcal{H}_w$ and simplify notation by setting $r_t:=r^h_t$, $t\ge0$ and by denoting $B(h):=\exp\big\{C\,T\,\|h\|_w+C\,C_FC^2_\sigma T^2\big\}$. From \eqref{infiniteSDE} one easily obtains
\begin{align}\label{bound-dis0a}
\mathbb{E}\Big\{\sup_{0\le v\le T}e^{-\int_0^v{r_t(0)dt}}\Big\}\le&\: B(h)\mathbb{E}\Big\{\sup_{0\le v\le T}e^{-\int^v_0{\big(\int^t_0{S(t-s)\sigma(r_s)(0)dB_s}\big)}dt}\Big\}\nonumber\\
=&\:B(h)\mathbb{E}\Big\{\sup_{0\le v\le T}e^{-\int^T_0{\big(I_{\{t<v\}}\int^t_0{S(t-s)\sigma(r_s)(0)dB_s}\big)}dt}\Big\}.
\end{align}
Then \eqref{bound-dis0a}, convexity of the exponential function and Jensen's inequality give
\begin{align}\label{bound-dis0}
\mathbb{E}\Big\{\sup_{0\le v\le T}e^{-\int_0^v{r_t(0)dt}}\Big\}\le&\: B(h)\mathbb{E}\Big\{\sup_{0\le v\le T}\frac{1}{T}\int^T_0{e^{-T\,I_{\{t<v\}}\int^t_0{S(t-s)\sigma(r_s)(0)dB_s}}dt}\Big\}\nonumber\\
=&\: B(h)\mathbb{E}\Big\{\sup_{0\le v\le T}\frac{1}{T}\left(\int^v_0{e^{-T\,\int^t_0{S(t-s)\sigma(r_s)(0)dB_s}}dt}+\int^T_v\,1\,dt\right)\Big\}\nonumber\\
\le&\: B(h)\Big(\frac{1}{T}\int^T_0{\mathbb{E}\Big\{e^{-T\int^t_0{S(t-s)\sigma(r_s)(0)dB_s}}\Big\}dt}+1\Big).
\end{align}
In order to find an upper bound for the expectation in the last line of \eqref{bound-dis0}, for every $t\in[0,T]$ fixed we define the square-integrable, continuous martingale process
\begin{align*}
M^t_v:=\int^{v\wedge t}_0{S(t-s)\sigma(r_s)(0)dB_s},\qquad v\ge0;
\end{align*}
then, for arbitrary $R>0$, we define the stopping time
\begin{align*}
\tau^t_R:=\inf\big\{v\ge0\,:\,e^{-T\,M^t_v}> R\big\}\wedge T.
\end{align*}
Clearly $\tau^t_R\to T$ $\mathbb{P}$-a.s.~as $R\to\infty$. An application of It\^o's formula gives, for $v\in[0,T]$,
\begin{align}\label{bound-dis3}
e^{-T\,M^t_{v\wedge\tau^{t}_R}}=1-T\int^{v\wedge\tau^{t}_R}_0{e^{-T\,M^t_u}dM^t_u}+\frac{1}{2}T^2
\int^{v\wedge\tau^{t}_R}_0{e^{-T\,M^t_u}d\langle M^t\rangle_u}\,.
\end{align}
In this simple case the quadratic variation $\langle M^t\rangle$ of $M^t$ reads
\begin{align*}
\langle M^t\rangle_v=\int^{v\wedge t}_0{\big(S(t-s)\sigma(r_s)(0)\big)^2ds}.
\end{align*}
The stochastic integral in \eqref{bound-dis3} is a real martingale since the integrand is bounded by $R$, hence by taking expectations one finds
\begin{align*}
\mathbb{E}\Big\{e^{-T\,M^t_{v\wedge\tau^{t}_R}}\Big\}=&1+\frac{1}{2}T^2
\mathbb{E}\Big\{\int^{v\wedge\tau^{t}_R}_0{e^{-T\,M^t_u}d\langle M^t\rangle_u}\Big\}\le1+\frac{1}{2}T^2C^2C^2_\sigma
\mathbb{E}\Big\{\int^{v\wedge\tau^{t}_R}_0{e^{-T\,M^t_u}du}\Big\}
\end{align*}
with $C_\sigma$ as in \eqref{basic} and $C$ as in \eqref{injection}. Then the limit as $R\uparrow\infty$ gives
\begin{align*}
\mathbb{E}\Big\{e^{-T\,M^t_{v}}\Big\}=&\mathbb{E}\Big\{\liminf_{R\to\infty} e^{-T\,M^t_{v\wedge\tau^{t}_R}}\Big\}\le\liminf_{R\to\infty}\mathbb{E}\Big\{e^{-T\,M^t_{v\wedge\tau^{t}_R}}\Big\}
\\
\le&1+\frac{1}{2}T^2C^2C^2_\sigma
\lim_{R\to\infty}\mathbb{E}\Big\{\int^{v\wedge\tau^{t}_R}_0{e^{-T\,M^t_u}du}\Big\}=1+\frac{1}{2}T^2C^2C^2_\sigma
\mathbb{E}\Big\{\int^{v}_0{e^{-T\,M^t_u}du}\Big\}\nonumber
\end{align*}
by Fatou's lemma and monotone convergence theorem. Finally, an application of Gronwall's lemma provides
\begin{align}\label{bound-dis5}
\mathbb{E}\Big\{e^{-T\int^t_0{S(t-s)\sigma(r_s)(0)dB_s}}\Big\}=\mathbb{E}\Big\{e^{-T\,M^t_{t}}\Big\}\le e^{\frac{1}{2}T^3C^2C^2_\sigma},
\end{align}
when $v=t$. Hence \eqref{bound-dis0} and \eqref{bound-dis5} imply \eqref{bound-r0} for suitable $\gamma$ and $\beta$, which however may be taken to be independent of $p$ since $p$ is bounded.
\end{proof}

We now find some regularity properties of the value function $V$ by employing purely probabilistic arguments. The Lipschitz continuity with respect to time shown below is a remarkable and rather unusual feature in optimal stopping which follows from the peculiar structure of our problem. 
\begin{theorem}\label{reg-v-fun}
The value function $V$ is locally bounded; that is,
\begin{equation}\label{bdd-V}
0\le V(t,h)\leq K\,\gamma\, e^{\beta\|h\|_w}\qquad h\in\mathcal{H}_w,\,t\in[0,T],
\end{equation}
with $\gamma$ and $\beta$ as in Lemma \ref{bound-discount}. Moreover, there exist $L_V>0$ and $L'_V>0$ such that
\begin{equation}\label{lip-V}
|V(t,h)-V(t,g)|\leq L_V\Big(e^{\frac{\beta}{2}\|h\|_w}+e^{\frac{\beta}{2}\|g\|_w}\Big)\|h-g\|_{w},\qquad h,g\in\mathcal{H}_w,\,t\in[0,T],
\end{equation}
and
\begin{equation}\label{lip-Vt}
|V(t_2,h)-V(t_1,h)|\leq L'_V\, (1+\|h\|_w)e^{\frac{\beta}{2}\|h\|_w}\,(t_2-t_1),\qquad h\in\mathcal{H}_w,\,0\le t_1<t_2\le T.
\end{equation}
\end{theorem}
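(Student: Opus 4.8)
The plan is to prove the three estimates separately, each time reducing the statement about $V$ to the already-established estimates for the payoff $\Psi$ (Proposition~\ref{lip0}) and the discount-factor bound (Lemma~\ref{bound-discount}), together with the SDE estimates of Lemma~\ref{lipSDE}. The guiding principle throughout is the elementary fact that for any two families of random variables indexed by stopping times, $|\sup_\tau \mathbb{E}\{X_\tau\}-\sup_\tau \mathbb{E}\{Y_\tau\}|\le \sup_\tau \mathbb{E}\{|X_\tau-Y_\tau|\}$; this lets me bound the difference of two suprema by a single expectation and avoid dealing with optimal stopping times explicitly.

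For the boundedness bound \eqref{bdd-V}, I would start from \eqref{priceHJM3}, use $0\le \Psi\le K$ from \eqref{psi1} to pull the payoff out as a factor $K$, bound the discount factor $D(t,\tau;r^{t,h}_\cdot(0))=\exp\{-\int_t^\tau r^{t,h}_u(0)\,du\}$ by its supremum over $v\in[t,T]$, and then apply Lemma~\ref{bound-discount} with $p=1$ to obtain $K\gamma e^{\beta\|h\|_w}$; non-negativity is immediate since $\Psi\ge0$.

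For the Lipschitz estimate in $h$, \eqref{lip-V}, I would write the difference $V(t,h)-V(t,g)$ using \eqref{priceHJM3} and bound it by
\[
\sup_{t\le\tau\le T}\mathbb{E}\Big\{\big|D(t,\tau;r^{t,h}_\cdot(0))\Psi(\tau,r^{t,h}_\tau)-D(t,\tau;r^{t,g}_\cdot(0))\Psi(\tau,r^{t,g}_\tau)\big|\Big\}.
\]
I would split this by adding and subtracting a mixed term, obtaining one piece controlled by the Lipschitz continuity of $\Psi$ in its spatial argument \eqref{lip-psi1} (giving $C_1\|r^{t,h}_\tau-r^{t,g}_\tau\|_w$ times a bounded discount factor) and a second piece controlled by $K$ times the difference of the two discount factors. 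The discount-factor difference is handled via the inequality $|e^{-a}-e^{-b}|\le (e^{-a}\vee e^{-b})|a-b|$, with $|a-b|\le \int_t^\tau|r^{t,h}_u(0)-r^{t,g}_u(0)|\,du$ estimated through the injection \eqref{injection} and Lemma~\ref{lipSDE}. In both pieces I would apply the Cauchy--Schwarz inequality to decouple the exponential discount factor from the $\|r^{t,h}-r^{t,g}\|_w$ factor, then invoke Lemma~\ref{bound-discount} (with $p=2$ to absorb the squaring) to produce the $e^{\frac{\beta}{2}\|h\|_w}+e^{\frac{\beta}{2}\|g\|_w}$ prefactor and Lemma~\ref{lipSDE} to produce the $\|h-g\|_w$ factor. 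The main obstacle I anticipate is controlling the difference of the two stochastic discount factors: unlike the payoff, which is globally bounded, the discount factor is only exponentially integrable, so the splitting must be arranged so that every exponential term is paired against an $L^2$-bounded difference and every spatial difference is measured in a norm that Lemma~\ref{lipSDE} can estimate — the bookkeeping of which Cauchy--Schwarz pairing to use, and verifying that the resulting exponents stay within the range $0\le p\le 2$ permitted by Lemma~\ref{bound-discount}, is where the argument is delicate.

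For the time-Lipschitz estimate \eqref{lip-Vt}, the standard optimal-stopping strategy is to compare the two problems over a common time horizon. Given $0\le t_1<t_2\le T$, I would take a near-optimal stopping time $\tau$ for $V(t_1,h)$ and construct an admissible competitor for $V(t_2,h)$ by $\tau\vee t_2$ (truncating from below at the later initial time), and symmetrically bound $V(t_2,h)-V(t_1,h)$. The difference then decomposes into a contribution from the shift in initial time affecting the solution $r^{t,h}$ and the discount factor over $[t_1,t_2]$, and a contribution from the time-regularity of the payoff \eqref{lip-psi2}; both are of order $(t_2-t_1)$. Here the time-Lipschitz bound \eqref{lip-psi2} on $\Psi$ supplies the $(1+\|h\|_w)$-type growth, while Lemma~\ref{bound-discount} again furnishes the $e^{\frac{\beta}{2}\|h\|_w}$ factor after a Cauchy--Schwarz step, yielding the stated constant $L'_V(1+\|h\|_w)e^{\frac{\beta}{2}\|h\|_w}$.
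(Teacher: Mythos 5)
Your treatment of \eqref{bdd-V} and \eqref{lip-V} is essentially the paper's own proof: the same mixed-term splitting, the inequality $|e^{-a}-e^{-b}|\le(e^{-a}+e^{-b})|a-b|$ for the discount factors, the injection \eqref{injection}, and the Cauchy--Schwarz pairing of Lemma \ref{bound-discount} (at $p=2$) with Lemma \ref{lipSDE}; this part is correct as planned.

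The time-Lipschitz estimate \eqref{lip-Vt}, however, contains a genuine gap. Your decomposition requires comparing the two diffusions $r^{t_1,h}$ and $r^{t_2,h}$ (and the discount factors built from them) after the shift, and you assert this contribution is of order $t_2-t_1$. Pathwise this is false: from the mild formulation, $r^{t_1,h}_{t_2}-h=\big(S(t_2-t_1)h-h\big)+\int_{t_1}^{t_2}S(t_2-u)F_\sigma(r^{t_1,h}_u)\,du+\int_{t_1}^{t_2}S(t_2-u)\sigma(r^{t_1,h}_u)\,dB_u$, where the stochastic convolution has $L^2$-norm of order $\sqrt{t_2-t_1}$, and the semigroup term $\|S(t_2-t_1)h-h\|_w$ admits no rate at all for general $h\in\mathcal{H}_w\setminus D(A)$ (a rate $O(t_2-t_1)\|Ah\|_w$ would require $h\in D(A)$). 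Feeding this difference through Lemma \ref{lipSDE} and your Cauchy--Schwarz scheme therefore yields at best H\"older-$\frac{1}{2}$ continuity in time, not \eqref{lip-Vt} --- which the paper explicitly flags as the unusual feature of this problem. There is also a measurability issue: $\tau\vee t_2$ built from a near-optimal $\tau$ for $V(t_1,h)$ is a stopping time of $(\mathcal{F}^{t_1}_s)_{s\ge t_1}$ and may depend on the Brownian increments over $[t_1,t_2]$, so it need not be admissible in \eqref{priceHJM2} for $V(t_2,h)$, where the supremum runs over $(\mathcal{F}^{t_2}_s)_{s\ge t_2}$-stopping times.

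The paper avoids any comparison of two processes: by time-homogeneity and the change of variable $\sigma=\tau-t$ (justified via \cite{Kall}, Lemma 21.15), both $V(t_1,h)$ and $V(t_2,h)$ are rewritten over the \emph{same} process $r^h$ started at time zero, as in \eqref{timeLip00}, so that $t$ enters only through the horizon $T-t$ of admissible stopping times and the deterministic upper limit $T-t-\sigma$ in the payoff exponent. The quantities to estimate are then $\big|\int_{T-t_2-\tau}^{T-t_1-\tau}r^h_\tau(x)\,dx\big|\le C(t_2-t_1)\|r^h_\tau\|_w$ and the effect of replacing $\tau$ by $\tau\wedge(T-t_2)$, each genuinely $O(t_2-t_1)$; H\"older's inequality with \eqref{apr1} and \eqref{bound-r0} then yields \eqref{lip-Vt}. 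This reduction to a common process is the missing idea needed to rescue your third step.
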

\begin{proof}
The proof is quite long and it is provided in Appendix \ref{app-1}.
\end{proof}
\begin{coroll}\label{continuity-V}
The map $(t,h)\mapsto V(t,h)$ is continuous on $[0,T]\times\mathcal{H}_w$.
\end{coroll}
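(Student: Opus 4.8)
The plan is to deduce continuity directly from the two local Lipschitz estimates \eqref{lip-V} and \eqref{lip-Vt} established in Theorem \ref{reg-v-fun}, splitting the increment of $V$ into a spatial part and a temporal part by the triangle inequality. Fix an arbitrary point $(t_0,h_0)\in[0,T]\times\mathcal{H}_w$ and let $(t_n,h_n)\to(t_0,h_0)$ in $[0,T]\times\mathcal{H}_w$. First I would write
\begin{equation*}
|V(t_n,h_n)-V(t_0,h_0)|\le|V(t_n,h_n)-V(t_n,h_0)|+|V(t_n,h_0)-V(t_0,h_0)|,
\end{equation*}
so that the first summand varies only in the Hilbert-space variable and the second only in time.

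For the spatial term I would apply \eqref{lip-V} to obtain
\begin{equation*}
|V(t_n,h_n)-V(t_n,h_0)|\le L_V\Big(e^{\frac{\beta}{2}\|h_n\|_w}+e^{\frac{\beta}{2}\|h_0\|_w}\Big)\|h_n-h_0\|_w.
\end{equation*}
Since $h_n\to h_0$ in $\mathcal{H}_w$, the norms $\|h_n\|_w$ converge to $\|h_0\|_w$ and are therefore uniformly bounded for large $n$; hence the exponential prefactor stays bounded while $\|h_n-h_0\|_w\to0$, so this term vanishes in the limit. For the temporal term I would use \eqref{lip-Vt} (valid up to the sign of $t_n-t_0$ by taking absolute values), getting
\begin{equation*}
|V(t_n,h_0)-V(t_0,h_0)|\le L'_V\,(1+\|h_0\|_w)\,e^{\frac{\beta}{2}\|h_0\|_w}\,|t_n-t_0|,
\end{equation*}
whose right-hand side tends to $0$ because the prefactor depends only on the fixed point $h_0$ and $t_n\to t_0$. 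Combining the two bounds yields $V(t_n,h_n)\to V(t_0,h_0)$, which proves continuity at $(t_0,h_0)$; since the point was arbitrary, $V$ is continuous on $[0,T]\times\mathcal{H}_w$.

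The argument is essentially immediate once Theorem \ref{reg-v-fun} is in hand, so there is no genuine obstacle: the only point requiring a moment's care is that the exponential weights appearing in \eqref{lip-V} and \eqref{lip-Vt} are locally bounded in $h$, which follows from the continuity of the norm $h\mapsto\|h\|_w$ together with the convergence $h_n\to h_0$. In particular, the local (rather than global) nature of the Lipschitz constants is harmless, because we only need continuity at each fixed point and not uniform continuity on all of $\mathcal{H}_w$.
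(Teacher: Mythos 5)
Your proof is correct and follows exactly the route the paper intends: Corollary \ref{continuity-V} is stated as an immediate consequence of the Lipschitz estimates \eqref{lip-V} and \eqref{lip-Vt} of Theorem \ref{reg-v-fun}, and your triangle-inequality splitting with the observation that the exponential weights are locally bounded in $h$ is precisely the implicit argument. Nothing is missing.
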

It is worth noticing that in \cite{Ch-DeA12a} continuity of the value function was obtained instead by regularity results for solutions of variational inequalities. However, in that paper no Lipschitz continuity in time was established.

Before concluding this section we provide a uniform integrability result which will be useful to characterize existence of optimal stopping times in what follows. Fix $t\in[0,T]$, $h\in\mathcal{H}_w$, and define
\begin{align}\label{family}
Y^{t,h}_\tau:=e^{-\int^\tau_t{r^{t,h}_u(0)du}}\,V(\tau,r^{t,h}_\tau)
\end{align}
where $\tau\in[t,T]$ is an arbitrary stopping time.
\begin{prop}\label{unif-int}
For any $(t,h)\in[0,T]\times\mathcal{H}_w$, the family $\big\{Y^{t,h}_\tau\,:\,\tau\in[t,T]\:\text{stopping time}\big\}$ is uniformly integrable.
\end{prop}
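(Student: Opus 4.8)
The plan is to show that the whole family is dominated by a uniformly integrable family of conditional expectations of a \emph{single} integrable random variable, from which uniform integrability follows at once. The naive route would bound $Y^{t,h}_\tau$ through the estimate $V(\tau,r^{t,h}_\tau)\le K\gamma\, e^{\beta\|r^{t,h}_\tau\|_w}$ of Theorem~\ref{reg-v-fun}; this, however, reintroduces an exponential of the norm $\|r^{t,h}_\tau\|_w$, for which only polynomial moments are available (Lemma~\ref{lipSDE}), and it is not clear that $\mathbb{E}[e^{\beta\|r^{t,h}_\tau\|_w}]$ is finite, let alone bounded uniformly in $\tau$. I would therefore discard the pointwise bound on $V$ and instead exploit the boundedness of the payoff $0\le\Psi\le K$ from Proposition~\ref{lip0}.

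First I would record an $L^1$ bound. Fix $(t,h)$ and set $Z:=\sup_{t\le v\le T}e^{-\int_t^v r^{t,h}_u(0)\,du}$. By Lemma~\ref{bound-discount} with $p=1$ one has $\mathbb{E}[Z]\le\gamma\, e^{\beta\|h\|_w}<\infty$, so $Z\in L^1(\Omega)$. Next, working at the level of the \emph{deterministic} value function, for any $(s,g)\in[0,T]\times\mathcal{H}_w$ the bound $\Psi\le K$ gives, for every admissible $\rho$, the pointwise estimate $e^{-\int_s^\rho r^{s,g}_u(0)du}\Psi(\rho,r^{s,g}_\rho)\le K\sup_{s\le v\le T}e^{-\int_s^v r^{s,g}_u(0)du}$; taking expectations and then the supremum over $\rho$ in \eqref{priceHJM2} (the right-hand side being independent of $\rho$) yields the deterministic inequality
\[
V(s,g)\;\le\;K\,\mathbb{E}\Big[\sup_{s\le v\le T}e^{-\int_s^v r^{s,g}_u(0)\,du}\Big]\,,\qquad (s,g)\in[0,T]\times\mathcal{H}_w.
\]

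I would then transfer this inequality along the trajectory. Evaluating it at the random pair $(\tau,r^{t,h}_\tau)$ and invoking the Markov property of $r^{t,h}$ to rewrite the deterministic-start expectation as a conditional expectation given $\mathcal{F}_\tau$ gives
\[
V(\tau,r^{t,h}_\tau)\;\le\;K\,\mathbb{E}\Big[\sup_{\tau\le v\le T}e^{-\int_\tau^v r^{t,h}_u(0)\,du}\,\Big|\,\mathcal{F}_\tau\Big]\,.
\]
Multiplying by the $\mathcal{F}_\tau$-measurable factor $e^{-\int_t^\tau r^{t,h}_u(0)du}$, folding it inside the conditional expectation, and using $e^{-\int_t^\tau}e^{-\int_\tau^v}=e^{-\int_t^v}$ together with $\sup_{\tau\le v\le T}(\cdot)\le\sup_{t\le v\le T}(\cdot)$, I obtain
\[
0\;\le\;Y^{t,h}_\tau\;\le\;K\,\mathbb{E}\Big[\sup_{\tau\le v\le T}e^{-\int_t^v r^{t,h}_u(0)\,du}\,\Big|\,\mathcal{F}_\tau\Big]\;\le\;K\,\mathbb{E}\big[Z\,\big|\,\mathcal{F}_\tau\big]\,.
\]
Finally I would invoke the classical fact that, for a fixed $Z\in L^1$, the family $\{\mathbb{E}[Z\,|\,\mathcal{G}]\}$ over all sub-$\sigma$-algebras $\mathcal{G}$ is uniformly integrable: since $Z\ge0$, on the set $\{\mathbb{E}[Z|\mathcal{F}_\tau]>a\}\in\mathcal{F}_\tau$ one has $\mathbb{E}[\mathbb{E}[Z|\mathcal{F}_\tau]\,\mathbf{1}_{\{\mathbb{E}[Z|\mathcal{F}_\tau]>a\}}]\le\mathbb{E}[Z\,\mathbf{1}_{\{\mathbb{E}[Z|\mathcal{F}_\tau]>a\}}]$, while $\mathbb{P}(\mathbb{E}[Z|\mathcal{F}_\tau]>a)\le\mathbb{E}[Z]/a$ tends to $0$ uniformly in $\tau$; absolute continuity of the integral of $Z$ then forces the right-hand side to $0$ as $a\to\infty$. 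As $0\le Y^{t,h}_\tau\le K\,\mathbb{E}[Z|\mathcal{F}_\tau]$, the family $\{Y^{t,h}_\tau\}$ is dominated by a uniformly integrable family and is therefore itself uniformly integrable.

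The step I expect to be the genuine obstacle is the rigorous transfer of the deterministic inequality to the random pair $(\tau,r^{t,h}_\tau)$, i.e.\ the Markov-property/measurable-selection argument needed to identify $\mathbb{E}[\sup_{\tau\le v\le T}e^{-\int_\tau^v r^{t,h}_u(0)du}\,|\,\mathcal{F}_\tau]$ with the deterministic functional evaluated at the stopped state (in particular, reconciling the shifted filtration $(\mathcal{F}^\tau_s)_{s\ge\tau}$ of \eqref{priceHJM2} with conditioning on $\mathcal{F}_\tau$, using independence of Brownian increments). Everything else is soft: the integrability of $Z$ is Lemma~\ref{bound-discount}, the boundedness $\Psi\le K$ is Proposition~\ref{lip0}, and uniform integrability of conditional expectations of a fixed integrable random variable is standard.
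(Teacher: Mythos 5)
Your proof is correct, and it reaches the conclusion by a slightly different route than the paper. The paper argues via uniform $L^2$-boundedness: it writes $V(\tau,r^{t,h}_\tau)$ as the Snell-envelope essential supremum $\esssup_{\tau\le\sigma\le T}\mathbb{E}\big[e^{-\int_\tau^\sigma r^{t,h}_u(0)du}\,\Psi(\sigma,r^{t,h}_\sigma)\,\big|\,\mathcal{F}_\tau\big]$, folds the discount $e^{-\int_t^\tau}$ inside, bounds $\Psi\le K$, and then applies Lemma \ref{bound-discount} with $p=2$ to get $\mathbb{E}\big[|Y^{t,h}_\tau|^2\big]\le \gamma K^2 e^{\beta\|h\|_w}$ uniformly in $\tau$, which gives uniform integrability by the de la Vall\'ee-Poussin criterion with $G(x)=x^2$. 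You instead dominate $Y^{t,h}_\tau\le K\,\mathbb{E}[Z\,|\,\mathcal{F}_\tau]$ with the single variable $Z=\sup_{t\le v\le T}e^{-\int_t^v r^{t,h}_u(0)du}\in L^1$ and invoke the classical fact that conditional expectations of a fixed integrable variable form a uniformly integrable family. Note that the two proofs share their technical core: the step you flag as the genuine obstacle --- transferring the deterministic bound to the random pair $(\tau,r^{t,h}_\tau)$ via the Markov property --- is exactly the content of the paper's opening identity $V(\tau,r^{t,h}_\tau)=\esssup_{\tau\le\sigma\le T}\mathbb{E}[\,\cdots\,|\,\mathcal{F}_\tau]$, which the paper also uses without detailed justification (it rests on the strong Markov/flow property of the mild solution, cf.\ \cite{DaPr-Zab}, Chapter 9, together with standard Snell-envelope theory), so you are not at a disadvantage there. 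What your route buys is economy of hypotheses: you only need the $p=1$ case of Lemma \ref{bound-discount}, whereas the paper needs $p=2$ (which is why the lemma is stated for $0\le p\le 2$); what the paper's route buys is the explicit uniform $L^2$ bound $\gamma K^2 e^{\beta\|h\|_w}$, marginally stronger than uniform integrability alone, and a proof that transfers verbatim to Propositions \ref{unif-intk} and \ref{unif-intk-na} with the constants of Lemma \ref{bound-discount-n}. Your argument also extends to those propositions without change, since the domination by $K\,\mathbb{E}[Z|\mathcal{F}_\tau]$ only used $\Psi\le K$ (which $\Psi_k$ inherits up to $K+\tfrac{1}{k}$ by \eqref{uni-psi}) and the discount bound.
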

\begin{proof}
The random variables $Y^{t,h}_\tau$ are clearly positive and it suffices to show that their $L^2(\Omega,\mathbb{P})$-norm is uniformly bounded. The random variable $V(\tau,r^{t,h}_\tau)$ may be expressed in terms of an essential supremum and
\begin{align}\label{unif-int01}
\mathbb{E}\Big\{\big|Y^{t,h}_\tau\big|^2\Big\}:=&\mathbb{E}\bigg\{\bigg|e^{-\int^\tau_t{r^{t,h}_u(0)du}}
\esssup_{\tau\le\sigma\le T}\mathbb{E}\bigg\{e^{-\int^\sigma_\tau{r^{t,h}_u(0)du}}\Psi(\sigma,r^{t,h}_\sigma)\Big|\mathcal{F}_\tau\bigg\}\bigg|^2
\bigg\}\nonumber\\
=&\mathbb{E}\bigg\{\bigg|\esssup_{\tau\le\sigma\le T}\mathbb{E}\bigg\{e^{-\int^\sigma_t{r^{t,h}_u(0)du}}\Psi(\sigma,r^{t,h}_\sigma)\Big|\mathcal{F}_\tau\bigg\}\bigg|^2
\bigg\}\\
\le& K^2\,\mathbb{E}\bigg\{\sup_{t\le v\le T}e^{-2\int^v_t{r^{t,h}_u(0)du}}\bigg\}\le \gamma\,K^2\,e^{\beta\|h\|_w}\nonumber
\end{align}
by Lemma \ref{bound-discount} for suitable $\gamma>0$ and $\beta>0$. 
\end{proof}
\section{Preliminary smoothing of the gain function}\label{sec-est}
The pricing problem \eqref{priceHJM2} is an optimal stopping problem involving a stochastic discount factor and a gain function $\Psi$ (cf.~\eqref{psi}) which is not smooth enough to allow a straightforward application of the results in \cite{Ch-DeA12a}. 
It is then natural to tackle problem \eqref{priceHJM2} by considering a regularized version of $\Psi$. For that we now introduce appropriate infinite-dimensional Sobolev spaces.
\subsection{Gaussian measure and Sobolev spaces}
Recall that since $A$ is the infinitesimal generator of a $C_0$-semigroup on $\mathcal{H}_w$ then it is a closed operator and its domain $D(A)$ is dense in $\mathcal{H}_w$ (cf.~for instance \cite{DaPr-Zab}, Appendix A.2). Hence one can pick an orthonormal basis of $\mathcal{H}_w$, denoted $\{\varphi_1,\varphi_2,\ldots\}$, with $\varphi_i\in D(A)$, $i=1,2,\ldots$. We now define a trace class operator which will play a crucial role.
\begin{definition}\label{Q-op}
Let $Q:\mathcal{H}_w\to\mathcal{H}_w$  be the positive, linear operator defined by
\begin{equation*}
Q\varphi_i=\lambda_i\varphi_i, \qquad\lambda_i>0,\qquad i=1,2,\ldots,
\end{equation*}
and such that $\sum^\infty_{i=1}{\lambda_i}<\infty$ (i.e. it is of trace class).
\end{definition}

Define the centered Gaussian measure $\mu$ with covariance operator $Q$ (cf. \cite{Bogachev}, \cite{DaPr}, \cite{DaPr-Zab04}); that is, the restriction to the vectors\footnote{$\ell_2$ denotes the set of infinite vectors $h:=(h_1,h_2,\ldots)$ such that $\sum_{k}{h_k^2}<+\infty$.} $h\in\ell_2$ of the infinite product measure
\begin{eqnarray}\label{gaussmeas}
\mu(dh)=\prod^\infty_{i=1}{\frac{1}{\sqrt{2\pi\lambda_i}}e^{-\frac{h^2_i}{2\lambda_i}}dh_i}.
\end{eqnarray}
For $1\leq p<+\infty$ and $f:\mathcal{H}_w\to\mathbb{R}$, define the $L^p(\mathcal{H}_w,\mu)$ norm as
\begin{align}
&\|f\|_{L^p({\mathcal{H}_w,\mu})}:=\left(\int_{\mathcal{H}_w}{|f(h)|^p\mu(dh)}\right)^{\frac{1}{p}}\qquad\text{for $1\le p<+\infty$.}\label{Lp-n1}
\end{align}
Then, with the notation of \cite{DaPr}, Chapter 10, we consider derivatives in the Friedrichs sense; that is,
\begin{align}\label{Dfried}
D_k\,f(h):=\lim_{\varepsilon\to 0}\frac{1}{\varepsilon}\left[f(h+\varepsilon\varphi_k)-f(h)\right],\qquad h\in\mathcal{H}_w,\,k\in\mathbb{N},
\end{align}
when the limit exists. For $Df(h):=\left(D_1f(h),\,D_2f(h),\,\ldots\,\right)$, we say that $Df(h)\in\mathcal{H}_w$ for given $h\in\mathcal{H}$ if
\begin{align}\label{Dfried02}
\big\|Df(h)\big\|_{w}:=\Big(\sum_{k}\big|D_kf(h)\big|^2\Big)^{\frac{1}{2}}<+\infty.
\end{align}
The $L^p(\mathcal{H}_w,\mu;\mathcal{H}_w)$ norm of $Df$ is defined as
\begin{align}
&\|Df\|_{L^p({\mathcal{H}_w,\mu};\mathcal{H})}:=\left(\int_{\mathcal{H}_w}{\|Df(h)\|^p_{w}\,\mu(dh)}\right)
^{\frac{1}{p}}\qquad\text{for $1\le p<+\infty$},\label{Lp-n2}
\end{align}
One can show that $D$ is closable in $L^p(\mathcal{H}_w,\mu)$ (cf.\ \cite{DaPr}, Chapter 10). Let $\overline{D}$ denote the closure of $D$ in $L^p(\mathcal{H}_w,\mu)$ and define the Sobolev space
\begin{eqnarray}\label{def-W1p}
W^{1,p}(\mathcal{H}_w,\mu):=\{f: f\in L^p(\mathcal{H}_w,\mu)\,\text{and}\,\overline{D}f\in L^p(\mathcal{H}_w,\mu;\mathcal{H}_w)\}.
\end{eqnarray}
Notice however that in the case of generalized derivatives $D$ and $\overline{D}$ are the same.

For $n\in\mathbb{N}$ the finite dimensional counterpart of $\mu$, $L^p({\mathcal{H}_w,\mu})$, $L^p(\mathcal{H}_w,\mu;\mathcal{H}_w)$ are, respectively,
\begin{equation*}
\mu_n(dh):=\prod^n_{i=1}{\frac{1}{\sqrt{2\pi\lambda_i}}e^{-\frac{h^2_i}{2\lambda_i}}dh_i},\quad L^p(\mathbb{R}^n,\mu_n),\quad L^p(\mathbb{R}^n,\mu_n;\mathbb{R}^n).
\end{equation*}
\begin{remark}\label{conn}
If $f:\mathbb{R}^n\to\mathbb{R}$, then
\begin{align*}
\|f\|_{L^p({\mathcal{H}_w,\mu})}=\left(
\int_{\mathbb{R}^n}{|f(h)|^p\mu_n(dh)}\right)^{\frac{1}{p}}=:\|f\|_{L^p({\mathbb{R}^n,\mu_n})}
\end{align*}
and
\begin{align*}
\|Df\|_{L^p({\mathcal{H}_w,\mu};\mathcal{H}_w)}=
\left(\int_{\mathbb{R}^n}{\|Df(h)\|^p_{\mathbb{R}^n}\,\mu_n(dh)}\right)^{\frac{1}{p}}=:
\|Df\|_{L^p({\mathbb{R}^n,\mu_n};\mathbb{R}^n)}\,.
\end{align*}
Notice as well that $\mu_n$ is equivalent to the Lebesgue measure on $\mathbb{R}^n$ for $n>0$.
\end{remark}
Again as in \cite{DaPr}, Chapter 10, we define
\begin{align}\label{Dfried03}
D_kD_jf(h):=\lim_{\varepsilon\to0}\frac{1}{\varepsilon}\left[D_jf(h+\varepsilon\varphi_k)-D_jf(h)\right],\qquad h\in\mathcal{H},\,k\in\mathbb{N},
\end{align}
when the limit exists in $\mathcal{H}_w$. As usual $D^2f:\mathcal{H}_w\to \mathcal{L}(\mathcal{H}_w)$ where $\mathcal{L}(\mathcal{H}_w)$ denotes the space of linear operators on $\mathcal{H}_w$. In this paper we do not need an $L^p$-space associated to the second derivative.

The next proposition provides useful bounds on the gain function $\Psi$ and its proof may be found in Appendix \ref{app0}.
\begin{prop}\label{reg-psi2}
For $1\le p<+\infty$ there exists a positive constant $C_{\Psi,p}$ such that
\begin{equation}\label{rp1-2}
\sup_{t\in[0,T]}\|\Psi(t)\|_{W^{1,p}({\mathcal{H}_w},\mu)}<C_{\Psi,p}\qquad\text{and}
\qquad\int_0^T{\bigg\|\frac{\partial\Psi}{\partial t}(t)\bigg\|^2_{L^p({\mathcal{H}_w},\mu)}dt}<C_{\Psi,p}\,.
\end{equation}
\end{prop}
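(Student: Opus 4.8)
The plan is to reduce the infinite-dimensional Sobolev estimates to one-dimensional facts about the scalar profile $\zeta(x)=[K-e^x]^+$ introduced in the proof of Proposition \ref{lip0}, composed with a bounded linear functional on $\mathcal{H}_w$. Writing $X_t(h):=-\int_0^{T-t}h(x)\,dx$, one has $\Psi(t,h)=\zeta(X_t(h))$, and the injection \eqref{injection} shows that $h\mapsto X_t(h)$ is a bounded linear functional with $|X_t(h)|\le C\,T\,\|h\|_w$; let $\eta_t\in\mathcal{H}_w$ be its Riesz representative, so that $X_t(h)=\langle h,\eta_t\rangle_w$ and $\|\eta_t\|_w\le C\,T$. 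The three quantities to control are then $\|\Psi(t)\|_{L^p(\mathcal{H}_w,\mu)}$, $\|\overline{D}\Psi(t)\|_{L^p(\mathcal{H}_w,\mu;\mathcal{H}_w)}$ and $\int_0^T\|\partial_t\Psi(t)\|_{L^p(\mathcal{H}_w,\mu)}^2\,dt$.

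The $L^p$-bound on $\Psi$ itself is immediate: since $0\le\Psi\le K$ by \eqref{psi1} and $\mu$ is a probability measure, $\|\Psi(t)\|_{L^p(\mathcal{H}_w,\mu)}\le K$ uniformly in $t$. For the gradient I would compute the Friedrichs derivatives by the chain rule, $D_k\Psi(t,h)=\zeta'(X_t(h))\,D_kX_t(h)=\zeta'(X_t(h))\,\langle\varphi_k,\eta_t\rangle_w$. Parseval's identity together with $\|\zeta'\|_{L^\infty(\mathbb{R})}\le K$ then gives the pointwise bound
\[
\|D\Psi(t,h)\|_w^2=|\zeta'(X_t(h))|^2\sum_k|\langle\varphi_k,\eta_t\rangle_w|^2=|\zeta'(X_t(h))|^2\,\|\eta_t\|_w^2\le K^2C^2T^2,
\]
whence $\|\overline{D}\Psi(t)\|_{L^p(\mathcal{H}_w,\mu;\mathcal{H}_w)}\le K\,C\,T$, again uniformly in $t$. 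Combined with the previous bound this yields the first inequality in \eqref{rp1-2}.

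The delicate point, and the one I expect to be the main obstacle, is justifying this chain rule at the level of the closed operator $\overline{D}$, since $\zeta$ fails to be differentiable at its kink $x=\ln K$. I would handle it by approximation: mollify $\zeta$ to obtain $\zeta_\delta\in C^1(\mathbb{R})$ with $\|\zeta_\delta'\|_{L^\infty(\mathbb{R})}\le K$, $\zeta_\delta\to\zeta$ uniformly and $\zeta_\delta'\to\zeta'$ pointwise off $\{\ln K\}$. Setting $\Psi_\delta(t,h):=\zeta_\delta(X_t(h))$, the classical chain rule applies and gives $\Psi_\delta(t)\in W^{1,p}(\mathcal{H}_w,\mu)$ with $\overline{D}\Psi_\delta(t,h)=\zeta_\delta'(X_t(h))\,\eta_t$, uniformly bounded as above. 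The key observation is that for $t<T$ the functional $X_t$ is a nondegenerate centred Gaussian random variable under $\mu$ (its variance $\langle Q\eta_t,\eta_t\rangle_w$ is strictly positive, because $\eta_t\neq0$ and $Q$ has only positive eigenvalues), so the level set $\{h:X_t(h)=\ln K\}$ is $\mu$-null; hence $\zeta_\delta'(X_t(\cdot))\to\zeta'(X_t(\cdot))$ holds $\mu$-a.e. Dominated convergence then gives $\Psi_\delta(t)\to\Psi(t)$ in $L^p(\mathcal{H}_w,\mu)$ and $\overline{D}\Psi_\delta(t)\to\zeta'(X_t(\cdot))\,\eta_t$ in $L^p(\mathcal{H}_w,\mu;\mathcal{H}_w)$, and the closedness of $\overline{D}$ identifies this limit with $\overline{D}\Psi(t)$, confirming the gradient formula used above (the case $t=T$ being trivial since $\Psi(T,\cdot)\equiv0$ as $K<1$).

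Finally, for the time derivative I would differentiate the composition directly. Since every $h\in\mathcal{H}_w$ is continuous, $t\mapsto X_t(h)$ is of class $C^1$ with $\partial_t X_t(h)=h(T-t)$, so for a.e.\ $t$ (those with $X_t(h)\neq\ln K$) one obtains $\partial_t\Psi(t,h)=\zeta'(X_t(h))\,h(T-t)$. The injection \eqref{injection} and $\|\zeta'\|_{L^\infty(\mathbb{R})}\le K$ give $|\partial_t\Psi(t,h)|\le K\,C\,\|h\|_w$. As $Q$ is of trace class, $\mu$ is a Gaussian measure with finite moments of every order by Fernique's theorem, so $M_p:=\int_{\mathcal{H}_w}\|h\|_w^p\,\mu(dh)<\infty$; hence $\|\partial_t\Psi(t)\|_{L^p(\mathcal{H}_w,\mu)}\le K\,C\,M_p^{1/p}$ uniformly in $t$, and integrating over $[0,T]$ yields $\int_0^T\|\partial_t\Psi(t)\|_{L^p(\mathcal{H}_w,\mu)}^2\,dt\le T\,K^2C^2M_p^{2/p}<\infty$, which is the second inequality in \eqref{rp1-2}.
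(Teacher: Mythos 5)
Your proof is correct, but it follows a genuinely different route from the paper's. The paper never computes $\overline{D}\Psi$ explicitly: in Appendix \ref{app0} it works with the finite-dimensional projections $\Psi^{(n)}(t,h):=\Psi(t,P_nh)$, mollifies them in $\mathbb{R}^n$, notes that the mollified functions inherit the Lipschitz constants $C_1,C_2$ of Proposition \ref{lip0} (so their gradients are bounded by $C_1$ in $L^p(\mathbb{R}^n,\mu_n;\mathbb{R}^n)$), and then passes to the limit twice --- first in the mollification parameter, then in $n$ using the Dini-type convergence \eqref{psi-n2} --- extracting weakly convergent subsequences in $W^{1,p}(\mathcal{H}_w,\mu)$ and invoking lower semicontinuity of the norm; the bound $C_{\Psi,p}=K+C_1$ is obtained without ever identifying the derivative (Lemma \ref{reg-psi}). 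You instead exploit the composition structure $\Psi(t,h)=\zeta(\langle h,\eta_t\rangle_w)$, mollify only the scalar profile $\zeta$, and identify $\overline{D}\Psi(t,h)=\zeta'(\langle h,\eta_t\rangle_w)\,\eta_t$ by closedness of $\overline{D}$; the extra ingredient your route requires --- and which you correctly supply --- is the nondegeneracy of the Gaussian law of $h\mapsto\langle h,\eta_t\rangle_w$ under $\mu$ (variance $\langle Q\eta_t,\eta_t\rangle_w>0$ for $t<T$, since $\eta_t\neq0$ and all $\lambda_i>0$), so that the kink level set $\{X_t=\ln K\}$ is $\mu$-null and $\zeta_\delta'\to\zeta'$ $\mu$-a.e. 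Your approach buys an explicit formula for the gradient (potentially useful, e.g., in the functionals $F_k$ and $\mathcal{A}_\mu$) and avoids the double limit; the paper's approach is softer --- it consumes only the Lipschitz estimates \eqref{lip-psi1}--\eqref{lip-psi2} and no further structure of $\Psi$, and its projections $\Psi^{(n)}$ together with \eqref{psi-n2} are reused later in the finite-dimensional reduction. One small caveat in your time-derivative step: the pointwise formula $\partial_t\Psi(t,h)=\zeta'(X_t(h))\,h(T-t)$ is justified only off the set $\{t:X_t(h)=\ln K\}$, which for fixed $h$ need not be Lebesgue-null in $t$; however on that set $\Psi(\cdot,h)=0$, and since $t\mapsto\Psi(t,h)$ is Lipschitz by \eqref{lip-psi2}, the bound $|\partial_t\Psi(t,h)|\le K\,C\,\|h\|_w$ still holds for a.e.\ $t$, which is all your argument uses --- so this is a phrasing inaccuracy, not a gap.
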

It is now crucial to observe that
\begin{align*}
\int_{\mathcal{H}_w}{\|h\|^p_w\,\,e^{\lambda\|h\|_w}\mu(dh)}<\infty
\end{align*}
for arbitrary $\lambda\in\mathbb{R}$ and $1\le p<+\infty$ and therefore Theorem \ref{reg-v-fun} and arguments similar to those employed in the proof of Proposition \ref{reg-psi2} provide the following corollary for the value function $V$ of problem \eqref{priceHJM2}.
\begin{coroll}\label{arto}
For $1\le p<+\infty$ there exists a positive constant $C_{V,p}$ such that
\begin{equation}\label{rp1-V}
\sup_{t\in[0,T]}\|V(t)\|_{W^{1,p}({\mathcal{H}_w},\mu)}<C_{V,p}\qquad\text{and}\qquad\int_0^T{\bigg\|\frac{\partial V}{\partial t}(t)\bigg\|^2_{L^p({\mathcal{H}_w},\mu)}dt}<C_{V,p}\,.
\end{equation}
\end{coroll}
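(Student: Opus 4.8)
The plan is to derive all three bounds directly from the estimates of Theorem \ref{reg-v-fun}, combined with the Gaussian integrability observation
\[
\int_{\mathcal{H}_w}\|h\|_w^q\,e^{\lambda\|h\|_w}\,\mu(dh)<\infty,\qquad\lambda\in\mathbb{R},\ 0\le q<\infty,
\]
following the same pattern as in the proof of Proposition \ref{reg-psi2}. First I would control the $L^p$-norm using the pointwise bound \eqref{bdd-V}: one has
\[
\|V(t)\|^p_{L^p(\mathcal{H}_w,\mu)}=\int_{\mathcal{H}_w}|V(t,h)|^p\,\mu(dh)\le (K\gamma)^p\int_{\mathcal{H}_w}e^{p\beta\|h\|_w}\,\mu(dh),
\]
which is finite and independent of $t$ by the observation with $q=0$, $\lambda=p\beta$.

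Next I would treat the gradient. The spatial Lipschitz estimate \eqref{lip-V} shows that for each fixed $t$ the map $V(t,\cdot)$ is Lipschitz on every ball of $\mathcal{H}_w$; sending $\varepsilon\to0$ in the difference quotient \eqref{Dfried} along a basis direction $\varphi_k$ and using $\|\varphi_k\|_w=1$ gives $|D_kV(t,h)|\le 2L_V e^{\frac{\beta}{2}\|h\|_w}$. More is true: since the Friedrichs gradient norm \eqref{Dfried02} equals the supremum of the directional derivatives over unit vectors of $\mathcal{H}_w$, the same Lipschitz bound yields the pointwise estimate
\[
\|DV(t,h)\|_w\le 2L_V\,e^{\frac{\beta}{2}\|h\|_w}\qquad\text{for $\mu$-a.e.\ }h\in\mathcal{H}_w.
\]
Integrating against $\mu$ and invoking the observation with $q=0$, $\lambda=\tfrac{p}{2}\beta$ bounds $\|DV(t)\|_{L^p(\mathcal{H}_w,\mu;\mathcal{H}_w)}$ uniformly in $t$; together with the previous step this gives the first inequality in \eqref{rp1-V}.

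For the time derivative I would use the temporal Lipschitz estimate \eqref{lip-Vt}: for $\mu$-a.e.\ $h$ the map $t\mapsto V(t,h)$ is Lipschitz with constant $L'_V(1+\|h\|_w)e^{\frac{\beta}{2}\|h\|_w}$, hence differentiable for a.e.\ $t$ with
\[
\Big|\frac{\partial V}{\partial t}(t,h)\Big|\le L'_V\,(1+\|h\|_w)\,e^{\frac{\beta}{2}\|h\|_w}.
\]
Raising to the $p$-th power and integrating in $h$ (the factor $\|h\|_w$ now forcing the use of the observation with $q=p$, $\lambda=\tfrac{p}{2}\beta$) produces a bound on $\|\partial_t V(t)\|_{L^p(\mathcal{H}_w,\mu)}$ that is independent of $t$; integrating its square over $[0,T]$ then yields the second inequality in \eqref{rp1-V}.

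The main obstacle is the rigorous justification that the weighted Lipschitz function $V(t,\cdot)$ genuinely belongs to the Gaussian Sobolev space $W^{1,p}(\mathcal{H}_w,\mu)$ and that its closure-derivative $\overline{D}V$ coincides with the pointwise gradient estimated above. As in Proposition \ref{reg-psi2}, this is carried out by finite-dimensional approximation: one passes to the cylindrical functions obtained by conditioning on $\{\varphi_1,\dots,\varphi_n\}$, applies Rademacher's theorem on $\mathbb{R}^n$ (where $\mu_n$ is equivalent to Lebesgue measure, cf.\ Remark \ref{conn}) to obtain gradients dominated by the Lipschitz constant, integrates against $\mu_n$ using the Gaussian bounds above, and finally lets $n\to\infty$ relying on the closability of $D$. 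The only genuinely new feature with respect to the bounded payoff $\Psi$ is the exponential weight $e^{\frac{\beta}{2}\|h\|_w}$ appearing in the Lipschitz constants, and it is precisely at this point that the integrability observation is indispensable.
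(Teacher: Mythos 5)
Your proposal is correct and follows essentially the same route as the paper, whose proof of Corollary \ref{arto} consists precisely of combining the bounds \eqref{bdd-V}, \eqref{lip-V} and \eqref{lip-Vt} of Theorem \ref{reg-v-fun} with the observation that $\int_{\mathcal{H}_w}\|h\|^p_w\,e^{\lambda\|h\|_w}\mu(dh)<\infty$ and then repeating the finite-dimensional approximation arguments of Proposition \ref{reg-psi2}. The only immaterial difference is that in the cylindrical step you invoke Rademacher's theorem on $\mathbb{R}^n$, whereas Lemma \ref{reg-psi} obtains the same gradient bounds by mollifying the projected function and passing to the weak limit, both yielding the uniform $W^{1,p}(\mathcal{H}_w,\mu)$ estimates via weak compactness and lower semicontinuity.
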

\subsection{Smoothing the gain function}

The smoothing procedure we introduce in this section will be obtained as a slight generalization of that used in \cite{Ma-Rock}, Chapter 4, Lemma 4.1. Define the family $(\Phi_t)_{t\in[0,{T}]}\subset{\mathcal{H}_w}^{\hspace{-5pt}*}$ by
\begin{equation}
\Phi_t(h):=-\int_0^{{T}-t}{h(x)dx},\qquad h\in{\mathcal{H}_w},\:t\in[0,T].
\end{equation}
By continuous injection \eqref{injection} follows
\begin{eqnarray*}
|\Phi_t(h)|\leq C_{{T}}\|h\|_{w},\qquad h\in{\mathcal{H}_w},
\end{eqnarray*}
for a positive constant $C_T$, i.e. $(\Phi_t)_{t\in[0,{T}]}\subset{\mathcal{H}_w}^{\hspace{-5pt}*}$ is a bounded subset of ${\mathcal{H}_w}^{\hspace{-5pt}*}$ with bound $C_{{T}}$.

Let $C^{1,2}_b([0,T]\times{\mathcal{H}_w})$ be the set of bounded continuous functions which are continuously differentiable once with respect to time and twice with respect to the space variable (in the Friedrichs sense) with bounded derivatives (cf.~\eqref{Dfried} and \eqref{Dfried03}).
\begin{prop}\label{mollify1}
There exists a sequence $(\Psi_k)_{k\in\mathbb{N}}\subset C^{1,2}_b([0,T]\times{\mathcal{H}_w})$ satisfying \eqref{lip-psi1} and \eqref{lip-psi2}, such that
\begin{equation}\label{uni-psi}
\sup_{(t,h)\in[0,{T}]\times{\mathcal{H}_w}}\left|\Psi_k(t,h)-\Psi(t,h)\right|\leq\frac{1}{k}\quad\text{for all $k\in\mathbb{N}$}
\end{equation}
and, for $1\le p <+\infty$,
\begin{align}
&\Psi_k\to\Psi\quad \textrm{as $k\to\infty$ in $L^p(0,T;L^p({\mathcal{H}_w},\mu))$},\label{conv-psi1}\\[+8pt]
&D\Psi_k\to D\Psi\quad \textrm{as $k\to\infty$ in $L^p(0,T;L^p({\mathcal{H}_w},\mu;{\mathcal{H}_w}))$},\label{conv-psi2}\\[+4pt]
&\frac{\partial}{\partial t}\Psi_k\to \frac{\partial}{\partial t}\Psi\quad \textrm{as $k\to\infty$ in $L^2(0,T;L^p({\mathcal{H}_w},\mu))$}.\label{conv-psi3}
\end{align}
\end{prop}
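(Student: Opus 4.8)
The plan is to exploit the factorization $\Psi(t,h)=\zeta(\Phi_t(h))$, where $\zeta(x):=[K-e^x]^+$ is the one-dimensional Lipschitz function from the proof of Proposition \ref{lip0} and $(\Phi_t)_{t\in[0,T]}$ is the family of bounded linear functionals just introduced, with $\|\Phi_t\|_{\mathcal{H}_w^*}\le C_T$ uniformly in $t$. Since the only lack of smoothness of $\Psi$ comes from the kink of $\zeta$ at $x=\ln K$, it suffices to mollify $\zeta$ in one dimension. Concretely, I would fix a mollifier $\rho\in C^\infty_c(\mathbb{R})$ with $\rho\ge0$, $\int_{\mathbb{R}}\rho=1$ and $\mathrm{supp}\,\rho\subset[-1,1]$, set $\rho_k(x):=k\rho(kx)$, define $\zeta_k:=\zeta*\rho_k$ and then $\Psi_k(t,h):=\zeta_k(\Phi_t(h))$. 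Standard properties of convolution give $\zeta_k\in C^\infty(\mathbb{R})$, $0\le\zeta_k\le K$, $\|\zeta_k^\prime\|_{L^\infty}\le\|\zeta^\prime\|_{L^\infty}\le K$, and $\sup_x|\zeta_k(x)-\zeta(x)|\le\|\zeta^\prime\|_{L^\infty}/k\le K/k<1/k$; composing with $\Phi_t$ yields \eqref{uni-psi} at once, and $0\le\Psi_k\le K$.

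Next I would check the regularity and the Lipschitz bounds. Because $\Phi_t$ is linear, $D_j\Phi_t(h)=\Phi_t(\varphi_j)$ is independent of $h$, so the chain rule for Friedrichs derivatives gives $D\Psi_k(t,h)=\zeta_k^\prime(\Phi_t(h))\,g_t$ and $D^2\Psi_k(t,h)=\zeta_k^{\prime\prime}(\Phi_t(h))\,g_t\otimes g_t$, where $g_t\in\mathcal{H}_w$ is the Riesz representative of $\Phi_t$, so that $\|g_t\|_w=\|\Phi_t\|_{\mathcal{H}_w^*}\le C_T$. For each fixed $k$ these space derivatives are bounded (with norms at most $KC_T$ and $\|\zeta_k^{\prime\prime}\|_{L^\infty}C_T^2$) and jointly continuous in $(t,h)$, while $\partial_t\Psi_k(t,h)=\zeta_k^\prime(\Phi_t(h))\,h(T-t)$ is continuous and $\Psi_k$ is bounded; hence $\Psi_k\in C^{1,2}_b([0,T]\times\mathcal{H}_w)$. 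The estimates \eqref{lip-psi1} and \eqref{lip-psi2} then follow verbatim from the proof of Proposition \ref{lip0}: replacing $\zeta$ by $\zeta_k$ changes nothing since $\|\zeta_k^\prime\|_{L^\infty}\le K$, while the continuous injection \eqref{injection} again bounds $|\Phi_t(h)-\Phi_t(g)|\le C\,T\,\|h-g\|_w$ and $|\Phi_t(h)-\Phi_s(h)|\le C\|h\|_w|t-s|$.

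For the three $L^p$-convergences I would argue by dominated convergence, the domination being supplied by the exponential-moment bound $\int_{\mathcal{H}_w}\|h\|_w^p\,e^{\lambda\|h\|_w}\mu(dh)<\infty$ together with the uniform bounds $0\le\zeta_k\le K$ and $\|\zeta_k^\prime\|_{L^\infty}\le K$. Pointwise convergence $\Psi_k(t,h)\to\Psi(t,h)$ is immediate from \eqref{uni-psi}. Since $\zeta_k^\prime(x)\to\zeta^\prime(x)$ at every $x\neq\ln K$, one gets $D\Psi_k(t,h)=\zeta_k^\prime(\Phi_t(h))g_t\to\zeta^\prime(\Phi_t(h))g_t=D\Psi(t,h)$ and likewise $\partial_t\Psi_k\to\partial_t\Psi$ pointwise off the set $\{\Phi_t(h)=\ln K\}$. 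The dominating functions $K$, $K\|g_t\|_w\le KC_T$ and $K|h(T-t)|\le KC\|h\|_w$ are integrable against $dt\otimes\mu$ in the respective norms, so \eqref{conv-psi1}--\eqref{conv-psi3} follow once the exceptional set is shown to be $(dt\otimes\mu)$-negligible.

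The main obstacle, and the only genuinely infinite-dimensional point, is precisely this last step: one must verify that $\{(t,h):\Phi_t(h)=\ln K\}$ is negligible. For fixed $t<T$ the functional $\Phi_t$ is nonzero, so $g_t\neq0$, and under the nondegenerate Gaussian measure $\mu$ (all $\lambda_i>0$) the law of the random variable $h\mapsto\Phi_t(h)=\langle g_t,h\rangle_w$ is a centered Gaussian with strictly positive variance $\sum_i\lambda_i\langle g_t,\varphi_i\rangle_w^2$, hence atomless; therefore $\mu\{h:\Phi_t(h)=\ln K\}=0$ for every $t<T$, while the single time $t=T$ (where $\Phi_T\equiv0$ and $\Psi_k(T,\cdot)=\zeta_k(0)$ is constant) is $dt$-negligible. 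By Fubini the exceptional set is $(dt\otimes\mu)$-null, and the dominated convergence argument above goes through, completing the proof.
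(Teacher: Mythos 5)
Your proof is correct, and it reaches the same goal as the paper by the same underlying idea --- reduce everything to a one-dimensional mollification and compose --- but with a different cut: the paper factors $\Psi=g\circ v$ with $g(z)=[z]^+$ mollified and $v(t,h)=K-e^{\Phi_t(h)}$ kept as the smooth (nonlinear) inner map, whereas you factor $\Psi=\zeta\circ\Phi$ with the whole scalar payoff $\zeta(x)=[K-e^x]^+$ mollified and the \emph{linear} functional $\Phi_t$ inside. Your choice buys two things: the Friedrichs derivatives of $\Psi_k$ become one-line computations ($D\Psi_k=\zeta_k'(\Phi_t)g_t$, $D^2\Psi_k=\zeta_k''(\Phi_t)\,g_t\otimes g_t$ with $g_t$ the Riesz representative of $\Phi_t$), and the uniform bound \eqref{uni-psi} comes with the sharpened constant $K/k<1/k$ from $\|\zeta'\|_{L^\infty}\le K<1$, while the paper instead gets exactly $1/k$ from $|(x-y)^+-x^+|\le|y|$. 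More substantively, you make explicit a point the paper's proof passes over in silence: the pointwise convergence $\zeta_k'\to\zeta'$ (resp.\ $g_k'\to g'$ in the paper) fails on the kink set $\{(t,h):\Phi_t(h)=\ln K\}$, and dominated convergence in $L^p(dt\otimes\mu)$ requires this set to be null; your argument --- $\Phi_t\neq0$ for $t<T$, so $\langle g_t,\cdot\rangle_w$ has an atomless nondegenerate Gaussian law under $\mu$ since all $\lambda_i>0$, plus Fubini --- supplies exactly the missing justification. One small item you should import from the paper's proof: to make sense of $D\Psi$ and $\frac{\partial}{\partial t}\Psi$ in \eqref{conv-psi2}--\eqref{conv-psi3} one should invoke the closedness of $\overline{D}$ and of $\frac{\partial}{\partial t}$ in $L^p$, so that the $L^p$-limits $\zeta'(\Phi_t)g_t$ and $\zeta'(\Phi_t(h))h(T-t)$ are \emph{identified} as the Sobolev derivatives of $\Psi$ rather than merely asserted pointwise; this is a one-line fix, not a gap in the strategy. (Your assertion that $\Psi_k\in C^{1,2}_b$ despite $\partial_t\Psi_k$ growing like $\|h\|_w$ shares exactly the same looseness as the paper's own claim for $g_k\circ v$, so it is not a point against you.)
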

\begin{proof}
The gain function $\Psi$ in \eqref{psi} is the composition of $v:[0,T]\times{\mathcal{H}_w}\to\mathbb{R}$, with
\begin{eqnarray*}
v(t,h):=K-e^{\Phi_t(h)},
\end{eqnarray*}
and $g:\mathbb{R}\to\mathbb{R}_+$, with $g(z):=[z]^+$.

Define $I:=Im(v)=(-\infty,K)$ and notice that $g|_{I}:(-\infty,K)\to[0,K)$, where $g|_I$ is the restriction of $g$ to the domain $I$. Let $C^\infty_c(I)$ be the set of functions with compact support on $I$ and continuously differentiable infinitely many times. Take the standard mollifiers $(\rho_k)_{k\in\mathbb{N}}\subset C^\infty_c(I)$ (as in \cite{Brezis}, Chapter 4, p.~108) and consider the mollified sequence $(g_k)_{k\in\mathbb{N}}\subset C^\infty_c(I)$, where $g_k:=\rho_k\star g$. Since $g\in W^{1,p}(I)$ for all $1\leq p\le \infty$, then $g_k\to g$ in $W^{1,p}(I)$, $1\leq p<\infty$ as $k\to\infty$. It is well known that $g^\prime_k=\rho_k\star g^\prime$, where $g^\prime$ represents the weak derivative of $g$, moreover $g_k\to g$ and $g^\prime_k\to g^\prime$ pointwise as $k\to\infty$. The convergence is also locally uniform on every compact subset of $\mathbb{R}$, i.e. $\|g_k-g\|_{L^\infty(\,\bar{I}\,)}\to 0$, as $k\to\infty$ for any compact $\bar{I}\subset\mathbb{R}$. It is not hard to prove that
\begin{equation}\label{aste}
\textrm{$\|g_k\|_{L^\infty(I)}\leq\|g\|_{L^\infty(I)}=K$\:\:\: and\:\:\: $\|g^\prime_k\|_{L^\infty(I)}\leq\|g^\prime\|_{L^\infty(I)}=1$,}
\end{equation}
since $g$ and its weak derivative $g^\prime$ are both uniformly bounded on $I$.

Notice that $v\in C^{1,2}_b([0,{T}]\times{\mathcal{H}_w})$ and therefore\footnote{Differentiability with respect to time is obvious. Since ${\mathcal{H}_w}$ is identified to its dual then $\Phi_t\in\mathcal{H}_w$ with coordinates $(\Phi^1_t,\Phi^2_t,\ldots)$ and $\|\Phi_t\|^2_{w}=\sum_{i=1}^\infty{(\Phi_t^i)^2}$. Hence $\|Dv(t,h)\|^2_{w}=\sum_{i=1}^\infty{(D_iv(t,h))^2}=e^{2\Phi_t(h)}\sum_{i=1}^\infty{(\Phi_t^i)^2}
=e^{2\Phi_t(h)}\|\Phi_t\|_{w}$. Similarly $\langle D^2v(t,h)\,u,\bar{u}\rangle\leq e^{\Phi_t(h)}\|\Phi_t\|^2_{w}\|u\|_{w}\|\bar{u}\|_{w}$.}
\begin{align}
\frac{\partial (g_k\circ v)}{\partial t}(t,h)&=g^\prime_k(v)\,\frac{\partial v}{\partial t}(t,h)\:\:\:\:\text{and}\:\:\:\:
D(g_k\circ v)(t,h)=g^\prime_k(v)\,Dv(t,h).
\end{align}
Using \eqref{aste}, the dominated convergence theorem and pointwise convergence of $g_k$ and $g^\prime_k$, give
\begin{eqnarray*}
&&\hspace{+17pt}g_k\circ v\to g\circ v\quad\hspace{+35pt}\text{in $L^p(0,{T};L^p({\mathcal{H}}_w,\mu))$},\\[+8pt]
&&D(g_k\circ v)\to g^\prime(v)\,Dv\quad\hspace{+16pt}\text{in $L^p(0,{T};L^p({\mathcal{H}_w},\mu;{\mathcal{H}_w}))$},\\[+4pt]
&&\frac{\partial}{\partial t}(g_k\circ v)\to g^\prime(v)\,\frac{\partial}{\partial t}v\quad\hspace{+12pt}\text{in $L^2(0,{T};L^p({\mathcal{H}_w},\mu))$},
\end{eqnarray*}
with $1\le p<+\infty$. Recall that $D=\overline{D}$ and $\frac{\partial}{\partial t}$ are closed in $L^p(0,T;L^p(\mathcal{H}_w,\mu))$ and hence $D(g\circ v)=g^\prime(v)\,Dv$ and $\frac{\partial}{\partial t}(g\circ v)=g^\prime(v)\,\frac{\partial}{\partial t}v$.

If we set $\Psi_k:=g_k\circ v$, then $\Psi_k$ fulfils \eqref{lip-psi1}, \eqref{lip-psi2} and \eqref{conv-psi1} and \eqref{conv-psi2} hold.  Now the Lipschitz continuity \eqref{lip-psi1}, the mollifiers' properties and the obvious fact that $|(x-y)^+-(x)^+|\leq|y|$, imply
\begin{align}\label{gieffe}
\left|g_k(v(t,h))-g(v(t,h))\right|&\leq\int_{\mathbb{R}}{\rho_k(y)\left|g(v(t,h)-y)-g(v(t,h))\right|dy}
\leq\int_{\mathbb{R}}{\rho_k(y)\left|y\right|dy}\nonumber\\
&=\int_{[-\frac{1}{k},\frac{1}{k}]}{\rho_k(y)\left|y\right|dy}\leq\frac{1}{k}.
\end{align}
The uniform convergence of $\Psi_k$ to $\Psi$ as $k\to\infty$ follows from \eqref{gieffe}.
\end{proof}

We now associate an optimal stopping problem to each smooth function $\Psi_k$ and we denote by ${V}_k$ its value function. That is, we set
\begin{eqnarray}\label{sfinito}
{V}_k(t,h):=\sup_{t\leq\tau\leq T}\mathbb{E}\left\{e^{-\int_t^\tau{r^{t,h}_s(0)ds}}\,\Psi_k(\tau,r^{t,h}_\tau)\right\}.
\end{eqnarray}
It is not hard to verify that $V_k$ has the same regularity properties as $V$ (cf.~Theorem \ref{reg-v-fun}). We prove that ${V}_k$ is an approximation of the value function $V$ of \eqref{priceHJM2}.
\begin{prop}\label{regular}
The convergence ${V}_k\to V$ is uniform on bounded subsets $[0,T]\times\mathcal{B}\subset[0,T]\times{\mathcal{H}_w}$; that is, there exist positive constants $\gamma$ and $\beta$ such that
\begin{eqnarray}\label{unif-moll}
\sup_{(t,h)\in[0,T]\times{\mathcal{B}}}|{V}_k(t,h)-V(t,h)|\le\gamma e^{\beta\|h\|_w}\frac{1}{k},\quad\text{for all $k\in\mathbb{N}$}.
\end{eqnarray}
\end{prop}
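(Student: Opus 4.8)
The plan is to bound the difference $|V_k(t,h)-V(t,h)|$ uniformly over all stopping times by exploiting the uniform approximation $\|\Psi_k-\Psi\|_\infty\le 1/k$ from \eqref{uni-psi}, while keeping track of the stochastic discount factor via Lemma \ref{bound-discount}. The key elementary fact is that the supremum over stopping times is a contraction for the sup-norm of the integrand: for any two payoffs the difference of the corresponding value functions is controlled by the worst-case difference of the payoffs, weighted by the discount factor.

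First I would recall the standard inequality $|\sup_\tau a_\tau-\sup_\tau b_\tau|\le\sup_\tau|a_\tau-b_\tau|$, applied to
\begin{align*}
a_\tau:=\mathbb{E}\Big\{e^{-\int_t^\tau r^{t,h}_s(0)ds}\,\Psi_k(\tau,r^{t,h}_\tau)\Big\},\qquad
b_\tau:=\mathbb{E}\Big\{e^{-\int_t^\tau r^{t,h}_s(0)ds}\,\Psi(\tau,r^{t,h}_\tau)\Big\}.
\end{align*}
This reduces the problem to estimating, uniformly in $\tau$,
\begin{align*}
|a_\tau-b_\tau|\le\mathbb{E}\Big\{e^{-\int_t^\tau r^{t,h}_s(0)ds}\,\big|\Psi_k(\tau,r^{t,h}_\tau)-\Psi(\tau,r^{t,h}_\tau)\big|\Big\}.
\end{align*}
Here I would invoke \eqref{uni-psi} to bound the payoff difference by $1/k$ pointwise, pulling it out of the expectation, so that
\begin{align*}
|a_\tau-b_\tau|\le\frac1k\,\mathbb{E}\Big\{e^{-\int_t^\tau r^{t,h}_s(0)ds}\Big\}
\le\frac1k\,\mathbb{E}\Big\{\sup_{t\le v\le T}e^{-\int_t^v r^{t,h}_s(0)ds}\Big\}.
\end{align*}

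Next I would apply Lemma \ref{bound-discount} with $p=1$ (and starting time $t^\prime=t$) to bound the expectation of the supremum of the discount factor by $\gamma\,e^{\beta\|h\|_w}$, with $\gamma,\beta$ exactly the constants appearing in \eqref{bound-r0}. Taking the supremum over $\tau$ and then over $(t,h)\in[0,T]\times\mathcal{B}$ yields \eqref{unif-moll} directly, with the same $\gamma$ and $\beta$; note that on a bounded set $\mathcal{B}$ the factor $e^{\beta\|h\|_w}$ is itself bounded, so the convergence is genuinely uniform on $[0,T]\times\mathcal{B}$ at rate $1/k$.

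I do not expect any serious obstacle here: the argument is short and each ingredient is already in place. The only point requiring a little care is the interchange of the supremum over stopping times with the absolute value, which is why I would state the contraction inequality explicitly rather than appeal to it implicitly; the bound on the discount factor is then a direct citation of Lemma \ref{bound-discount} rather than a fresh computation. Since the estimate is uniform in the stopping time, the same constants $\gamma,\beta$ propagate through, which is precisely the form in which \eqref{unif-moll} is phrased.
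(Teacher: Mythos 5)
Your proposal is correct and follows essentially the same route as the paper: the paper also bounds $V_k(t,h)-V(t,h)$ (and the reverse difference) by the supremum over stopping times of the discounted payoff difference, invokes \eqref{uni-psi} to extract the factor $1/k$, and applies Lemma \ref{bound-discount} to control the discount factor by $\gamma e^{\beta\|h\|_w}$. The only difference is presentational --- you make explicit the contraction inequality $|\sup_\tau a_\tau-\sup_\tau b_\tau|\le\sup_\tau|a_\tau-b_\tau|$ that the paper uses implicitly by treating the two one-sided inequalities separately.
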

\begin{proof}
Fix $(t,h)\in[0,T]\times{\mathcal{H}_w}$, then
\begin{align*}
V_k(t,h)-V(t,h)&\leq\sup_{t\leq\tau\leq T}\mathbb{E}\left\{e^{-\int_t^\tau{r^{t,h}_s(0)ds}}\left(\Psi_k(\tau,r^{t,h}_\tau)-\Psi(\tau,r^{t,h}_\tau)\right)
\right\}
\leq \gamma e^{\beta\|h\|_w}\frac{1}{k}
\end{align*}
by \eqref{uni-psi} and Lemma \ref{bound-discount}. The same holds for $V(t,h)-V_k(t,h)$ and hence \eqref{unif-moll} follows.
\end{proof}
\begin{prop}\label{rem-cont-Vk}
For every $k\in\mathbb{N}$ the map $(t,h)\mapsto V_k(t,h)$ is continuous on $[0,T]\times\mathcal{H}_w$.
\end{prop}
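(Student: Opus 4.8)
The plan is to observe that $V_k$ in \eqref{sfinito} has exactly the same structure as $V$ in \eqref{priceHJM2}, the only difference being that the payoff $\Psi$ is replaced by its smooth approximation $\Psi_k$. Crucially, Proposition \ref{mollify1} guarantees that $\Psi_k$ inherits the uniform bound \eqref{psi1} (since $\|g_k\|_{L^\infty(I)}\le K$ by \eqref{aste}) together with the spatial Lipschitz estimate \eqref{lip-psi1} and the temporal estimate \eqref{lip-psi2}, with the same constants $K$, $C_1$ and $C_2$. Since the proof of Theorem \ref{reg-v-fun} (given in Appendix \ref{app-1}) uses only these three properties of the payoff, together with the discount bound of Lemma \ref{bound-discount} and the flow estimates of Lemma \ref{lipSDE}, the very same argument applies verbatim to $V_k$ and yields the analogues of \eqref{bdd-V}, \eqref{lip-V} and \eqref{lip-Vt}; continuity then follows exactly as in Corollary \ref{continuity-V}.

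Concretely, for the spatial continuity I would fix $t\in[0,T]$ and $h,g\in\mathcal{H}_w$ and estimate, using that the supremum of the differences dominates the difference of the suprema,
\[
|V_k(t,h)-V_k(t,g)|\le\sup_{t\le\tau\le T}\mathbb{E}\Big\{\big|e^{-\int_t^\tau r^{t,h}_s(0)ds}\Psi_k(\tau,r^{t,h}_\tau)-e^{-\int_t^\tau r^{t,g}_s(0)ds}\Psi_k(\tau,r^{t,g}_\tau)\big|\Big\}.
\]
Adding and subtracting $e^{-\int_t^\tau r^{t,h}_s(0)ds}\Psi_k(\tau,r^{t,g}_\tau)$ splits the right-hand side into a term controlled by the Lipschitz bound \eqref{lip-psi1} for $\Psi_k$ and a term in which $\Psi_k\le K$ multiplies the difference of the two discount factors. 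In the first term I would apply Cauchy--Schwarz to separate the squared discount factor, bounded via Lemma \ref{bound-discount} with $p=2$, from $\|r^{t,h}_\tau-r^{t,g}_\tau\|_w$, bounded via \eqref{lip} with $p=2$; the second term is handled through the elementary inequality $|e^{-a}-e^{-b}|\le(e^{-a}+e^{-b})|a-b|$ combined with the same two estimates. This produces a local Lipschitz bound of the form \eqref{lip-V} with constants independent of $k$.

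For the temporal continuity I would proceed as in the proof of \eqref{lip-Vt}, exploiting the time-homogeneity of \eqref{infiniteSDE} and the estimate \eqref{lip-psi2} for $\Psi_k$. The only genuine subtlety, exactly as for $V$, is that changing the initial time $t$ alters the admissible family of stopping times; this is precisely the delicate point underlying the Lipschitz-in-time property highlighted after Theorem \ref{reg-v-fun}, and it is resolved by the same time-shift argument rather than by any feature peculiar to $\Psi_k$. Since the constants in the resulting analogues of \eqref{lip-V} and \eqref{lip-Vt} depend on the data only through $K$, $C_1$, $C_2$, $\beta$ and the constants of Lemmas \ref{bound-discount} and \ref{lipSDE}, and not on $k$, the map $(t,h)\mapsto V_k(t,h)$ is locally Lipschitz, hence continuous, on $[0,T]\times\mathcal{H}_w$, which completes the proof.
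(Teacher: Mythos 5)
Your proposal is correct and follows exactly the paper's route: the paper proves this proposition by observing that the arguments of Theorem \ref{reg-v-fun} (Appendix \ref{app-1}) apply verbatim to $V_k$, since Proposition \ref{mollify1} guarantees that $\Psi_k$ satisfies \eqref{psi1}, \eqref{lip-psi1} and \eqref{lip-psi2} with the same constants. Your concrete reconstruction of the spatial estimate (splitting off the discount factors, Cauchy--Schwarz with Lemma \ref{bound-discount} and \eqref{lip}) and of the time-shift argument for temporal continuity is precisely the content of the appendix proof, including the $k$-independence of the constants that the paper exploits later.
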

\begin{proof}
The proof follows from arguments as in the proof of Theorem \ref{reg-v-fun}.
\end{proof}

Set (cf.~\eqref{family})
\begin{align}\label{family2}
Y^{t,h}_k(\tau):=e^{-\int^{\tau}_t{r^{t,h}_u(0)du}}V_k(\tau,r^{t,h}_{\tau}),
\end{align}
where $\tau\in[t,T]$ is an arbitrary stopping time.
\begin{prop}\label{unif-intk}
For any $(t,h)\in[0,T]\times\mathcal{H}_w$ and any sequence $(\tau_k)_{k\in\mathbb{N}}$ of stopping times in $[t,T]$ the sequence $\big(Y^{t,h}_k(\tau_k)\big)_{k\in\mathbb{N}}$ is uniformly integrable.
\end{prop}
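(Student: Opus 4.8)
The plan is to follow the proof of Proposition \ref{unif-int} almost verbatim: the only structural change is that the single value function $V$ and stopping time $\tau$ are replaced by the sequence of pairs $(V_k,\tau_k)$. Since any family bounded in $L^2(\Omega,\mathbb{P})$ is uniformly integrable, I would reduce the statement to producing an $L^2$-bound on $Y^{t,h}_k(\tau_k)$ that is \emph{independent of} $k$ (and of the particular choice of $\tau_k$). The whole argument then hinges on one fact that is genuinely new compared with Proposition \ref{unif-int}, namely a bound on the regularized payoffs $\Psi_k$ that does not deteriorate as $k\to\infty$.

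First I would record that uniform bound. The payoff $\Psi$ itself satisfies $0\le\Psi\le K$ by \eqref{psi1}, and in Proposition \ref{mollify1} each $\Psi_k$ is obtained by mollifying $\Psi$ (through the scalar mollification $g_k=\rho_k\star g$ of the positive part). Since convolution against a probability kernel preserves non-negativity and does not increase the sup-norm --- this is exactly \eqref{aste} --- one gets
\begin{align*}
0\le\Psi_k(t,h)\le K\qquad\text{for every $k\in\mathbb{N}$ and $(t,h)\in[0,T]\times\mathcal{H}_w$;}
\end{align*}
in particular $V_k\ge0$ and the random variables $Y^{t,h}_k(\tau_k)$ are non-negative.

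Next I would invoke the Markov property of $r^{t,h}$ exactly as in Proposition \ref{unif-int} to represent $V_k(\tau_k,r^{t,h}_{\tau_k})$ as an essential supremum of conditional expectations given $\mathcal{F}_{\tau_k}$, which yields
\begin{align*}
\mathbb{E}\Big\{\big|Y^{t,h}_k(\tau_k)\big|^2\Big\}=\mathbb{E}\bigg\{\bigg|\esssup_{\tau_k\le\sigma\le T}\mathbb{E}\Big\{e^{-\int^\sigma_t r^{t,h}_u(0)du}\Psi_k(\sigma,r^{t,h}_\sigma)\Big|\mathcal{F}_{\tau_k}\Big\}\bigg|^2\bigg\}.
\end{align*}
Using $0\le\Psi_k\le K$ and bounding the discount factor by its running supremum, the integrand under the essential supremum is dominated uniformly in $\sigma$ and $k$ by $K\,\mathbb{E}\{\sup_{t\le v\le T}e^{-\int^v_t r^{t,h}_u(0)du}\,|\,\mathcal{F}_{\tau_k}\}$; then conditional Jensen followed by Lemma \ref{bound-discount} with $p=2$ would give
\begin{align*}
\mathbb{E}\Big\{\big|Y^{t,h}_k(\tau_k)\big|^2\Big\}\le K^2\,\mathbb{E}\Big\{\sup_{t\le v\le T}e^{-2\int^v_t r^{t,h}_u(0)du}\Big\}\le\gamma\,K^2\,e^{\beta\|h\|_w}.
\end{align*}

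The right-hand side is independent of $k$ and of $\tau_k$, so the sequence is $L^2$-bounded and therefore uniformly integrable. I do not expect a serious obstacle here: the computation mirrors Proposition \ref{unif-int} and the only delicate point is ensuring the estimate is insensitive to $k$. That is precisely what the $k$-independent payoff bound $\Psi_k\le K$ secures, and it reduces to the elementary observation that mollification against a probability kernel does not increase the sup-norm. Hence the step most in need of care is simply the verification (already implicit in \eqref{aste}) that the smoothing in Proposition \ref{mollify1} preserves the bound $K$ uniformly in $k$.
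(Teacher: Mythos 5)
Your proposal is correct and follows exactly the route the paper takes: the paper's proof of Proposition \ref{unif-intk} simply states that it proceeds along the same lines as Proposition \ref{unif-int}, i.e.\ the $L^2$-bound via the essential-supremum representation of $V_k(\tau_k,r^{t,h}_{\tau_k})$, conditional Jensen, and Lemma \ref{bound-discount} with $p=2$. Your one genuinely new ingredient --- the $k$-uniform bound $0\le\Psi_k\le K$ secured by \eqref{aste} (or, equally well, by \eqref{uni-psi} up to the harmless constant $K+1$) --- is precisely what makes the resulting bound $\gamma K^2 e^{\beta\|h\|_w}$ independent of $k$ and of the choice of $\tau_k$, so the argument is complete.
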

\begin{proof}
The proof follows along the same lines as in the proof of Proposition \ref{unif-int}.
\end{proof}

Notice that the approximating optimal stopping problem \eqref{sfinito} may be characterized through variational methods as in \cite{Ch-DeA12a}, apart for minimal adjustments. Once $V_k$ is found to be a solution of a suitable infinite-dimensional variational problem, 
taking the limit as $k\to\infty$ will lead to the variational formulation for the price function $V$. 
\section{An approximating variational inequality}

A variational formulation of problem \eqref{sfinito} is obtained by reducing the optimal stopping problem to a finite-dimensional setting. This is accomplished in two steps: first we make a Yosida approximation of the unbounded operator $A$ in \eqref{infiniteSDEb} by bounded operators $A_\alpha$, then we reduce the the SDE itself to a finite dimensional one. At each step a corresponding optimal stopping problem is studied. In order to proceed with this algorithm the SDE \eqref{infiniteSDEb} must live in a larger probability space. In particular, we assume that $W:=(W^0,W^1,W^2,\ldots)$ is an infinite sequence of real, standard Brownian motions on $(\Omega,\mathcal{F},\mathbb{P})$ and that the Brownian motion $B$ of \eqref{infiniteSDEb} coincides with its first component, i.e.~we set $W^0=B$. The original filtration can be replaced by the filtration generated by $W$, again denoted by $\{\mathcal{F}_t,t\geq0\}$ and completed by the null sets.

In this new setting all the arguments of the previous sections still hold and the pricing problem keeps the same form. In the next two sections we outline both the Yosida approximation and the finite-dimensional one. Full details may be found in \cite{Ch-DeA12a}.
\subsection{Yosida approximation}

The Yosida approximation of the unbounded linear operator $A$ may be introduced without any further assumption and it is defined by $A_\alpha:=\alpha A(\alpha I-A)^{-1}$, for $\alpha>0$ (cf.~\cite{Pazy}). Since $A_\alpha$ is a bounded linear operator, the corresponding SDE
\begin{equation}\label{SDE-yos}
\left\{
\begin{array}{l}
dr^{(\alpha)h}_t=[A_\alpha r^{(\alpha)h}_t+F_\sigma(r^{(\alpha)h}_t)]dt+\sigma(r^{(\alpha)h}_t)dW^0_t,\qquad t\in[0,T],\\
\\
r^{(\alpha)h}_0=h,
\end{array}
\right.
\end{equation}
admits a unique strong solution $r^{(\alpha)h}$. That is,
\begin{equation*}
r^{(\alpha)h}_t=h+\int_0^t{[A_\alpha r^{(\alpha)h}_s+F_\sigma(r^{(\alpha)h}_s)]ds}+\int_0^t{\sigma(r^{(\alpha)h}_s)dW^0_s},\qquad t\in[0,T],\,\mathbb{P}\textrm{-a.s.}
\end{equation*}
For each $\alpha>0$, the notations $r^{(\alpha)h}$ and $r^{(\alpha)t,h}$ are analogous to those used in Section \ref{sec-pricing}. The following important convergence result is proven in \cite{DaPr-Zab}, Proposition 7.5 and it is here recalled for completeness.
\begin{prop}\label{conv-yos}
Let $r^h$ be the unique mild solution of equation (\ref{infiniteSDEb}) with $r^h_0=h$ and let $r^{(\alpha)h}$ be the unique strong solution of equation (\ref{SDE-yos}). For $1\le p<\infty$, the following convergence holds
\begin{eqnarray*}
\lim_{\alpha\to\infty}{\mathbb{E}\left\{\sup_{0\leq t\leq T}\|r^{(\alpha)h}_t-r^h_t\|^p_{{w}}\right\}}=0,\qquad h\in{\mathcal{H}_w}.
\end{eqnarray*}
\end{prop}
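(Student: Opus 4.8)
The plan is to work with the mild formulations of both processes and to close the estimate by a Gronwall argument, the essential inputs being the strong convergence of the Yosida semigroups together with the Lipschitz and boundedness properties already established. Since $A_\alpha$ is bounded it generates a uniformly continuous semigroup $S_\alpha(t)=e^{tA_\alpha}$, and the strong solution of \eqref{SDE-yos} coincides with the mild solution
\[
r^{(\alpha)h}_t = S_\alpha(t)h + \int_0^t S_\alpha(t-s)F_\sigma(r^{(\alpha)h}_s)\,ds + \int_0^t S_\alpha(t-s)\sigma(r^{(\alpha)h}_s)\,dW^0_s,
\]
while $r^h$ solves \eqref{infiniteSDE} with $S$ in place of $S_\alpha$. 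First I would recall from semigroup theory (cf.~\cite{Pazy}) the two facts that make the approximation work: there exist $M\ge 1$ and $\omega\in\mathbb{R}$ with $\|S_\alpha(t)\|\le Me^{\omega t}$ \emph{uniformly} in $\alpha$ on $[0,T]$, and $S_\alpha(t)x\to S(t)x$ as $\alpha\to\infty$ for every $x\in\mathcal{H}_w$, uniformly in $t\in[0,T]$.

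Subtracting the two mild forms, I would split $r^{(\alpha)h}_t-r^h_t$ into the deterministic part $(S_\alpha(t)-S(t))h$, a drift part, and a diffusion part, and in each of the last two I would further separate the \emph{semigroup-difference} contribution, e.g.
\[
\int_0^t \big(S_\alpha(t-s)-S(t-s)\big)F_\sigma(r^h_s)\,ds,
\]
from the \emph{coefficient-difference} contribution $\int_0^t S_\alpha(t-s)\big(F_\sigma(r^{(\alpha)h}_s)-F_\sigma(r^h_s)\big)\,ds$, and analogously for the stochastic integral. The semigroup-difference terms tend to $0$ in $L^p$ by the strong convergence above and dominated convergence, where the uniform bound on $\|S_\alpha\|$ together with the a priori moment estimate \eqref{apr1} supplies the dominating function. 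The coefficient-difference terms are controlled by the Lipschitz properties \eqref{lipF} and \eqref{basic}, yielding bounds proportional to $\int_0^t \mathbb{E}\{\sup_{u\le s}\|r^{(\alpha)h}_u-r^h_u\|^p_w\}\,ds$.

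Setting $\phi_\alpha(t):=\mathbb{E}\{\sup_{0\le s\le t}\|r^{(\alpha)h}_s-r^h_s\|^p_w\}$ and assembling the estimates, I would arrive at an inequality of the form $\phi_\alpha(t)\le \varepsilon_\alpha + C\int_0^t\phi_\alpha(s)\,ds$, where $\varepsilon_\alpha\to 0$ gathers all the semigroup-difference terms and $C$ depends only on the Lipschitz constants $L_F,L_\sigma$ and on $T,p$. Gronwall's lemma then gives $\phi_\alpha(T)\le\varepsilon_\alpha e^{CT}\to 0$, which is precisely the claim.

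The hard part will be the stochastic convolution term. Because $t\mapsto\int_0^t S(t-s)\sigma(r^h_s)\,dW^0_s$ is \emph{not} a martingale, one cannot apply the Burkholder-Davis-Gundy inequality directly to obtain the supremum-in-time estimate needed to feed $\phi_\alpha$. I would handle this through the factorization method of Da Prato-Kwapie\'n-Zabczyk (cf.~\cite{DaPr-Zab}, Chapter 5 and the proof of Theorem 7.4), rewriting the convolution through the kernel $(t-s)^{\beta-1}$ for a suitable $\beta$ and then transferring the maximal estimate to an ordinary $L^p$-in-time bound on the integrand via H\"older's and Young's inequalities; crucially, the uniform bound on $\|S_\alpha\|$ guarantees that the constants produced by the factorization are independent of $\alpha$, so that the semigroup-difference part of the diffusion term indeed contributes to $\varepsilon_\alpha$ rather than to $C$.
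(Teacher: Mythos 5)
Your argument is correct and coincides with the proof the paper relies on: the paper does not prove Proposition \ref{conv-yos} itself but cites Da Prato--Zabczyk, Proposition 7.5, whose proof is precisely your scheme --- mild formulations, the uniform bound $\|e^{tA_\alpha}\|\le M e^{\omega' t}$ with the strong convergence $S_\alpha(t)x\to S(t)x$ uniform in $t\in[0,T]$, the split into semigroup-difference and coefficient-difference terms, factorization for the stochastic convolutions, and Gronwall. The only point worth making explicit is that the factorization estimate for the supremum requires $p>2$, after which the range $1\le p\le 2$ follows by Jensen's inequality, exactly as the paper itself notes when deriving \eqref{lip} in Lemma \ref{lipSDE}.
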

\noindent
For $k\in\mathbb{N}$ arbitrary but fixed, define $V_{k,\alpha}$ as the value function of the optimal stopping problem corresponding to $r^{(\alpha)h}$, with regularized gain $\Psi_k$, 
\begin{eqnarray}\label{OS2}
V_{k,\alpha}(t,h):=\sup_{t\leq\tau\leq T}\mathbb{E}\left\{e^{-\int^{\tau}_t{r^{(\alpha)t,h}_s(0)ds}}\,\Psi_k(\tau,r^{(\alpha)t,h}_\tau)\right\}.
\end{eqnarray}
Notice that $V_{k,\alpha}$ satisfies \eqref{bdd-V}, \eqref{lip-V} and \eqref{lip-Vt} with the same constants. The convergence of $V_{k,\alpha}$ to $V_k$ (cf.~\eqref{sfinito}) as $\alpha\to\infty$ holds both uniformly with respect to $t$ and in suitable $L^p$-norms.
\begin{theorem}\label{cnv1}
For arbitrary $\alpha>0$ and $k\in\mathbb{N}$, the map $(t,h)\mapsto V_{k,\alpha}(t,h)$ is con\-ti\-nuous on $[0,T]\times\mathcal{H}_w$. Moreover, for $1\le p<+\infty$, the following convergence results hold
\begin{align}
&\sup_{0\leq t\leq T}|V_{k,\alpha}(t,h)-V_k(t,h)|\to 0\qquad\text{as $\alpha\to\infty$, $h\in{\mathcal{H}_w}$},\label{cv-1}\\[+6pt]
&V_{k,\alpha}\to V_k\qquad\text{as $\alpha\to\infty$, in $L^p(0,T;L^p(\mathcal{H}_w,\mu))$},\label{cv-gen01}\\[+8pt]
&V_{k,\alpha}\rightharpoonup V_k\qquad\text{as $\alpha\to\infty$, in $L^p(0,T;W^{1,p}(\mathcal{H}_w,\mu))$ (up to a subsequence),}\label{cv-gen02}\\[+6pt]
&\frac{\partial\,V_{k,\alpha}}{\partial\,t}\rightharpoonup \frac{\partial\,V_{k}}{\partial\,t}\qquad\text{as $\alpha\to\infty$, in $L^2(0,T;L^{p}(\mathcal{H}_w,\mu))$ (up to a subsequence).}\label{cv-gen03}
\end{align}
\end{theorem}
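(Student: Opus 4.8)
The plan is to prove the five assertions in the order stated, treating the uniform-in-time convergence \eqref{cv-1} as the analytical heart of the statement and deducing the $L^p$ and weak convergences from it by soft arguments. Continuity comes first and is essentially free: as already remarked before the theorem, $V_{k,\alpha}$ satisfies \eqref{bdd-V}, \eqref{lip-V} and \eqref{lip-Vt} with the same constants, so the map $(t,h)\mapsto V_{k,\alpha}(t,h)$ is locally Lipschitz (with an exponential weight in $\|h\|_w$) in both variables and hence continuous on $[0,T]\times\mathcal{H}_w$, exactly as in Corollary \ref{continuity-V}.

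For \eqref{cv-1} I would fix $h\in\mathcal{H}_w$ and use the elementary bound $|\sup_\tau a_\tau-\sup_\tau b_\tau|\le\sup_\tau|a_\tau-b_\tau|$ applied to the definitions \eqref{OS2} and \eqref{sfinito}, obtaining
\begin{equation*}
|V_{k,\alpha}(t,h)-V_k(t,h)|\le\sup_{t\le\tau\le T}\mathbb{E}\Big\{\big|e^{-\int_t^\tau r^{(\alpha)t,h}_s(0)ds}\Psi_k(\tau,r^{(\alpha)t,h}_\tau)-e^{-\int_t^\tau r^{t,h}_s(0)ds}\Psi_k(\tau,r^{t,h}_\tau)\big|\Big\}.
\end{equation*}
Adding and subtracting splits the integrand into a term controlled by the Lipschitz continuity \eqref{lip-psi1} of $\Psi_k$ (times the discount factor $e^{-\int_t^\tau r^{(\alpha)t,h}_s(0)ds}$) and a term controlled by $\Psi_k\le K$ (cf.\ \eqref{psi1}) times the difference of the two discount factors. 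For the first I would apply Cauchy--Schwarz, the Yosida analogue of the exponential bound of Lemma \ref{bound-discount}, and $\mathbb{E}\{\|r^{(\alpha)t,h}_\tau-r^{t,h}_\tau\|_w^2\}\le\mathbb{E}\{\sup_{0\le s\le T}\|r^{(\alpha)t,h}_s-r^{t,h}_s\|_w^2\}$, reducing matters to Proposition \ref{conv-yos}. For the second I would use $|e^{-x}-e^{-y}|\le(e^{-x}+e^{-y})|x-y|$ together with $|\int_t^\tau(r^{(\alpha)t,h}_s(0)-r^{t,h}_s(0))ds|\le CT\sup_{0\le s\le T}\|r^{(\alpha)t,h}_s-r^{t,h}_s\|_w$ (by \eqref{injection}), and again Cauchy--Schwarz, Lemma \ref{bound-discount} and Proposition \ref{conv-yos}. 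The point that makes the bound \emph{uniform in $t$} is time-homogeneity of \eqref{infiniteSDEb}: the left-hand side of Proposition \ref{conv-yos} with starting time $t$ equals $\mathbb{E}\{\sup_{0\le s\le T-t}\|r^{(\alpha)0,h}_s-r^{0,h}_s\|_w^p\}$, which is dominated by the corresponding quantity on $[0,T]$ and hence tends to $0$ uniformly in $t$.

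The preceding estimates yield $|V_{k,\alpha}(t,h)-V_k(t,h)|\le\gamma e^{\beta\|h\|_w}\varepsilon_\alpha$ with $\varepsilon_\alpha\to0$ and $\gamma,\beta$ independent of $(t,h)$; since $\int_{\mathcal{H}_w}e^{\lambda\|h\|_w}\mu(dh)<\infty$ for every $\lambda$, pointwise convergence plus the common majorant $K\gamma e^{\beta\|h\|_w}$ from \eqref{bdd-V} give \eqref{cv-gen01} by dominated convergence over $[0,T]\times\mathcal{H}_w$. For the weak statements I would invoke the analogue of Corollary \ref{arto} for $V_{k,\alpha}$, which holds with the same constants, so that $(V_{k,\alpha})_\alpha$ is bounded in $L^p(0,T;W^{1,p}(\mathcal{H}_w,\mu))$ and $(\partial_t V_{k,\alpha})_\alpha$ is bounded in $L^2(0,T;L^p(\mathcal{H}_w,\mu))$, uniformly in $\alpha$. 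For $1<p<+\infty$ these spaces are reflexive, so along a subsequence $V_{k,\alpha}\rightharpoonup\xi$ and $\partial_t V_{k,\alpha}\rightharpoonup\eta$ (for $p=1$ the same weak compactness follows from the uniform $L^{p'}$-bound with $p'>1$ via the Dunford--Pettis theorem). The strong convergence \eqref{cv-gen01} forces $\xi=V_k$, giving \eqref{cv-gen02}; and testing $\int_0^T\langle V_{k,\alpha},\partial_t\phi\rangle\,dt=-\int_0^T\langle\partial_t V_{k,\alpha},\phi\rangle\,dt$ against smooth compactly supported $\phi$ and passing to the limit (strongly on the left by \eqref{cv-gen01}, weakly on the right) identifies $\eta=\partial_t V_k$, which proves \eqref{cv-gen03}.

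I expect the main obstacle to be \eqref{cv-1}: one must simultaneously control the perturbation of the random discount factor and of the merely $C^{0,1}$-in-space gain $\Psi_k$, keeping every bound independent of both the stopping time $\tau$ and the initial time $t$. This is exactly where the exponential moment estimate of Lemma \ref{bound-discount} and the time-homogeneity of the dynamics are indispensable; once \eqref{cv-1} is in hand, the remaining $L^p$ and weak convergences follow by the routine dominated-convergence and reflexivity arguments sketched above.
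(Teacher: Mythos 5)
Your proposal is correct and follows essentially the same route as the paper: the paper likewise derives \eqref{cv-1} from the Lipschitz-type estimate of Theorem \ref{reg-v-fun} applied to the pair $r^{(\alpha)h}$, $r^h$ (yielding $|V_{k,\alpha}(t,h)-V_k(t,h)|\le 2L_V\,e^{\frac{\beta}{2}\|h\|_w}\,\mathbb{E}\big\{\sup_{0\le s\le T}\|r^{(\alpha)h}_s-r^h_s\|^2_w\big\}^{\frac{1}{2}}$ with $L_V$ independent of $t$ by time-homogeneity) together with Proposition \ref{conv-yos}, then obtains \eqref{cv-gen01} by dominated convergence using the bound \eqref{bdd-V}, and \eqref{cv-gen02}--\eqref{cv-gen03} from the uniform bounds of Corollary \ref{arto} via weak compactness and uniqueness of the limit. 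Your extra details --- spelling out the payoff/discount-factor splitting that the paper delegates to the proof of Theorem \ref{reg-v-fun}, and treating $p=1$ via Dunford--Pettis where the paper tacitly relies on reflexivity --- are refinements of, not departures from, the paper's argument.
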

\begin{proof}
Continuity follows by arguments as in the proof of Theorem \ref{reg-v-fun}.

The Lipschitz property of the gain function $\Psi_k$ and the time-homogeneous character of the processes give
\begin{eqnarray*}
|V_{k,\alpha}(t,h)-V_k(t,h)|&\leq & 2\,L_V\,e^{\frac{\beta}{2}\|h\|_w}\mathbb{E}\bigg\{\sup_{0\leq s\leq T}\left\|r^{(\alpha)h}_s-r^{h}_s\right\|^2_{w}\bigg\}^{\frac{1}{2}},
\end{eqnarray*}
by arguments similar to those used in the proof of Theorem \ref{reg-v-fun}. Since $L_V$ is independent of $t$, the uniform convergence \eqref{cv-1} follows from Proposition \ref{conv-yos}. Then, dominated convergence theorem and \eqref{cv-1} imply \eqref{cv-gen01} as the same bounds of $V$ in \eqref{bdd-V} hold for $V_{k,\alpha}$.

As for \eqref{cv-gen02} and \eqref{cv-gen03} we notice that bounds \eqref{lip-V} and \eqref{lip-Vt} hold for $V_{k,\alpha}$ with the same constants. Therefore, $V_{k,\alpha}$ and $\frac{\partial}{\partial\,t}V_{k,\alpha}$ are functions bounded in $L^p(0,T;W^{1,p}(\mathcal{H}_w,\mu))$ and $L^2(0,T;L^{p}(\mathcal{H}_w,\mu))$, respectively, uniformly with respect to $\alpha>0$ and $k\in\mathbb{N}$ (cf.~Corollary \ref{arto}). Hence one can extract weakly converging subsequences and \eqref{cv-gen02} and \eqref{cv-gen03} follow by uniqueness of the limit and by \eqref{cv-gen01}.
\end{proof}
\begin{theorem}\label{ascoli}
For every $k\in\mathbb{N}$ the convergence $V_{k,\alpha}\to V_k$ as $\alpha\to\infty$ is uniform on compact subsets $[0,T]\times\mathcal{K}\subset [0,T]\times{\mathcal{H}_w}$.
\end{theorem}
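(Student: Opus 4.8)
The plan is to upgrade the fixed-curve convergence \eqref{cv-1} --- which already holds uniformly in $t$ for each individual $h$ --- to convergence that is uniform over the whole compact set $[0,T]\times\mathcal{K}$, by an Arz\`ela--Ascoli type argument. The decisive structural input is that, as noted right after Theorem \ref{cnv1}, the approximations $V_{k,\alpha}$ satisfy the bounds \eqref{bdd-V}, \eqref{lip-V} and \eqref{lip-Vt} \emph{with the same constants as $V_k$}, independently of $\alpha$. In particular the spatial Lipschitz estimate \eqref{lip-V} endows the family $\{V_{k,\alpha}\}_{\alpha>0}\cup\{V_k\}$ with equicontinuity in the $h$-variable that is uniform in both $t$ and $\alpha$. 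Combining this $\alpha$-uniform equicontinuity with the pointwise-in-$h$ convergence \eqref{cv-1} at finitely many points will yield the claim.

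The steps I would carry out are the following. First, fix a compact set $\mathcal{K}\subset\mathcal{H}_w$ and set $M:=\sup_{h\in\mathcal{K}}\|h\|_w<\infty$, so that $e^{\frac{\beta}{2}\|h\|_w}\le e^{\frac{\beta}{2}M}$ on $\mathcal{K}$. Then \eqref{lip-V}, applied simultaneously to $V_{k,\alpha}$ and to $V_k$, gives for all $t\in[0,T]$, all $\alpha>0$, and all $h,g\in\mathcal{K}$ the uniform modulus of continuity $|V_{k,\alpha}(t,h)-V_{k,\alpha}(t,g)|\le 2\,L_V\,e^{\frac{\beta}{2}M}\|h-g\|_w$, and likewise for $V_k$. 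Given $\varepsilon>0$, I would use compactness of $\mathcal{K}$ to extract a finite $\delta$-net $\{h_1,\dots,h_N\}$ with $\delta:=\varepsilon\big(6\,L_V\,e^{\frac{\beta}{2}M}\big)^{-1}$, so that every $h\in\mathcal{K}$ lies within $\delta$ of some $h_j$ and hence $|V_{k,\alpha}(t,h)-V_{k,\alpha}(t,h_j)|<\varepsilon/3$ and $|V_k(t,h)-V_k(t,h_j)|<\varepsilon/3$, uniformly in $t$ and $\alpha$. Next, since the net is finite, \eqref{cv-1} supplies a single threshold $\bar\alpha$ such that $\sup_{0\le t\le T}|V_{k,\alpha}(t,h_j)-V_k(t,h_j)|<\varepsilon/3$ for every $j=1,\dots,N$ whenever $\alpha>\bar\alpha$. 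A triangle inequality through the nearest net point $h_j$ then gives $|V_{k,\alpha}(t,h)-V_k(t,h)|<\varepsilon$ for all $(t,h)\in[0,T]\times\mathcal{K}$ and all $\alpha>\bar\alpha$, which is precisely the asserted uniform convergence.

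I do not expect a genuine obstacle here: the whole argument is the standard ``equicontinuity plus pointwise convergence on a dense/finite set implies uniform convergence'' scheme. The one point that must be handled with care --- and which is really the heart of the matter --- is the $\alpha$-independence of the Lipschitz constant in \eqref{lip-V}; this is what makes the $\delta$-net choice legitimate for all $\alpha$ at once, and it is exactly the property recorded after Theorem \ref{cnv1}. I note in passing that the time-Lipschitz bound \eqref{lip-Vt} is not strictly needed for this particular argument, since \eqref{cv-1} is already uniform in $t$; it would be required only if one wished to build a net in the $(t,h)$-variable jointly, but reducing to a net in $h$ alone is both sufficient and more economical.
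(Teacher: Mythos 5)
Your proof is correct and follows essentially the same route as the paper: the paper also combines the pointwise convergence \eqref{cv-1} with the equicontinuity coming from the $\alpha$-independent bounds \eqref{bdd-V} and \eqref{lip-V}, defining $F_\alpha(h):=\sup_{t\in[0,T]}|V_{k,\alpha}(t,h)-V_k(t,h)|$ and invoking an Ascoli-type theorem (\cite{Die}, Theorem 7.5.6) to pass from pointwise to uniform convergence on compacts. Your explicit $\varepsilon/3$ finite-net argument simply unpacks that citation, and your observation that the time-Lipschitz bound \eqref{lip-Vt} is not needed is consistent with the paper's proof, which likewise uses only \eqref{bdd-V} and \eqref{lip-V}.
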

\begin{proof}
We only outline the proof as a similar result is proved in \cite{Ch-DeA12a}, Corollary 3.3. Fix $h\in{\mathcal{H}_w}$ and for each $\alpha>0$ define
\begin{equation*}
F_\alpha(h):=\sup_{t\in[0,T]}|V_{k,\alpha}(t,h)-V_k(t,h)|.
\end{equation*}
Then $F_\alpha(h)\to 0$ as $\alpha\to\infty$ by \eqref{cv-1}. One may verify that the family $(F_\alpha)_{\alpha>0}$ is equi-bounded and equi-continuous on bounded subsets of $[0,T]\times\mathcal{H}_w$, as \eqref{bdd-V} and \eqref{lip-V} hold for both $V_{k,\alpha}$ and $V_k$,
hence $V_{k,\alpha}$ converges uniformly to $V_k$, as $\alpha\to\infty$, on compact subsets $[0,T]\times\mathcal{K}$ (\cite{Die}, Theorem 7.5.6). 
\end{proof}
\subsection{Finite dimensional reduction}

For each $n\in\mathbb{N}$ consider the finite dimensional subset $\mathcal{H}^{(n)}_w:=span\{\varphi_1,\varphi_2,\ldots,\varphi_n\}$ and the orthogonal projection operator $P_n:\mathcal{H}_w\to\mathcal{H}^{(n)}_w$. Approximate the diffusion coefficients of \eqref{SDE-yos} by $\sigma^{(n)}:=(P_n\sigma)\circ P_n$, $F^{(n)}_\sigma:=(P_nF_\sigma)\circ P_n$ and $A_{\alpha,n}:=P_nA_\alpha P_n$, respectively. Notice that $A_{\alpha,n}$ is a bounded linear operator on $\mathcal{H}^{(n)}_w$. Define the process $r^{(\alpha)h;n}$ as the unique strong solution of the SDE on $\mathcal{H}^{(n)}_w$ given by
\begin{eqnarray}\label{SDE-n}
\left\{
\begin{array}{l}
dr^{(\alpha)h;n}_t=[A_{\alpha,n}r^{(\alpha)h;n}_t+F^{(n)}_\sigma(r^{(\alpha)h;n}_t)]dt+
\sigma^{(n)}(r^{(\alpha)h;n}_t)dW^0_t+
\epsilon_n\sum^n_{i=1}{\varphi_i\,dW^i_t},\\
\\
r^{(\alpha)h;n}_0=P_nh=:h^{(n)},\qquad\:\: t\in[0,T],
\end{array}
\right.
\end{eqnarray}
where $(\epsilon_n)_{n\in\mathbb{N}}$ is a sequence of positive numbers such that
\begin{align}\label{epsilon}
\sqrt{n}\,\epsilon_n\to0\qquad \textrm{as}\:\: n\to\infty.
\end{align}
\begin{prop}\label{c-1}
The following convergence
\begin{equation}\label{limit1}
\lim_{n\rightarrow\infty}\mathbb{E}\left\{\sup_{t\in[0,T]}\left\|r^{(\alpha)h;n}_t-r^{(\alpha)h}_t\right\|^2_w\right\}=0,
\end{equation}
holds uniformly with respect to $h$ on compact subsets of $\mathcal{H}_w$.
\end{prop}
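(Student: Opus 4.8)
The plan is to estimate the difference between the two processes via a standard Gronwall-type argument adapted to the mild/strong solution framework, and then to promote the pointwise-in-$h$ convergence to uniform convergence on compacts by an equicontinuity argument analogous to the one used for Theorem~\ref{ascoli}. First I would write down the integral forms of $r^{(\alpha)h;n}$ and $r^{(\alpha)h}$: the former solves \eqref{SDE-n} on $\mathcal{H}^{(n)}_w$ driven by $W^0$ plus the small perturbation $\epsilon_n\sum_{i=1}^n\varphi_i\,dW^i_t$, while the latter solves \eqref{SDE-yos} on $\mathcal{H}_w$. Subtracting and inserting the projection $P_n$, I would split the difference $r^{(\alpha)h;n}_t-r^{(\alpha)h}_t$ into four contributions: the projected initial datum $P_nh-h$; the drift discrepancy between $A_{\alpha,n}r^{(\alpha)h;n}+F^{(n)}_\sigma(r^{(\alpha)h;n})$ and $A_\alpha r^{(\alpha)h}+F_\sigma(r^{(\alpha)h})$; the diffusion discrepancy between $\sigma^{(n)}(r^{(\alpha)h;n})$ and $\sigma(r^{(\alpha)h})$; and the additive noise term $\epsilon_n\sum_{i=1}^n\varphi_i\,dW^i_t$.

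Next I would bound each piece. For the supremum over $t$ of the stochastic integrals I would use the Burkholder--Davis--Gundy inequality, and for the drift I would use the Cauchy--Schwarz inequality in time; the key analytic inputs are that $A_\alpha$ is a \emph{bounded} operator (so $A_{\alpha,n}=P_nA_\alpha P_n$ is uniformly bounded for fixed $\alpha$) and the global Lipschitz bounds on $\sigma$ and $F_\sigma$ from Assumption~\ref{ass-sigma0} and Proposition~\ref{lipschitz}. Using the Lipschitz property one separates, for instance, $\|\sigma^{(n)}(r^{(\alpha)h;n})-\sigma(r^{(\alpha)h})\|_w$ into a term controlled by $L_\sigma\|r^{(\alpha)h;n}-r^{(\alpha)h}\|_w$ (which feeds the Gronwall loop) and an ``approximation'' term $\|(P_n\sigma-\sigma)(r^{(\alpha)h})\|_w$ measuring how well $P_n$ reproduces $\sigma$, $F_\sigma$, and $A_\alpha$ along the limit process. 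The additive perturbation contributes at most $\epsilon_n^2\,n\,T$ (up to constants) to the squared sup-norm, which vanishes by assumption \eqref{epsilon}. Setting $\phi_n(t):=\mathbb{E}\{\sup_{0\le s\le t}\|r^{(\alpha)h;n}_s-r^{(\alpha)h}_s\|^2_w\}$, all estimates combine into $\phi_n(T)\le c_1(\text{approximation terms})+c_2\int_0^T\phi_n(s)\,ds$, and Gronwall's lemma yields that $\phi_n(T)$ is controlled by the approximation terms alone.

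The main obstacle is showing that these approximation terms tend to zero. They have the form $\mathbb{E}\{\sup_t\|(I-P_n)\,G(r^{(\alpha)h}_t)\|^2_w\}$ for $G\in\{\sigma,F_\sigma\}$, together with the analogous term $\mathbb{E}\{\sup_t\|(A_\alpha-A_{\alpha,n})\,r^{(\alpha)h}_t\|^2_w\}$. Since $P_n\to I$ strongly on $\mathcal{H}_w$ (the $\varphi_i$ form an orthonormal basis) and $A_\alpha$ is bounded, these quantities converge to zero pointwise in $\omega$ along the trajectory; I would then pass the expectation and supremum inside using the dominated convergence theorem, the domination being furnished by the a-priori moment bound \eqref{apr1} of Lemma~\ref{lipSDE} together with the boundedness of $\sigma$, $F_\sigma$ and $A_\alpha$. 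This gives \eqref{limit1} for each fixed $h$. Finally, to obtain uniformity on compact subsets $\mathcal{K}$, I would argue exactly as in Theorem~\ref{ascoli}: the maps $h\mapsto\phi_n(T)^{1/2}$ are equicontinuous on bounded sets (a consequence of the Lipschitz estimate \eqref{lip} applied to the dependence on the initial datum) and equibounded, so pointwise convergence upgrades to uniform convergence on compacts by the Arzel\`a--Ascoli theorem (\cite{Die}, Theorem 7.5.6).
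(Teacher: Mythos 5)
Your proposal is correct and follows essentially the same route as the paper, which simply invokes the standard $L^p$-estimates for strong solutions from \cite{Ch-DeA12a}, Proposition 3.5 (Gronwall loop plus vanishing projection-error terms, with the nonlinear drift handled via \eqref{lipF} and \eqref{b-F}, and uniformity on compacts via the equicontinuity/Arzel\`a--Ascoli argument as in Theorem \ref{ascoli}). The only point worth making explicit is that passing $\sup_{t}$ through the strong convergence $P_n\to I$ in the terms $\mathbb{E}\big\{\sup_t\|(I-P_n)G(r^{(\alpha)h}_t)\|^2_w\big\}$ uses the relative compactness in $\mathcal{H}_w$ of the continuous trajectory $t\mapsto G(r^{(\alpha)h}_t(\omega))$, since strong operator convergence is uniform only on compact sets.
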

\begin{proof}
The proof is based on standard $L^p$-estimates of SDE's strong solutions. It follows along the same lines as the proof of \cite{Ch-DeA12a}, Proposition 3.5. In fact, the only difference here is the presence of a non-linear drift term in \eqref{SDE-yos} and \eqref{SDE-n} which, however, may be estimated by using \eqref{lipF} and \eqref{b-F}.
\end{proof}
\begin{remark}\label{unif}
Notice that, for any starting time $t\in[0,T]$, the previous proposition and the arguments of its proof hold for $r^{(\alpha)t,h;n}$ and $r^{(\alpha)t,h}$ as well, thanks to the time-homogeneity of equations \eqref{SDE-yos} and \eqref{SDE-n}.
\end{remark}

\begin{lemma}\label{bound-discount-n}
There exist positive constants $\gamma^\prime$ and $\beta^\prime$, independent of $\alpha$ and $n$, such that the bound \eqref{bound-r0} holds for $r^{(\alpha)t,h;n}$ with $\beta$ and $\gamma$ replaced by $\beta^\prime$ and $\gamma^\prime$, respectively.
\end{lemma}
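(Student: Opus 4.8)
The plan is to mimic the proof of Lemma~\ref{bound-discount}, the only structural change being that the explicit left-shift semigroup $S$ is now replaced by the matrix semigroup $S_{\alpha,n}(t):=e^{tA_{\alpha,n}}$ generated by the bounded operator $A_{\alpha,n}=P_nA_\alpha P_n$. Writing the strong solution of \eqref{SDE-n} in variation-of-constants form and evaluating at $x=0$ gives
\begin{align*}
r^{(\alpha)t^\prime,h;n}_v(0)=\big(S_{\alpha,n}(v-t^\prime)h^{(n)}\big)(0)&+\int_{t^\prime}^v\big(S_{\alpha,n}(v-u)F^{(n)}_\sigma(r^{(\alpha)t^\prime,h;n}_u)\big)(0)\,du\\
&+\int_{t^\prime}^v\big(S_{\alpha,n}(v-u)\sigma^{(n)}(r^{(\alpha)t^\prime,h;n}_u)\big)(0)\,dW^0_u+\epsilon_n\sum_{i=1}^n\int_{t^\prime}^v\big(S_{\alpha,n}(v-u)\varphi_i\big)(0)\,dW^i_u.
\end{align*}
I would then bound the spot rate term by term as in Lemma~\ref{bound-discount}. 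Two facts make the coefficients harmless and \emph{uniform} in $\alpha,n$: first, $|g(0)|\le\|g\|_w$ directly from \eqref{def-norm}; second, the projections $P_n$ are non-expansive, so $\|\sigma^{(n)}(\cdot)\|_w\le C_\sigma$ and $\|F^{(n)}_\sigma(\cdot)\|_w\le C_FC_\sigma^2$ by \eqref{basic} and \eqref{b-F}.

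The crucial new ingredient, and the main obstacle, is a uniform operator-norm estimate $\sup_{0\le t\le T}\|S_{\alpha,n}(t)\|_{\mathcal L(\mathcal H_w)}\le M$ with $M$ independent of $\alpha$ and $n$: in Lemma~\ref{bound-discount} this was transparent because $\big(S(t-s)g\big)(0)=g(t-s)$ and \eqref{injection} applied, whereas $S_{\alpha,n}$ carries no shift structure. To obtain $M$ I would use that $A$ is quasi-dissipative on $\mathcal H_w$: for $h\in D(A)$, integrating by parts in \eqref{def-norm} and using that $w$ is non-decreasing with $w\ge1$,
\begin{align*}
\langle Ah,h\rangle_w=h'(0)h(0)+\tfrac12\big[|h'(x)|^2w(x)\big]^{\infty}_{0}-\tfrac12\int_0^\infty|h'(x)|^2w'(x)\,dx\le\tfrac12\|h\|^2_w,
\end{align*}
the last two terms being non-positive. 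Quasi-dissipativity is inherited by the Yosida approximations $A_\alpha$ uniformly for $\alpha$ large, and since $\langle A_{\alpha,n}h,h\rangle_w=\langle A_\alpha P_nh,P_nh\rangle_w$ it survives the Galerkin projection on $\mathcal H^{(n)}_w$; hence $\|S_{\alpha,n}(t)\|\le e^{\omega T}=:M$ for all $\alpha,n$ and $t\in[0,T]$. Consequently $\big|\big(S_{\alpha,n}(\tau)g\big)(0)\big|\le\|S_{\alpha,n}(\tau)g\|_w\le M\|g\|_w$, which is the exact replacement of the shift estimate used throughout Lemma~\ref{bound-discount}.

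With this, the initial-datum and drift terms produce a deterministic prefactor $B'(h)=\exp\{M\|h\|_w+MC_FC_\sigma^2\,T^2\}$ playing the role of $B(h)$, while the two martingale contributions are treated together: setting, for fixed $v$,
\begin{align*}
\widetilde M_u:=\int_{t^\prime}^{u}\big(S_{\alpha,n}(v-s)\sigma^{(n)}(r_s)\big)(0)\,dW^0_s+\epsilon_n\sum_{i=1}^n\int_{t^\prime}^{u}\big(S_{\alpha,n}(v-s)\varphi_i\big)(0)\,dW^i_s,
\end{align*}
its quadratic-variation density is bounded, uniformly in $\alpha,n$, by $M^2C_\sigma^2+M^2n\epsilon_n^2=M^2\big(C_\sigma^2+(\sqrt n\,\epsilon_n)^2\big)$, the second summand being controlled by \eqref{epsilon}. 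The localization--It\^o--Gronwall argument of Lemma~\ref{bound-discount} then applies verbatim to $\widetilde M$ and yields a factor bounded uniformly in $\alpha,n$. Collecting the three contributions gives \eqref{bound-r0} for $r^{(\alpha)t^\prime,h;n}$ with constants $\gamma^\prime,\beta^\prime$ independent of $\alpha$ and $n$; the reduction to $t^\prime=0$, $p=1$ and the extension to $0\le p\le2$ follow from time-homogeneity (cf.~Remark~\ref{unif}) and the multiplicativity of $p$ precisely as before. The sole delicate point is the uniform bound $M$, that is the stability of the quasi-dissipativity of $A$ under both the Yosida approximation and the Galerkin projection; everything else is a routine repetition of Lemma~\ref{bound-discount}, with the non-expansiveness of $P_n$ and condition \eqref{epsilon} guaranteeing uniformity.
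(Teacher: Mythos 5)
Your proof is correct and follows the same overall scheme as the paper's: reduce to $t'=0$, $p=1$ by time-homogeneity; peel off the initial datum and drift into a deterministic prefactor; run the localization--It\^o--Gronwall estimate on the martingale part; and control the extra noise through $n\epsilon_n^2=(\sqrt{n}\,\epsilon_n)^2$ and \eqref{epsilon}. There are two differences worth recording. First, the paper separates the $W^0$-integral from the $n$ auxiliary integrals by Cauchy--Schwarz and then factorizes the product of the $n$ Gaussian exponentials using independence of the $W^i$, bounding each factor by $e^{2CT^3\epsilon_n^2}$ and hence the product by $e^{2CT^3 n\epsilon_n^2}$; you instead aggregate all noise terms into a single local martingale $\widetilde M$ and apply the Gronwall argument once, with quadratic-variation density bounded by $M^2\big(C_\sigma^2+n\epsilon_n^2\big)$. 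The two devices are interchangeable here and both hinge on \eqref{epsilon}. Second --- and this is a genuine addition of rigor --- you address a point the paper passes over in silence: in its displays the paper still writes $S(t-s)$, although the dynamics \eqref{SDE-n} is generated by $A_{\alpha,n}$, so what is actually needed is an estimate $\big|\big(e^{\tau A_{\alpha,n}}g\big)(0)\big|\le M\|g\|_w$ with $M$ \emph{uniform} in $(\alpha,n)$, which is not the shift estimate \eqref{injection}. Your quasi-dissipativity argument (stability under the Yosida approximation for $\alpha$ large, and under the Galerkin step because $P_n$ is an orthogonal projection, so $\langle P_nA_\alpha P_n h,h\rangle_w=\langle A_\alpha P_nh,P_nh\rangle_w$) supplies exactly the missing uniform bound, and correctly identifies it as the crux of the uniformity claim in the lemma.

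One small slip in the dissipativity computation: you cannot simply discard the two non-positive terms in your display, because $h'(0)h(0)$ alone is \emph{not} dominated by $\tfrac12\|h\|^2_w$ --- the trace $h'(0)$ is not controlled by the $\mathcal{H}_w$-norm (take $h'$ a tall spike near the origin with small $\int|h'|^2w$). The fix is already contained in your own identity: for $h\in D(A)$ the boundary term at infinity vanishes (the limit of $|h'(x)|^2w(x)$ exists since $h''h'w\in L^1$ and $\int|h'|^2w'<\infty$, and it must be zero because $|h'|^2w$ is integrable with $w\ge1$), so the bracket equals $-\tfrac12|h'(0)|^2w(0)$, and retaining it gives $h'(0)h(0)-\tfrac12|h'(0)|^2w(0)\le\frac{|h(0)|^2}{2w(0)}\le\tfrac12\|h\|^2_w$ since $w(0)\ge1$. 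With this correction your argument is complete and, on the semigroup-norm issue, more careful than the paper's own proof.
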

\begin{proof}
As in the proof of Lemma \ref{bound-discount} it suffices to consider $t^\prime=0$ and $p=1$. For $h\in\mathcal{H}_w$ we simplify notation by setting $r^n_t:=r^{(\alpha)h;n}_t$, $t\ge0$ and by taking $B(h):=\exp\big\{CT\|h\|_w+CC_FC^2_\sigma T^2\big\}$. Following the same steps as in the proof of Lemma \ref{bound-discount} and recalling that $(W^i_t)_{i\in\mathbb{N}}$ is a family of independent Brownian motions we find
\begin{align*}
\mathbb{E}&\Big\{\sup_{0\le v\le T}e^{-\int_0^v{r^{n}_t(0)dt}}\Big\}\\
\le&B(h)\Big(\frac{1}{T}\hspace{-4pt}\int^T_0{\hspace{-6pt}\mathbb{E}\Big\{e^{-T\int^t_0{S(t-s)\sigma^{(n)}(r^n_s)(0)dW^0_s}}
\prod^n_{i=1}e^{-\epsilon_n T\int^t_0{S(t-s)\varphi_i(0)dW^i_s}}\Big\}dt}+1\Big)\nonumber\\
\le&B(h)\Big(\frac{1}{T}\hspace{-4pt}\int^T_0{\hspace{-6pt}\mathbb{E}\Big\{e^{-2T\int^t_0{S(t-s)\sigma^{(n)}(r^n_s)(0)dW^0_s}}
\Big\}^{\frac{1}{2}}\Big(\prod^n_{i=1}
\mathbb{E}\Big\{e^{-2\epsilon_n T\int^t_0{S(t-s)\varphi_i(0)dW^i_s}}\Big\}\Big)^{\frac{1}{2}}dt}+1\Big).\nonumber
\end{align*}
The first expectation inside the time-integral above may be estimated as in the proof of Lemma \ref{bound-discount}. As for each term of the infinite product one has
\begin{align*}
\mathbb{E}\Big\{e^{-\epsilon_n T\int^t_0{S(t-s)\varphi_i(0)dW^i_s}}\Big\}\le e^{{2}C\,T^3\epsilon^2_n}
\end{align*}
with $C>0$ as in \eqref{injection}, by using arguments similar to those employed to obtain \eqref{bound-dis5} and by recalling that $\|\varphi_i\|_w=1$, $i\in\mathbb{N}$. Therefore, it follows that
\begin{align*}
\prod^n_{i=1}
\mathbb{E}\Big\{e^{-\epsilon_n T\int^t_0{S(t-s)\varphi_i(0)dW^i_s}}\Big\}\le e^{{2}C\,T^3n\,\epsilon^2_n}.
\end{align*}
Hence, the rate of convergence to zero of $\epsilon_n$ (cf.~\eqref{epsilon}) enables us to pick constants $\gamma^\prime$ and $\beta^\prime$ large enough to guarantee that the bound \eqref{bound-r0} holds for $r^{(\alpha)t,h;n}$ with $\gamma$ and $\beta$ replaced by $\gamma^\prime$ and $\beta^\prime$, respectively, uniformly in $\alpha$ and $n$.
\end{proof}

For $n\geq1$ define $\Psi^{(n)}_k:[0,T]\times\mathcal{H}_w\to\mathbb{R}$ by
\begin{equation}
\Psi^{(n)}_k(t,h):=\Psi_k(t,P_nh)=\Psi_k(t,h^{(n)})
\end{equation}
(cf.~\eqref{SDE-n}). 
Of course, $P_nh^{(n)}=h^{(n)}$, hence $\Psi^{(n)}_k(t,\,\cdot\,)=\Psi_k(t,\,\cdot\,)$ on $\mathcal{H}^{(n)}_w$. However, in what follows it is convenient to use the notation $\Psi^{(n)}_k$ since this is interpreted as a gain function on $[0,T]\times\mathcal{H}^{(n)}_w$. 
The same arguments as in Appendix \ref{app0}, \eqref{psi-n2} show that $\Psi^{(n)}_k\to\Psi_k$ as $n\to\infty$ uniformly on every compact subset of $[0,T]\times\mathcal{H}_w$.
Let $V^{(n)}_{k,\alpha}$ be the value function of the optimal stopping problem
\begin{equation}\label{OS3}
V_{k,\alpha}^{(n)}(t,h^{(n)}):=\sup_{t\leq\tau\leq T}\mathbb{E}\left\{e^{-\int^{\tau}_t{r^{(\alpha)t,h;n}_s(0)ds}}\,\Psi^{(n)}_k(\tau,r^{(\alpha)t,h;n}_\tau)\right\},
\end{equation}
of course, $V_{k,\alpha}^{(n)}$ may also be seen as a function on $[0,T]\times\mathbb{R}^{n}$. Notice that, as for $V_{k,\alpha}$, again $V_{k,\alpha}^{(n)}$ satisfies \eqref{bdd-V}, \eqref{lip-V} and \eqref{lip-Vt} with the same constants (cf.~Lemma \ref{bound-discount-n}). Results similar to Theorem \ref{cnv1} and Theorem \ref{ascoli} hold.
\begin{theorem}\label{cnv2}
For any $\alpha>0$, $k\in\mathbb{N}$ and $n\in\mathbb{N}$, the map $(t,h)\mapsto V^{(n)}_{k,\alpha}(t,h^{(n)})$ is continuous on $[0,T]\times\mathcal{H}_w$. Moreover, for $1\le p<+\infty$, the following convergence results hold,
\begin{align}
&\sup_{(t,h)\in[0,T]\times\mathcal{K}}|V^{(n)}_{k,\alpha}(t,h^{(n)})-V_{k,\alpha}(t,h)|\to 0\quad\text{as $n\to\infty$, $\mathcal{K}\subset\mathcal{H}_w$, $\mathcal{K}$ compact},\label{cv-3}\\[+6pt]
&V^{(n)}_{k,\alpha}\to V_{k,\alpha}\qquad\text{as $n\to\infty$, in $L^p(0,T;L^p(\mathcal{H}_w,\mu))$},\label{cv-gen04}\\[+8pt]
&V^{(n)}_{k,\alpha}\rightharpoonup V_{k,\alpha}\qquad\text{as $n\to\infty$, in $L^p(0,T;W^{1,p}(\mathcal{H}_w,\mu))$ (up to a subsequence),}\label{cv-gen05}\\[+6pt]
&\frac{\partial\,V^{(n)}_{k,\alpha}}{\partial\,t}\rightharpoonup \frac{\partial\,V_{k,\alpha}}{\partial\,t}\qquad\text{as $n\to\infty$, in $L^2(0,T;L^{p}(\mathcal{H}_w,\mu))$ (up to a subsequence).}\label{cv-gen06}
\end{align}
\end{theorem}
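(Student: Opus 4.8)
The plan is to adapt the proof of Theorem~\ref{cnv1} essentially line by line, replacing the Yosida convergence of Proposition~\ref{conv-yos} by the finite-dimensional convergence of Proposition~\ref{c-1}, and the discount estimate of Lemma~\ref{bound-discount} by its uniform-in-$(\alpha,n)$ counterpart Lemma~\ref{bound-discount-n}. Continuity of $(t,h)\mapsto V^{(n)}_{k,\alpha}(t,h^{(n)})$ follows verbatim from the arguments used for Theorem~\ref{reg-v-fun}, since $V^{(n)}_{k,\alpha}$ inherits the bounds \eqref{bdd-V}, \eqref{lip-V} and \eqref{lip-Vt} with constants that, thanks to Lemma~\ref{bound-discount-n}, do not depend on $\alpha$ or $n$. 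A pleasant feature of this step, in contrast with the Yosida approximation, is that no separate Arzel\`a--Ascoli argument as in Theorem~\ref{ascoli} is required: Proposition~\ref{c-1} already delivers convergence uniform on compact subsets, which feeds directly into \eqref{cv-3}.

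For \eqref{cv-3} I would first record the crucial observation that $\Psi^{(n)}_k$ agrees with $\Psi_k$ along the finite-dimensional process. Indeed $r^{(\alpha)h;n}_\tau\in\mathcal{H}^{(n)}_w$, so $P_nr^{(\alpha)h;n}_\tau=r^{(\alpha)h;n}_\tau$ and hence $\Psi^{(n)}_k(\tau,r^{(\alpha)h;n}_\tau)=\Psi_k(\tau,r^{(\alpha)h;n}_\tau)$; the mismatch between the two gain functions therefore disappears and comparing \eqref{OS3} with \eqref{OS2} amounts to comparing the \emph{same} gain $\Psi_k$ along the two processes $r^{(\alpha)h;n}$ and $r^{(\alpha)h}$, starting respectively at $h^{(n)}=P_nh$ and at $h$. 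Proceeding as in Theorem~\ref{cnv1}, that is, bounding the difference of the discounted payoffs by the Lipschitz estimate \eqref{lip-psi1} on $\Psi_k$ together with the difference of the two stochastic discount factors (the latter controlled through the injection \eqref{injection}), and then applying Cauchy--Schwarz with the $n$-uniform discount bound of Lemma~\ref{bound-discount-n}, I obtain
\begin{equation*}
|V^{(n)}_{k,\alpha}(t,h^{(n)})-V_{k,\alpha}(t,h)|\le 2\,L_V\,e^{\frac{\beta'}{2}\|h\|_w}\,\mathbb{E}\Big\{\sup_{0\le s\le T}\big\|r^{(\alpha)h;n}_s-r^{(\alpha)h}_s\big\|^2_w\Big\}^{\frac12},
\end{equation*}
uniformly in $t\in[0,T]$ and in $\alpha>0$ by time-homogeneity (cf.~Remark~\ref{unif}). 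On a compact $\mathcal{K}\subset\mathcal{H}_w$ the prefactor $e^{\frac{\beta'}{2}\|h\|_w}$ is bounded, while the expectation on the right tends to $0$ uniformly on $\mathcal{K}$ by Proposition~\ref{c-1}; this yields \eqref{cv-3}.

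The three remaining convergences follow the template of Theorem~\ref{cnv1}. For \eqref{cv-gen04} I would combine the pointwise convergence furnished by the displayed estimate with the dominated convergence theorem, the dominating function being $K\gamma'e^{\beta'\|h\|_w}$ from the uniform bound \eqref{bdd-V}, which is $\mu$-integrable together with any power $\|h\|^p_w$ (as recorded before Corollary~\ref{arto}). For \eqref{cv-gen05} and \eqref{cv-gen06} I would use that \eqref{lip-V} and \eqref{lip-Vt} hold for $V^{(n)}_{k,\alpha}$ with constants independent of $\alpha,n$, so that $(V^{(n)}_{k,\alpha})_n$ and $(\partial_tV^{(n)}_{k,\alpha})_n$ are bounded in $L^p(0,T;W^{1,p}(\mathcal{H}_w,\mu))$ and in $L^2(0,T;L^p(\mathcal{H}_w,\mu))$ respectively (cf.~Corollary~\ref{arto}); extracting weakly convergent subsequences and identifying the limits through the strong convergence \eqref{cv-gen04} and uniqueness of weak limits then gives \eqref{cv-gen05} and \eqref{cv-gen06}.

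The only genuinely delicate point is the difference estimate underlying \eqref{cv-3}. Its subtlety is twofold: the two processes live in different spaces and start from different data ($h^{(n)}$ versus $h$), and the discount factors must be controlled by an exponential bound that does not deteriorate as $n\to\infty$. Both difficulties are already absorbed by the available tools. Proposition~\ref{c-1} is precisely the statement that the finite-dimensional process converges to the Yosida one, with the initial-data discrepancy $P_nh\to h$ built in and the convergence uniform on compacts, while Lemma~\ref{bound-discount-n} supplies the $n$-uniform discount bound — note that here the condition $\sqrt{n}\,\epsilon_n\to0$ of \eqref{epsilon} is exactly what prevents the extra independent noises from spoiling that bound. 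Granted these two ingredients, the estimate is a routine adaptation of the computation in Theorem~\ref{reg-v-fun}.
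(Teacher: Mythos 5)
Your proposal is correct and follows essentially the same route as the paper, which likewise proves continuity via the arguments of Theorem \ref{reg-v-fun}, deduces \eqref{cv-3} from the uniform-on-compacts convergence of Proposition \ref{c-1} (together with the $(\alpha,n)$-uniform discount bound of Lemma \ref{bound-discount-n}), obtains \eqref{cv-gen04} by dominated convergence as for \eqref{cv-gen01}, and gets \eqref{cv-gen05}--\eqref{cv-gen06} from the uniform bounds \eqref{lip-V}, \eqref{lip-Vt} via weak compactness and identification of the limit. Your explicit observations --- that $\Psi^{(n)}_k=\Psi_k$ along the trajectories of $r^{(\alpha)h;n}$, that the initial-data mismatch $h^{(n)}=P_nh$ is already absorbed into Proposition \ref{c-1}, and that no Arzel\`a--Ascoli step analogous to Theorem \ref{ascoli} is needed --- are all consistent with, and usefully flesh out, the paper's compressed proof.
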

\begin{proof}
The proof of \eqref{cv-gen04} follows along the same lines as the proof of \eqref{cv-gen01}. Similarly the uniform convergence in Proposition \ref{c-1} implies \eqref{cv-3}. As in Theorem \ref{cnv1} weak convergences \eqref{cv-gen05} and \eqref{cv-gen06} follow from \eqref{lip-V} and \eqref{lip-Vt}. Finally, continuity is obtained by means of arguments as those used in the proof of Theorem \ref{reg-v-fun}.
\end{proof}

Uniform integrability conditions hold for $V^{(n)}_{k,\alpha}$ and $V_{k,\alpha}$ as in Proposition \ref{unif-int} for $V_k$. For fixed $k\in\mathbb{N}$ set
\begin{align}
X^{t,h}_{\alpha}(\sigma):=e^{-\int^{\sigma}_t{r^{(\alpha)t,h}_u(0)du}}V_{k,\alpha}(\sigma,
r^{(\alpha)t,h}_{\sigma}),\qquad\alpha>0,
\end{align}
and for fixed $k\in\mathbb{N}$ and $\alpha>0$ set
\begin{align}
Z^{t,h}_{n}(\tau):=e^{-\int^{\tau}_t{r^{(\alpha)t,h;n}_u(0)du}}V^{(n)}_{k,\alpha}(\tau,r^{(\alpha)t,h;n}_{\tau}),\qquad n\in\mathbb{N},
\end{align}
where $\sigma$ and $\tau$ are arbitrary stopping times in $[t,T]$ (cf.~\eqref{family} and \eqref{family2}).
\begin{prop}\label{unif-intk-na}
For any $(t,h)\in[0,T]\times\mathcal{H}_w$, for any family $(\sigma_\alpha)_{\alpha>0}$ and any sequence $(\tau_n)_{n\in\mathbb{N}}$ of stopping times in $[t,T]$, the family $\big(X^{t,h}_{\alpha}(\sigma_\alpha)\big)_{\alpha>0}$ and the sequence $\big(Z^{t,h}_n(\tau_n)\big)_{n\in\mathbb{N}}$ are uniformly integrable.
\end{prop}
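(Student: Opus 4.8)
The plan is to mimic the proof of Proposition \ref{unif-int}, exploiting that both $X^{t,h}_\alpha(\sigma_\alpha)$ and $Z^{t,h}_n(\tau_n)$ are nonnegative; it therefore suffices to bound their $L^2(\Omega,\mathbb{P})$-norms uniformly in $\alpha$ (resp.\ in $n$), since boundedness in $L^2$ implies uniform integrability.

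First I would invoke the dynamic programming/strong Markov structure of the time-homogeneous processes $r^{(\alpha)t,h}$ and $r^{(\alpha)t,h;n}$ to represent the value functions evaluated at the respective stopping times as essential suprema of conditional expectations, exactly as for $V$ in Proposition \ref{unif-int}. Writing $r:=r^{(\alpha)t,h}$ and folding the $\mathcal{F}_{\sigma_\alpha}$-measurable discount factor $e^{-\int^{\sigma_\alpha}_t r_u(0)du}$ into the conditional expectation,
\begin{align*}
X^{t,h}_\alpha(\sigma_\alpha)=\esssup_{\sigma_\alpha\le\rho\le T}\mathbb{E}\Big\{e^{-\int^{\rho}_t r_u(0)du}\,\Psi_k(\rho,r_\rho)\,\Big|\,\mathcal{F}_{\sigma_\alpha}\Big\},
\end{align*}
and the analogous identity holds for $Z^{t,h}_n(\tau_n)$ with $\Psi^{(n)}_k$ and $r^{(\alpha)t,h;n}$ in place of $\Psi_k$ and $r$.

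Next I would use the uniform bounds $0\le\Psi_k\le K$ and $0\le\Psi^{(n)}_k\le K$, which follow from \eqref{aste} and from $\Psi^{(n)}_k(t,h)=\Psi_k(t,P_nh)$. Bounding each integrand by $K\sup_{t\le v\le T}e^{-\int^v_t r_u(0)du}$, a quantity independent of $\rho$, the essential supremum is dominated by $K\,\mathbb{E}\{\sup_{t\le v\le T}e^{-\int^v_t r_u(0)du}\mid\mathcal{F}_{\sigma_\alpha}\}$; squaring and applying conditional Jensen together with the tower property yields
\begin{align*}
\mathbb{E}\big\{|X^{t,h}_\alpha(\sigma_\alpha)|^2\big\}\le K^2\,\mathbb{E}\Big\{\sup_{t\le v\le T}e^{-2\int^v_t r^{(\alpha)t,h}_u(0)du}\Big\},
\end{align*}
and the corresponding inequality for $\mathbb{E}\{|Z^{t,h}_n(\tau_n)|^2\}$ with $r^{(\alpha)t,h;n}$ replacing $r^{(\alpha)t,h}$. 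For the sequence $Z$ the right-hand side is bounded by $\gamma' K^2 e^{\beta'\|h\|_w}$ uniformly in $n$ (and $\alpha$) directly by Lemma \ref{bound-discount-n} taken with $p=2$, which already settles its uniform integrability.

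The main obstacle is the family $X$, for which I need the discount estimate of Lemma \ref{bound-discount} for the Yosida process $r^{(\alpha)t,h}$ \emph{uniformly in} $\alpha$. A direct It\^o--Gronwall repetition of Lemma \ref{bound-discount} is awkward, because evaluating \eqref{SDE-yos} at $x=0$ produces a drift term $A_\alpha r^{(\alpha)t,h}_s(0)$ whose operator norm is not controlled uniformly in $\alpha$. I would instead pass to the limit $n\to\infty$ in Lemma \ref{bound-discount-n}: by Proposition \ref{c-1} and Remark \ref{unif}, $r^{(\alpha)t,h;n}\to r^{(\alpha)t,h}$ in the $\sup$-norm in $L^2$, so along a subsequence the integrated spot rates $\int^v_t r^{(\alpha)t,h;n}_u(0)du$ converge uniformly in $v$ (using the injection \eqref{injection} to control the evaluation at $x=0$), and hence so does $\sup_{t\le v\le T}e^{-2\int^v_t r^{(\alpha)t,h;n}_u(0)du}$. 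Fatou's lemma then transfers the uniform-in-$(\alpha,n)$ bound of Lemma \ref{bound-discount-n} to $\mathbb{E}\{\sup_{t\le v\le T}e^{-2\int^v_t r^{(\alpha)t,h}_u(0)du}\}\le\gamma' e^{\beta'\|h\|_w}$, uniformly in $\alpha$. Combining this with the $L^2$-estimates above gives $\sup_{\alpha>0}\mathbb{E}\{|X^{t,h}_\alpha(\sigma_\alpha)|^2\}\le\gamma' K^2 e^{\beta'\|h\|_w}<\infty$ and $\sup_{n\in\mathbb{N}}\mathbb{E}\{|Z^{t,h}_n(\tau_n)|^2\}<\infty$, whence both families are uniformly integrable.
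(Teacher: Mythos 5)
Your proof is correct, and since the paper states this proposition without any printed proof (it is meant to follow the scheme of Proposition \ref{unif-int}), the comparison is against the paper's implied argument. Your treatment of $Z^{t,h}_n(\tau_n)$ --- the essential-supremum representation, the bound $0\le\Psi^{(n)}_k\le K$, conditional Jensen plus the tower property, and Lemma \ref{bound-discount-n} with $p=2$ --- is exactly what the paper intends. Where you genuinely diverge is in the family $X^{t,h}_\alpha(\sigma_\alpha)$: the paper implicitly asserts (cf.\ the remark after \eqref{OS2} that $V_{k,\alpha}$ satisfies \eqref{bdd-V}--\eqref{lip-Vt} ``with the same constants'') that the discount estimate of Lemma \ref{bound-discount} extends verbatim to $r^{(\alpha)t,h}$ uniformly in $\alpha$, the intended justification being a rerun of that proof in the \emph{mild} formulation $r^{(\alpha)h}_t=e^{tA_\alpha}h+\int_0^t e^{(t-s)A_\alpha}F_\sigma(r^{(\alpha)h}_s)ds+\int_0^t e^{(t-s)A_\alpha}\sigma(r^{(\alpha)h}_s)dW^0_s$, where the Yosida semigroups $e^{tA_\alpha}$ are bounded on $[0,T]$ uniformly in $\alpha$ by standard Hille--Yosida theory; so your objection that the drift $A_\alpha r^{(\alpha)}_s(0)$ is uncontrolled applies only to the integrated strong form, not to the route the paper has in mind. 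Your alternative --- extracting an a.s.\ subsequence from Proposition \ref{c-1} and Remark \ref{unif}, using \eqref{injection} to get uniform-in-$v$ convergence of the integrated spot rates, and transferring the $(\alpha,n)$-uniform bound of Lemma \ref{bound-discount-n} to $\mathbb{E}\big\{\sup_{t\le v\le T}e^{-2\int^v_t r^{(\alpha)t,h}_u(0)du}\big\}$ by Fatou --- is a valid and self-contained substitute that avoids any discussion of uniform bounds for the approximating semigroups; the minor cost is that it rests on the convergence machinery of Proposition \ref{c-1}, whereas the paper's route gives the bound for $r^{(\alpha)}$ directly and independently of the finite-dimensional reduction. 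One small point to make explicit if you write this up: Lemma \ref{bound-discount-n} yields the bound $\gamma'e^{\beta'\|h\|_w}$ even though $r^{(\alpha)t,h;n}$ starts from $P_nh$, because $\|P_nh\|_w\le\|h\|_w$.
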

\subsection{A variational inequality for $V_k$}
In order to obtain a well-posed infinite dimensional variational inequality for $V_k$ we need to introduce the two assumptions below on the diffusion coefficient $\sigma$ and on the trace class operator $Q$ (cf.~Definition \ref{Q-op}) which will be standing assumptions in the rest of the paper.
\begin{ass}\label{ass-sigma}
The map $\sigma:\mathcal{H}_w\to\mathcal{H}^0_w$ is such that
\begin{itemize}
\item [\textbf{(1)}] $\sigma(h)\in Q(\mathcal{H}_w)$, $\forall h\in\mathcal{H}_w$; i.e., there exists $\vartheta:\mathcal{H}_w\to\mathcal{H}^0_w$ such that $\sigma(h)= Q\vartheta(h)$.
\item [\textbf{(2)}] $\vartheta$ is such that $h\mapsto D\vartheta(h)$ is continuous and bounded on $\mathcal{H}_w$ (cf.~\eqref{Dfried} and \eqref{Dfried02}).
\end{itemize}
\end{ass}

Denote $A^*$ the adjoint operator of $A$. We make the following assumption linking the operators $Q$ and $A$.
\begin{ass}\label{ass-Q}
The operator $Q$ of Definition \ref{Q-op} is such that
\begin{equation}\label{rate}
Tr\big[\,A\,Q\,A^*\,\big]<\infty.
\end{equation}
\end{ass}
\noindent Although the choice of $Q$ is now subjected to Assumption \ref{ass-Q}, Assumption \ref{ass-sigma} still allows enough generality for the volatility structure. For instance all deterministic volatility structures $\sigma(r_t)(x)=\sigma(x)$ as well as all finite dimensional volatility structures fit into this setting under a suitable choice of $Q$. Also, whenever {there exists $\ell:\mathcal{H}_w\to \mathbb{R}$ such that $\lim_{x\to\infty}Q\vartheta(h)(x)=\ell(h)\neq 0$} (the limit always exists), one finds a volatility $\sigma$ satisfying Assumption \ref{ass-sigma} with $\tilde{\vartheta}(h)(\,\cdot\,)=\vartheta(h)(\,\cdot\,)-\ell(h) Q^{-1}$.

We start by characterizing $V_k$ as a solution of a suitable infinite dimensional obstacle problem.
For that we initially characterize $V^{(n)}_{k,\alpha}$ as a solution of a variational problem on $[0,T]\times\mathbb{R}^n$ by means of a slight modification of standard results; then, we take limits as $n\to\infty$ and $\alpha\to\infty$ in the finite-dimensional variational problem and use Theorems \ref{cnv1} and \ref{cnv2} to obtain a variational inequality for $V_k$.
This methodology was developed in \cite{Ch-DeA12a} but in a different setting and here we omit some of the details and recall results contained in \cite{Ch-DeA12a} when possible.

We start by introducing the infinitesimal generator $\mathcal{L}$ of the diffusion $r$. Let $C^2_{b,F}(\mathcal{H}_w;\mathbb{R})$ be the set of bounded continuous functions which are twice continuously differentiable in the Frech\'et sense with bounded derivatives. Then for every $g\in C^2_{b,F}(\mathcal{H}_w;\mathbb{R})$ with $Dg$ taking values in $D(A^*)$ one has
\begin{equation}\label{ellal}
\mathcal{L}g(h)=\frac{1}{2}Tr\left[\sigma\sigma^*(h)D^2g(h)\right]+\langle h,A^* Dg(h)\rangle_w+\langle F_\sigma(h),Dg(h)\rangle_w,\hspace{+20pt}\textit{for}\,\,h\in\mathcal{H}_w.
\end{equation}
Take $1<p<\infty$, let $p'$ be such that $\frac{1}{p}+\frac{1}{p'}=1$ and define the space
\begin{equation}\label{set}
\mathcal{V}^p:=\{v\,|\,v\in L^{2p}(\mathcal{H}_w,\mu)\:\textrm{and}\:\:D v\in L^{2p'}(\mathcal{H}_w,\mu;\mathcal{H}_w)\},
\end{equation}
endowed with the norm
\begin{eqnarray}
\lnr v\rnr_{p}:=\|v\|_{L^{2p}(\mathcal{H}_w,\mu)}+\|D v\|_{L^{2p'}(\mathcal{H}_w,\mu;\mathcal{H}_w)}.
\end{eqnarray}
Then $\big(\mathcal{V}^p,\lnr\,\cdot\,\rnr_p\big)$ is a separable Banach space. Since $D(A^*)$ is dense in $\mathcal{H}_w$, the set
\begin{equation}\label{dense}
\mathcal{E}_A(\mathcal{H}_w):=\textrm{span}\{\mathscr{R}e(\phi_{g}),\,\mathscr{I}m
(\phi_{g}),
\:\phi_{g}(h)=e^{i\langle g,h\rangle_{w}},\,g\in D(A^*)\}
\end{equation}
is dense in $\mathcal{V}^p$ (cf.~\cite{DaPr}, Chapter 10 and \cite{DaPr-Zab04}, Chapter 9).

In the spirit of \cite{Ben-Lio82} we will associate the second order differential operator in \eqref{ellal} to a bilinear form on $\mathcal{V}^p$. Let us first analyze the second term on the right-hand side of \eqref{ellal}. For every $u\in\mathcal{E}_A(\mathcal{H}_w)$ it is easy to check that $A^*Du\in L^p(\mathcal{H}_w,\mu;\mathcal{H}_w)$, $1<p<\infty$. Hence, for $v\in \mathcal{V}^p$, we define
\begin{eqnarray}\label{rab3}
L_{A}(v,u):=\int_{\mathcal{H}_w}\langle h,A^*D u\rangle_{w} v \,\mu(dh),\quad\textrm{for}\:\:u\in\mathcal{E}_A(\mathcal{H}_w).
\end{eqnarray}
By Assumption \ref{ass-Q} it was shown in \cite{Ch-DeA12a}, Section 4.3.2 (see eq.~(4.93) therein) that
\begin{eqnarray}\label{ext}
|L_{A}(v,u)|\leq Tr\big[\,A\,Q\,A^*\,\big] \lnr v\rnr_p\,\lnr u\rnr_p,\quad u\in\mathcal{E}_A(\mathcal{H}_w).
\end{eqnarray}
As $\mathcal{E}_A(\mathcal{H}_w)$ is dense in $\mathcal{V}^p$, $L_{A}(v,\cdot)$ is extended to the whole space $\mathcal{V}^p$ 
and the extended functional is denoted by $\bar{L}_{A}(v,\cdot)$.

For $u,v\in \mathcal{V}^p$ define the bilinear form
\begin{align}\label{bl-form}
a_{\mu}(u,v):=&\frac{1}{2}\int_{\mathcal{H}_w}{\langle\sigma\sigma^{*}Du,Dv\rangle_{w} \mu(dh)}+\int_{\mathcal{H}_w}{\langle\hat{C},Du\rangle_{w} v \, \mu(dh)}\nonumber\\
&-\bar{L}_{A}(v,u)+\int_{\mathcal{H}_w}{h(0)u\,v\mu(dh)},
\end{align}
where $\hat{C}:\mathcal{H}_w\to \mathcal{H}_w$ is defined by
\begin{align*}
\hat{C}(h):=\frac{1}{2}\big(Tr[D\sigma]\sigma+D\sigma\cdot\sigma-2F_\sigma\big)(h)-\frac{1}{2}\sigma\sigma^*(h)Q^{-1}h
\end{align*}
with $D\sigma\cdot\sigma$ denoting the action of $D\sigma\in\mathcal{L}(\mathcal{H}_w)$ on the vector $\sigma\in\mathcal{H}_w$.

The following estimate holds.
\begin{theorem}\label{uni-est}
For every $1< p<\infty$ there exists a constant $C_{\mu,\vartheta,p}>0$, depending on $\mu$, $p$ and the bounds of $\vartheta$ in Assumption \ref{ass-sigma}, such that
\begin{equation}\label{univ}
\int_0^T{\hspace{-5pt}|a_\mu(u(t),v(t))| dt}\leq C_{\mu,\vartheta,p}\left(\int_0^T{\lnr u(t)\rnr^2_p}dt\right)^\frac{1}{2}\left(\int_0^T{\lnr v(t)\rnr^2_p}dt\right)^\frac{1}{2}
\end{equation}
for all $u,v\in L^2(0,T;\mathcal{V}^p)$.
\end{theorem}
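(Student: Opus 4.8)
The plan is to prove \eqref{univ} by first establishing the pointwise-in-time bound
\begin{equation*}
|a_\mu(u(t),v(t))|\le C_{\mu,\vartheta,p}\,\lnr u(t)\rnr_p\,\lnr v(t)\rnr_p\qquad\text{for a.e.\ }t\in[0,T],
\end{equation*}
and then integrating in $t$ and applying the Cauchy--Schwarz inequality on $[0,T]$, which immediately produces the two $L^2(0,T;\mathcal{V}^p)$-norms on the right-hand side. Thus the whole problem reduces to controlling each of the four terms in the definition \eqref{bl-form} of $a_\mu$ by $\lnr u\rnr_p\lnr v\rnr_p$, uniformly in $t$. Throughout I would use repeatedly that $\mu$ is a probability measure, so that $L^q(\mathcal{H}_w,\mu)\hookrightarrow L^r(\mathcal{H}_w,\mu)$ contractively whenever $q\ge r$, together with the fact that the Gaussian measure $\mu$ integrates every function of at most polynomial growth in $\|h\|_w$.

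Three of the four terms I expect to be routine. For the second-order term I would use that $\sigma\sigma^*$ is a bounded operator-valued map (by Assumption \ref{ass-sigma0}, $\|\sigma(h)\|_w<C_\sigma$), so that $|\langle\sigma\sigma^* Du,Dv\rangle_w|\le C_\sigma^2\|Du\|_w\|Dv\|_w$ pointwise in $h$; integrating and using Cauchy--Schwarz together with the embedding $L^{2p'}\hookrightarrow L^2$ bounds it by $\tfrac{1}{2}C_\sigma^2\|Du\|_{L^{2p'}}\|Dv\|_{L^{2p'}}\le\tfrac{1}{2}C_\sigma^2\lnr u\rnr_p\lnr v\rnr_p$. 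The term $\bar L_A(v,u)$ is already controlled by \eqref{ext} and Assumption \ref{ass-Q}, which give $|\bar L_A(v,u)|\le Tr[AQA^*]\lnr v\rnr_p\lnr u\rnr_p$ after extension by density from $\mathcal{E}_A(\mathcal{H}_w)$ to all of $\mathcal{V}^p$. Finally, for the last term $\int_{\mathcal{H}_w}h(0)\,u\,v\,\mu(dh)$ I would note that $h\mapsto h(0)$ is a continuous linear functional on $\mathcal{H}_w$ (indeed $|h(0)|\le\|h\|_w$), hence a centered Gaussian random variable under $\mu$ and in particular an element of $L^{p'}(\mathcal{H}_w,\mu)$; the generalized H\"older inequality with exponents $(p',2p,2p)$, valid since $\tfrac1{p'}+\tfrac1{2p}+\tfrac1{2p}=1$, then gives the required bound with a constant depending only on $\mu$ and $p$.

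The genuinely delicate term is the drift term $\int_{\mathcal{H}_w}\langle\hat C,Du\rangle_w\,v\,\mu(dh)$, and the main obstacle is the summand $-\tfrac12\sigma\sigma^*(h)Q^{-1}h$ inside $\hat C$, which a priori involves the \emph{unbounded} operator $Q^{-1}$. Here I would exploit Assumption \ref{ass-sigma}, namely $\sigma=Q\vartheta$ with $D\vartheta$ bounded. In the scalar-noise setting $\sigma\sigma^*(h)$ is the rank-one operator $\langle\sigma(h),\,\cdot\,\rangle_w\,\sigma(h)$, so using self-adjointness of $Q$ one obtains the cancellation $\sigma\sigma^*(h)Q^{-1}h=\langle Q\vartheta(h),Q^{-1}h\rangle_w\,\sigma(h)=\langle\vartheta(h),h\rangle_w\,\sigma(h)$, which no longer contains $Q^{-1}$. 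Since $D\vartheta$ is bounded, $\vartheta$ has at most linear growth $\|\vartheta(h)\|_w\le C(1+\|h\|_w)$, while $\|\sigma(h)\|_w\le C_\sigma$; together with the boundedness of $Tr[D\sigma]=Tr[QD\vartheta]$, of $D\sigma\cdot\sigma$ and of $F_\sigma$ (the latter by \eqref{b-F}), this shows that $\|\hat C(h)\|_w$ grows at most polynomially in $\|h\|_w$. Consequently $\hat C\in L^2(\mathcal{H}_w,\mu;\mathcal{H}_w)$ by Gaussian integrability, and a generalized H\"older inequality with exponents $(2,2p',2p)$ --- valid since $\tfrac12+\tfrac1{2p'}+\tfrac1{2p}=1$ --- bounds this term by $\|\hat C\|_{L^2}\|Du\|_{L^{2p'}}\|v\|_{L^{2p}}\le C\lnr u\rnr_p\lnr v\rnr_p$, with a constant depending on $\mu$, $p$ and the bound on $\vartheta$ from Assumption \ref{ass-sigma}. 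Collecting the four estimates yields the pointwise bound, and the final time-integration completes the proof; the only real work lies in the $\hat C$-term and in checking that all constants can be chosen independently of $t$.
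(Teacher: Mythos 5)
Your proof is correct and, at bottom, implements the same strategy as the paper's, but it is considerably more self-contained: the paper disposes of every term in \eqref{bl-form} except the new discount term $\int_{\mathcal{H}_w}h(0)\,u\,v\,\mu(dh)$ by citing \cite{Ch-DeA12a}, Theorem 4.9 (whose bound is uniform in $(n,\alpha)$), whereas you reconstruct those estimates explicitly, including the one genuinely delicate point: the interpretation of $\sigma\sigma^*(h)Q^{-1}h$ through the rank-one identity $\sigma\sigma^*(h)Q^{-1}h=\langle\vartheta(h),h\rangle_w\,\sigma(h)$, which is exactly the cancellation (via $\sigma=Q\vartheta$ from Assumption \ref{ass-sigma} and self-adjointness of $Q$) that makes $\hat{C}$ well defined $\mu$-a.e.\ and of polynomial growth in $\|h\|_w$, hence in every $L^q(\mathcal{H}_w,\mu;\mathcal{H}_w)$ by Gaussian integrability; your generalized H\"older splits with exponents $(2,2p',2p)$ for the $\hat{C}$-term and the contractive embeddings $L^{2p'}\hookrightarrow L^2$ for the second-order term, together with \eqref{ext} for $\bar{L}_A$, are precisely what the cited theorem's proof uses. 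Where you genuinely diverge from the paper's written argument is the discount term itself: the paper uses $|h(0)|\le C\|h\|_w$ (via \eqref{injection}) and H\"older to bound it by $C\big(\int_{\mathcal{H}_w}\|h\|_w^2|u|^2\mu(dh)\big)^{1/2}\|v(t)\|_{L^2(\mathcal{H}_w,\mu)}$, then needs a further H\"older (absorbing the weight $\|h\|^2_w$ using $u\in L^{2p}$ and Gaussian moments) plus the embedding $\mathcal{V}^p\hookrightarrow L^2(\mathcal{H}_w,\mu)$, while your single three-exponent H\"older $(p',2p,2p)$, exploiting that $h\mapsto h(0)$ is a centered Gaussian variable under $\mu$ (in fact $|h(0)|\le\|h\|_w$ already by \eqref{def-norm}, so the injection is not even needed), reaches the same bound in one step --- a mild simplification. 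The only implicit step worth making explicit is that Assumption \ref{ass-sigma}(2) (bounded, continuous $D\vartheta$ in the Friedrichs sense) yields the linear growth $\|\vartheta(h)\|_w\le C(1+\|h\|_w)$; this is standard but should be recorded, since boundedness of $\sigma=Q\vartheta$ alone does not control $\vartheta$ because $Q^{-1}$ is unbounded. Your final assembly --- pointwise-in-$t$ bound, then Cauchy--Schwarz on $[0,T]$, with all constants independent of $t$ --- is exactly how \eqref{univ} is obtained in the paper.
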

\begin{proof}
Except for the last term in \eqref{bl-form} the proof is analogous to the one of \cite{Ch-DeA12a}, Theorem 4.9 (we remark that the bound in \cite{Ch-DeA12a} is uniform in $(n,\alpha)\in\mathbb{N}\times\mathbb{R}_+$) and it follows from Assumption \ref{ass-sigma} and bounds \eqref{b-F} and \eqref{ext}. As for the last term, from \eqref{injection} and H\"older inequality we obtain
\begin{align*}
\Big|\int_0^T\int_{\mathcal{H}_w}{h(0)u\,v\mu(dh)dt}\Big|&\leq C\hspace{-2pt}\int_0^T\hspace{-6pt}\Big[\Big(\int_{\mathcal{H}_w}{\hspace{-4pt}\|h\|^2_w|u(t,h)|^2\mu(dh)}\Big)
^{\frac{1}{2}}\big\|v(t)\big\|_{L^2(\mathcal{H}_w,\mu)}
\Big]dt
\end{align*}
and \eqref{univ} easily follows by another application of H\"older's inequality and by observing that $\mathcal{V}^p\hookrightarrow L^2(\mathcal{H}_w,\mu)$.
\end{proof}
\begin{remark}\label{buio}
For functions $u,v$ in $C^{2}_b(\mathcal{H}_w;\mathbb{R})$ in the Friedrichs' sense (cf.~\eqref{Dfried} and \eqref{Dfried03})  with $Du$ taking values in $D(A^*)$, \eqref{bl-form} is simply obtained by
\begin{align}\label{bl-form2}
a_{\mu}(u,v):=-\int_{\mathcal{H}_w}{\Big(\mathcal{L}u-h(0)u\Big)\,v\,\mu(dh)}.
\end{align}
In fact, for $u_n:=u\circ P_n$, $v_n:=v\circ P_n$ defined on $n$-dimensional subspaces of $\mathcal{H}_w$ and $\sigma^{(n)}$, $F^{(n)}_\sigma$ as in \eqref{SDE-n}, \eqref{bl-form2} follows by Green's formula and \eqref{ellal}. Then, by taking the limit as $n\to\infty$ and using dominated convergence one finds that \eqref{bl-form2} holds in general.
\end{remark}

Denote by $(\cdot,\cdot)_\mu$ the scalar product in $L^2(\mathcal{H}_w,\mu)$ and define $F_k(\,\cdot\,)(t)\in \mathcal{V}^{p\,*}$, $t\in[0,T]$ by
\begin{equation}\label{ruiz0}
\hspace{-5pt}F_k(v)(t)\hspace{-2pt}:=\hspace{-4pt}\Big(\frac{\partial\Psi_k}{\partial t}(t)\hspace{-1pt}+\hspace{-1pt}\frac{1}{2}Tr\big[\sigma\sigma^*D^2\Psi_k(t)\big]\hspace{-2pt}+\hspace{-1pt}\langle F_\sigma,D\Psi_k(t)\rangle_w-h(0)\Psi_k(t),v\Big)_\mu \hspace{-2pt}+\bar{L}_{A}(v,\Psi_k(t)),
\end{equation}
for all $v\in \mathcal{V}^p$. Take $1< p< \infty$ and introduce the closed, convex set
\begin{equation}\label{ccs2}
\mathcal{K}^p_{\mu}:=\big\{w:\,w\in \mathcal{V}^p\:\:\text{and}\:\:w\geq0\:\:\mu\textrm{-a.e.}\big\}.
\end{equation}

The next theorem was proved in \cite{Ch-DeA12a} (cf.~Theorem 4.22 and Theorem 4.24) and it was the main result of that paper. The extension to the present setting (which includes an unbounded stochastic discount factor) follows from the convergence results in Propositions \ref{conv-yos} and \ref{c-1} for the approximating diffusions $r^{(\alpha)t,h}$ and $r^{(\alpha)t,h;n}$, from the convergence results in Theorems \ref{cnv1}, \ref{ascoli} and \ref{cnv2} for the value functions $V_{k,\alpha}^{(n)}$ and $V_{k,\alpha}$, and from Theorem \ref{uni-est}. 

\begin{theorem}\label{inf-yos-vi}
For every $1< p<\infty$ the function $u_k:=V_k-\Psi_k$ is a solution of the variational problem
\begin{eqnarray}\label{d-4}
\left\{
\begin{array}{l}
u\in L^2(0,T;\mathcal{V}^p)\cap C([0,T]\times\mathcal{H}_w);\:\:\:\frac{\partial\,u}{\partial\,t}\in L^2(0,T;L^2(\mathcal{H}_w,\mu));\\[+10pt]
u(T,h)=0,\:h\in\mathcal{H}_w;\:\:\:\:u(t,h)\geq 0,\: (t,h)\in[0,T]\times\mathcal{H}_w;\\[+10pt]
-\big(\frac{\partial u}{\partial t}(t),v-u(t)\big)_{\mu}+a_\mu(u(t),v-u(t))-F_k\big(v-u(\,\cdot\,)\big)(t)\geq0\\[+5pt]
\hspace{+220pt}\textrm{for a.e.~$t\in[0,T]$ and all $v\in \mathcal{K}^p_{\mu}$.}
\end{array}\right.
\end{eqnarray}
The stopping time
\begin{equation}\label{rosp}
\tau^*_{k}(t,h):=\inf\{s\geq t\,:\,V_k(s,r^{t,h}_s)=\Psi_k(s,r^{t,h}_s)\}\wedge T
\end{equation}
is optimal for $V_k$ in \eqref{sfinito},
and
\begin{align}\label{sbruff}
V_k(t,h)=\mathbb{E}\left\{e^{-\int^\tau_t{r^{t,h}_u(0)du}}V_k(\tau,r^{t,h}_\tau)\right\}\quad\text{for any stopping time $\tau\leq\tau^*_{k}(t,h)$}.
\end{align}
\end{theorem}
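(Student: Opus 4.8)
The plan is to obtain \eqref{d-4} by first establishing the corresponding variational inequality at the finite-dimensional, non-degenerate level for $V^{(n)}_{k,\alpha}$ of \eqref{OS3}, and then passing to the limit first in $n$ and afterwards in $\alpha$. For fixed $\alpha$ and $n$ the diffusion $r^{(\alpha)h;n}$ lives on $\mathbb{R}^n$ and, thanks to the additive noise $\epsilon_n\sum_{i=1}^n\varphi_i\,dW^i$ in \eqref{SDE-n}, its generator is uniformly elliptic; the regularized gain $\Psi^{(n)}_k$ is smooth and the discount $h(0)$ is an affine functional of the state. Hence classical finite-dimensional obstacle-problem theory, as used in \cite{Ch-DeA12a}, yields that $u^{(n)}_{k,\alpha}:=V^{(n)}_{k,\alpha}-\Psi^{(n)}_k$ is a strong solution of the finite-dimensional analogue of \eqref{d-4}, with bilinear form $a_{\mu_n}$ and source $F_k$ built from $\mathcal{L}$ as in \eqref{ellal}, \eqref{bl-form}, \eqref{bl-form2} and \eqref{ruiz0}. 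Writing this inequality in weak form against test functions $v\in\mathcal{K}^p_\mu$ and using Green's formula (cf.~Remark \ref{buio}) sets the stage for the passage to the limit.

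First I would let $n\to\infty$ with $\alpha$ fixed, then $\alpha\to\infty$. The strong convergences $V^{(n)}_{k,\alpha}\to V_{k,\alpha}$ and $V_{k,\alpha}\to V_k$ in $L^p(0,T;L^p(\mathcal{H}_w,\mu))$, together with the weak convergences of the gradients and of the time derivatives supplied by Theorems \ref{cnv1}, \ref{ascoli} and \ref{cnv2}, identify the limits. The continuity estimate of Theorem \ref{uni-est} makes $a_\mu(u(t),v)$ and $F_k(v)(t)$ continuous in their linear arguments, so the linear parts of the inequality pass to the limit directly. The delicate analytic point is the quadratic term $a_\mu(u(t),u(t))$ occurring inside $a_\mu(u(t),v-u(t))$: to preserve the sense of the inequality one needs $\liminf a_\mu(u^{(n)},u^{(n)})\ge a_\mu(u,u)$, which I would obtain from the weak lower semicontinuity of the coercive symmetric part of $a_\mu$ together with the convergence of the lower-order non-symmetric contributions, exactly as in \cite{Ch-DeA12a}. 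The regularity $u_k\in L^2(0,T;\mathcal{V}^p)\cap C([0,T]\times\mathcal{H}_w)$ with $\partial_t u_k\in L^2(0,T;L^2(\mathcal{H}_w,\mu))$ follows from Corollary \ref{arto} and Proposition \ref{rem-cont-Vk}, while $u_k(T,\cdot)=0$ and $u_k\ge0$ are inherited from $V_k(T,\cdot)=\Psi_k(T,\cdot)$ and $V_k\ge\Psi_k$.

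For the stopping statements I would argue purely probabilistically. By dynamic programming the discounted value $Y^{t,h}_k(s)=e^{-\int_t^s r^{t,h}_u(0)du}V_k(s,r^{t,h}_s)$ (cf.~\eqref{family2}) is the Snell envelope of the discounted reward $e^{-\int_t^s r^{t,h}_u(0)du}\Psi_k(s,r^{t,h}_s)$, that is the smallest supermartingale dominating it. The continuity of $V_k$ and $\Psi_k$ (Propositions \ref{rem-cont-Vk} and \ref{mollify1}) and the uniform integrability of Proposition \ref{unif-intk} place us within the scope of the general optimal stopping theory for continuous, uniformly integrable reward families (cf.~\cite{Pes-Shir}). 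This gives that $\tau^*_k$ of \eqref{rosp}, being the first time the Snell envelope meets the reward, equivalently the first time $V_k=\Psi_k$, is optimal in \eqref{sfinito}, and that $Y^{t,h}_k(\cdot)$ is a martingale on $[t,\tau^*_k(t,h)]$; optional sampling at any $\tau\le\tau^*_k(t,h)$ then yields \eqref{sbruff}. Optimality follows since $Y^{t,h}_k(\tau^*_k)$ equals the discounted reward at $\tau^*_k$, whence $V_k(t,h)=\mathbb{E}\{Y^{t,h}_k(\tau^*_k)\}$ is the payoff of the admissible rule $\tau^*_k$.

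The main obstacle I anticipate is not the abstract scheme, which is that of \cite{Ch-DeA12a}, but the control of the unbounded stochastic discount factor, the genuinely new feature here. It enters the bilinear form through $\int_{\mathcal{H}_w} h(0)\,u\,v\,\mu(dh)$ and the source $F_k$ through $-h(0)\Psi_k$; its boundedness in the relevant norms rests on the continuous injection \eqref{injection}, on the last estimate in Theorem \ref{uni-est}, and on the Gaussian integrability $\int_{\mathcal{H}_w}\|h\|_w^p e^{\lambda\|h\|_w}\mu(dh)<\infty$. Crucially, the exponential-moment bound of Lemma \ref{bound-discount}, made uniform in $\alpha$ and $n$ by Lemma \ref{bound-discount-n}, is precisely what makes the uniform integrability of Proposition \ref{unif-intk} hold and what keeps the $L^p$- and Sobolev-bounds of $V^{(n)}_{k,\alpha}$ and $V_{k,\alpha}$ uniform in the approximation parameters; without it neither the passage to the limit in the variational inequality nor the Snell-envelope argument would go through.
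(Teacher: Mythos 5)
You have the right global architecture (finite-dimensional regularized problems, limits in $n$ and $\alpha$, the continuity estimate of Theorem \ref{uni-est} to pass the bilinear form to the limit), but your first step contains a genuine gap. You claim that classical obstacle-problem theory applies directly to $V^{(n)}_{k,\alpha}$ on all of $\mathbb{R}^n$, on the grounds that the $\epsilon_n$-noise makes the generator uniformly elliptic and $\Psi^{(n)}_k$ is smooth. However, as you yourself note, the discount rate $h^{(n)}(0)$ is an \emph{unbounded} affine functional of the state; it enters both the zeroth-order coefficient of the operator and the source term $f^{(\alpha,n)}_k=\frac{\partial}{\partial t}\Psi^{(n)}_k+\mathcal{L}_{\alpha,n}\Psi^{(n)}_k-h^{(n)}(0)\Psi^{(n)}_k$, and the classical existence and $W^{2,p}$-regularity results you implicitly invoke (\cite{Fri82}, Theorem 8.2) hold on bounded domains with bounded coefficients. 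This is exactly why the paper inserts a further localization layer: it first solves the obstacle problem \eqref{obs-1} on balls $B_R\subset\mathbb{R}^n$, where the discount \emph{is} bounded, obtains the stopped representation \eqref{osbdd} and the optimality of \eqref{ost00} by a verification argument based on a generalized It\^o formula, and then takes a \emph{three}-parameter limit $R\to\infty$, $n\to\infty$, $\alpha\to\infty$ in that order, with additional convergence results for $V^{(n)}_{k,\alpha,R}\to V^{(n)}_{k,\alpha}$ (Step 3 of the paper's proof). Your two-parameter scheme skips $R$ entirely, so the existence of the strong solution at level $(\alpha,n)$ --- the starting point of your limiting argument --- is unjustified as written; the repair is precisely the paper's Steps 1--2. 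A minor related point: the symmetric principal part of $a_\mu$ is degenerate, not coercive; what your $\liminf$ argument actually needs is its nonnegativity (the paper instead handles the quadratic term via the strong $L^{2p}$ convergence, cf.~\eqref{serra2}--\eqref{serra1} in the proof of Theorem \ref{vi-fin}).

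For the stopping claims \eqref{rosp} and \eqref{sbruff} you take a genuinely different route from the paper: you invoke the general Snell-envelope theory, identifying $Y^{t,h}_k$ as the smallest supermartingale dominating the discounted reward and reading off optimality of the first contact time plus the martingale property up to $\tau^*_k$. The paper instead derives these facts by approximation: optimality and the analogue of \eqref{sbruff} are first obtained for the localized finite-dimensional problems by verification, then transferred to $V_k$ via a.s.~convergence of the optimal stopping times (the Lemma \ref{taustar}-type argument of Step 4) and the uniform integrability of Proposition \ref{unif-intk-na} to justify \eqref{suff}. Your route is shorter, but it quietly assumes the dynamic-programming/supermartingale characterization of the value process for a Markov diffusion in a Hilbert space with an unbounded, only exponentially integrable, discount rate --- a statement the paper does not take off the shelf but manufactures from its approximations. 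If you keep the probabilistic route you must verify the hypotheses of a version of the general theory valid in this setting (strong Markov property of the mild solution, continuity of the reward, and $\mathbb{E}\{\sup_s|G_s|\}<\infty$, which Lemma \ref{bound-discount} does supply); otherwise fall back on the paper's approximation argument. Your final paragraph on the role of Lemmas \ref{bound-discount} and \ref{bound-discount-n} in keeping all bounds uniform in $(\alpha,n)$ is exactly on target.
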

\begin{proof}
We only outline the proof here and details can be found in \cite{Ch-DeA12a}. It goes through 4 steps.\vspace{+5pt}

\noindent\emph{Step 1: Finite dimensional bounded domains.} Fix $(\alpha,n)\in\mathbb{R}_+\times\mathbb{N}$. Take $R>0$ and let $B_R$ be the open ball in $\mathbb{R}^n$ with center in the origin and radius $R$. Let $\mathcal{L}_{\alpha,n}$ be the infinitesimal generator associated to $r^{(\alpha)x;n}$ of \eqref{SDE-n} and denote $f^{(\alpha,n)}_k:=\frac{\partial}{\partial\,t}\Psi^{(n)}_k+\mathcal{L}_{\alpha,n}\Psi^{(n)}_k-h^{(n)}(0)\Psi^{(n)}_k$. Since $\Psi^{(n)}_k\in C^{1,2}_b(\mathbb{R}^n;\mathbb{R})$ (cf.~Proposition \ref{mollify1}) it is well known (cf.~\cite{Fri82}, Theorem 8.2, p.~77; see~also \cite{Ch-DeA12a}, Proposition 4.2 and Corollary 4.3) that there exists a unique function $\bar{u}$ such that $\bar{u}\in L^p(0,T;W^{1,p}_0(B_R))\cap L^p(0,T;W^{2,p}(B_R))$, $\displaystyle{\frac{\partial\bar{u}}{\partial t}}\in L^p(0,T;L^{p}(B_R))$ for all $1\leq p<\infty$, and $\bar{u}\in C([0,T]\times\overline{B}_R)$ that solves the obstacle problem
\begin{eqnarray}\label{obs-1}
\left\{
\begin{array}{ll}
\max\left\{\displaystyle{\frac{\partial u}{\partial t}}+\mathcal{L}_{\alpha,n}u-h^{(n)}(0)u+f^{(\alpha,n)}_k\,,\,-u\right\}= 0\:\:\text{a.e.~in $(0,T)\times B_R$}; &\\[+12pt]
u(t,h^{(n)})\geq0\:\:\textrm{in}\:\:[0,T]\times \overline{B}_R\:\:\:\:\text{and}\:\:\:\: u(T,h^{(n)})=0,\:\: h^{(n)}\in\overline{B}_R;&\\[+10pt]
u(t,h^{(n)})=0,\:\: \text{in $[0,T]\times{\partial B_R}$}. &\\
\end{array}
\right.
\end{eqnarray}
Moreover, if $V^{(n)}_{k,\alpha,R}:=\big(\bar{u}+\Psi^{(n)}_k\big)$ then a verification argument based on a generalized It\^o's formula gives
\begin{align}\label{osbdd}
V^{(n)}_{k,\alpha,R}(t,h^{(n)})=\sup_{t\leq\tau\leq T}\mathbb{E}\left\{e^{-\int^{\tau_R\wedge\tau}_t{r^{(\alpha)t,h;n}_s(0)ds}}\,\Psi^{(n)}_k(\tau_R\wedge\tau
,r^{(\alpha)t,h;n}_{\tau_R\wedge\tau})\right\}
\end{align}
where $\tau_R:=\tau_R(t,h)$ is the first exit time of $r^{(\alpha)t,h;n}$ from $B_R$ (cf.~for instance proof of \cite{Ch-DeA12a}, Corollary 4.3). The same argument also provides the optimality of the stopping time
\begin{align}\label{ost00}
\tau^*_{k,\alpha,n,R}(t,h):=\inf\{s\geq t\,:\,V^{(n)}_{k,\alpha,R}(s,r^{(\alpha)t,h;n}_s)=\Psi^{(n)}_k(s,r^{(\alpha)t,h;n}_s)\}\wedge\tau_R\wedge T
\end{align}
and the analogous of \eqref{sbruff} holds for $V^{(n)}_{k,\alpha,R}$ with any stopping time $\tau\le\tau^*_{k,\alpha,n,R}$. Notice that the stochastic discount rate in \eqref{osbdd} is bounded in $B_R$ and the arguments used to prove existence and uniqueness of the solution to \eqref{obs-1} are the same as for the undiscounted problem (cf.~\cite{Ch-DeA12a}).\vspace{+5pt}

\noindent\emph{Step 2: Strong variational formulation for finite-dimensional bounded domains.} We denote by $a^{(\alpha,n)}_\mu(\,\cdot\,,\,\cdot\,)$ the bilinear form associated to $\mathcal{L}_{\alpha,n}$ (cf.~\eqref{bl-form2}) and by $F^{(\alpha,n)}_k$ the operator analogous to that in \eqref{ruiz0} but with $\Psi_k$, $\sigma$ and $A$ replaced by $\Psi^{(n)}_k$, $\sigma^{(n)}$ and $A_{\alpha,n}$, respectively. Set $\mathcal{K}^p_{n,\mu}:=\big\{w:\mathbb{R}^n\to\mathbb{R}\:\text{such that}\:w\in\mathcal{K}^p_\mu\big\}$ (cf.~\eqref{ccs2}). Then, by arguments as in \cite{Ch-DeA12a} (see the proof of Corollary 4.3 and of eq.~(4.43) therein) one can show that the obstacle problem \eqref{obs-1} may be equivalently written in the form of \eqref{d-4} but with $a_\mu(\,\cdot\,,\,\cdot\,)$ and $F_k$ replaced by $a^{(\alpha,n)}_\mu(\,\cdot\,,\,\cdot\,)$ and $F^{(\alpha,n)}_k$, respectively, and with $(\,\mathcal{V}^p_n\,,\,\mathcal{K}^p_{n,\mu}\,,\,\mathcal{H}^{(n)}_w\,)$ instead of $(\,\mathcal{V}^p\,,\,\mathcal{K}^p_{\mu}\,,\,\mathcal{H}_w\,)$. Clearly $\bar{u}:=V^{(n)}_{k,\alpha,R}- \Psi^{(n)}_k$ is the unique solution of the variational inequality in such new form and it is often referred to as \emph{strong solution}.\vspace{+5pt}

\noindent\emph{Step 3: Taking limits in the variational formulation.} The key fact is that the bounds in Theorem \ref{reg-v-fun} hold for $V^{(n)}_{k,\alpha,R}$ as well. It follows that $V^{(n)}_{k,\alpha,R}$ converges to $V^{(n)}_{k,\alpha}$ as $R\to\infty$, uniformly on compact subsets of $[0,T]\times\mathbb{R}^n$, strongly in $L^p(0,T;L^p(\mathbb{R}^n,\mu_n))$, and weakly in $L^p(0,T;W^{1,p}(\mathbb{R}^n,\mu_n))$ for $1\le p<\infty$. Moreover $\frac{\partial}{\partial\,t}V^{(n)}_{k,\alpha,R}\rightharpoonup \frac{\partial}{\partial\,t}V^{(n)}_{k,\alpha}$ in $L^2(0,T;L^p(\mathbb{R}^n,\mu_n))$ for $1\le p<\infty$ as $R\to\infty$. The proof relies on the same arguments as in the proof of Theorem \ref{cnv2}.

These convergence results and Theorems \ref{cnv1}, \ref{ascoli} and \ref{cnv2} allow us to take limits in the variational inequality as $R\to\infty$, $n\to\infty$ and $\alpha\to\infty$ in the prescribed order. The continuity property of the bilinear form needed in the limits to obtain the variational inequality \eqref{d-4} for $u_k$ is provided by Theorem \ref{uni-est}. Details are omitted here and may be found in \cite{Ch-DeA12a}, specifically in Theorems 4.12, 4.16 and 4.22. Some of these arguments will also be illustrated in the proof of Theorem \ref{vi-fin} below.\vspace{+5pt}

\noindent\emph{Step 4: Optimality of the stopping time \eqref{rosp}.} We denote
\begin{align}\label{ost01}
\tau^*_{k,\alpha,n}(t,h)&:=\inf\{s\geq t\,:\,V^{(n)}_{k,\alpha}(s,r^{(\alpha)t,h;n}_s)=\Psi^{(n)}_k(s,r^{(\alpha)t,h;n}_s)\}\wedge T, \\[+5pt]
\tau^*_{k,\alpha}(t,h)&:=\inf\{s\geq t\,:\,V_{k,\alpha}(s,r^{(\alpha)t,h}_s)=\Psi_k(s,r^{(\alpha)t,h}_s)\}\wedge T.
\end{align}
It can be shown that almost surely (possibly up to subsequences) $\tau^*_{k,\alpha,n,R}\wedge\tau^*_{k,\alpha,n}\to\tau^*_{k,\alpha,n}$ as $R\to\infty$, $\tau^*_{k,\alpha,n}\wedge\tau^*_{k,\alpha}\to\tau^*_{k,\alpha}$ as $n\to\infty$, and $\tau^*_{k,\alpha}\wedge\tau^*_{k}\to\tau^*_{k}$ as $\alpha\to\infty$ thanks to the uniform convergence in Theorems \ref{ascoli}, \ref{cnv2} and in Step 3 above. The proof follows along the same lines of the proof of Lemma \ref{taustar} below, hence we omit further details here (cf.~also \cite{Ch-DeA12a}, Lemma 4.7).

Now, set $\sigma:=\tau^*_{k,\alpha,n,R}\wedge\tau^*_{k,\alpha,n}\wedge\tau^*_{k,\alpha}\wedge\tau^*_{k}$ for simplicity, and recall that \eqref{sbruff} holds for $V^{(n)}_{k,\alpha,R}$, with $\tau^*_{k,\alpha,n,R}$ instead of $\tau^*_k$, by Step 1 above. Then a sufficient condition for the optimality of $\tau^*_{k}$ is
\begin{align}\label{suff}
\mathbb{E}\bigg\{e^{-\int^{\sigma}_t{r^{(\alpha)t,h;n}_u(0)du}}
V^{(n)}_{k,\alpha,R}\big(\sigma,r^{(\alpha)t,h;n}
_{\sigma}\big)\bigg\}\to\mathbb{E}\bigg\{e^{-\int^{\tau^*_{k}}_t{r^{t,h}_u(0)du}}
V_{k}\big(\tau^*_{k},r^{t,h}
_{\tau^*_{k}}\big)\bigg\}
\end{align}
as $R\to\infty$, $n\to\infty$ and $\alpha\to\infty$ in the prescribed order. Again, \eqref{suff} follows by the uniform convergence in Theorems \ref{ascoli}, \ref{cnv2} and in Step 3 above, and by the uniform integrability of Proposition \ref{unif-intk-na}. However here we skip further details as similar arguments will be used in the proof of Theorem \ref{optstt2} below.
\end{proof}

\section{A variational formulation for $V$}

Since $V_k$ converges to $V$ as $k\to\infty$ (cf.~Proposition \ref{regular}), it is natural to expect that $V$ might be a solution of a variational problem similar to \eqref{d-4}. However, that will be more likely to happen when taking limits of weaker variational problems, due to the lack of higher regularity of $\Psi$. In particular, notice that for all $v\in \mathcal{V}^p$, we may write $F_k(v)(t)$ (cf.~\eqref{ruiz0}) as
\begin{align}\label{ruiz}
F_k(v)(t)=(\frac{\partial\Psi_k}{\partial t}(t),v)_{\mu}-a_{\mu}(\Psi_k(t),v)=:\mathcal{A}_{k,\mu}(t;v),
\end{align}
by using arguments as in Remark \ref{buio}. 
Similarly, a continuous linear functional $\mathcal{A}_{\mu}(t;\,\cdot\,)\in \mathcal{V}^{p\,*}$, $t\in[0,T]$, associated to $\Psi$ may be defined by
\begin{align}
\mathcal{A}_{\mu}(t;v):=(\frac{\partial\Psi}{\partial t}(t),v)_{\mu}-a_{\mu}(\Psi(t),v),\qquad\forall v\in \mathcal{V}^p.
\end{align}
Then a simple application of Proposition \ref{mollify1} and Theorem \ref{uni-est} imply
\begin{equation}\label{prot2}
\lim_{k\to\infty}\Big\|\int_{t_1}^{t_2}{\big[\mathcal{A}_{k,\mu}(t;\,\cdot\,)-\mathcal{A}_{\mu}(t;\,\cdot\,)\big]dt}\Big\|
_{L^2(0,T;\mathcal{V}^{p})^*}=0\quad\text{for all $0\le t_1<t_2\le T$.}
\end{equation}
\begin{theorem}\label{vi-fin}
For every $1< p<\infty$ the function $\hat{u}:=V-\Psi$ is a solution of the variational problem \eqref{d-4} with $F_k(v-u(\,\cdot\,))(t)$ replaced by $\mathcal{A}_{\mu}(t;v-u(t))$.
\end{theorem}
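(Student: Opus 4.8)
The plan is to pass to the limit as $k\to\infty$ in the variational inequality \eqref{d-4} satisfied by $u_k:=V_k-\Psi_k$ (Theorem \ref{inf-yos-vi}), exactly along the scheme used for the limits in $R$, $n$ and $\alpha$ inside the proof of that theorem; the present passage in $k$ is the natural fourth step, the only genuinely new ingredient being the convergence \eqref{prot2} of the source functionals, which forces $\mathcal{A}_\mu(t;\,\cdot\,)$ in place of $F_k$ in the limiting problem. First I would record the convergences of $u_k$ to $\hat u:=V-\Psi$. By Proposition \ref{regular} and Proposition \ref{mollify1}, together with the exponential integrability $\int_{\mathcal{H}_w}e^{\lambda\|h\|_w}\mu(dh)<\infty$, dominated convergence gives $u_k\to\hat u$ strongly in $L^p(0,T;L^p(\mathcal{H}_w,\mu))$. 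The uniform bounds of Corollary \ref{arto}, which hold for each $V_k$ with the same constants, then yield, up to a subsequence, $u_k\rightharpoonup\hat u$ weakly in $L^2(0,T;\mathcal{V}^p)$ and $\partial_t u_k\rightharpoonup\partial_t\hat u$ weakly in $L^2(0,T;L^2(\mathcal{H}_w,\mu))$, the limits being identified by uniqueness of distributional derivatives. The structural requirements in the first two lines of \eqref{d-4} for $\hat u$ follow from Corollary \ref{arto} and Corollary \ref{continuity-V}, while the constraints $\hat u(T,\cdot)=0$ and $\hat u\ge 0$ are inherited from $u_k(T,\cdot)=0$ and $u_k\ge 0$ under these convergences.

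Next I would test the inequality \eqref{d-4} for $u_k$ against time-dependent admissible functions $v(\cdot)\in L^2(0,T;\mathcal{K}^p_{\mu})$ and integrate over an arbitrary subinterval $[t_1,t_2]\subset[0,T]$, so that the pointwise a.e.-$t$ statement can be recovered at the end by Lebesgue differentiation in time. Writing $\mathrm{LHS}_k$ for the resulting integrated left-hand side, the contributions that are linear in $u_k$, namely $\int_{t_1}^{t_2}a_\mu(u_k,v)\,dt$ and $\int_{t_1}^{t_2}(\partial_t u_k,v)_\mu\,dt$, pass to the limit by weak convergence, since $a_\mu(\cdot,v)$ and $(\cdot,v)_\mu$ are bounded linear functionals on $\mathcal{V}^p$ by \eqref{univ}; the source term converges to the corresponding expression with $\mathcal{A}_\mu$ by \eqref{prot2} combined with the weak convergence of $u_k$ (strong-in-dual times weak).

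The main obstacle is the passage to the limit in the diagonal quadratic contributions $\int_{t_1}^{t_2}a_\mu(u_k,u_k)\,dt$ and $\int_{t_1}^{t_2}(\partial_t u_k,u_k)_\mu\,dt$, which are not weakly continuous. The time-derivative term is handled by the energy identity $\int_{t_1}^{t_2}(\partial_t u_k,u_k)_\mu\,dt=\tfrac12\|u_k(t_2)\|^2_{L^2(\mathcal{H}_w,\mu)}-\tfrac12\|u_k(t_1)\|^2_{L^2(\mathcal{H}_w,\mu)}$, which converges because $u_k(t_i)\to\hat u(t_i)$ strongly in $L^2(\mathcal{H}_w,\mu)$ for every $t_i$ by Proposition \ref{regular} and dominated convergence. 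For $\int_{t_1}^{t_2}a_\mu(u_k,u_k)\,dt$ I would decompose $a_\mu$ as in \eqref{bl-form}: the principal part $\tfrac12\int_{\mathcal{H}_w}\langle\sigma\sigma^*Du_k,Du_k\rangle_w\,\mu(dh)$ is a nonnegative quadratic form and hence weakly lower semicontinuous, giving $\liminf\ge\tfrac12\int_{\mathcal{H}_w}\langle\sigma\sigma^*D\hat u,D\hat u\rangle_w\,\mu(dh)$, while the first-order term $\int_{\mathcal{H}_w}\langle\hat C,Du_k\rangle_w u_k\,\mu(dh)$ converges by weak-times-strong convergence. The delicate pieces are the sign-indefinite transport term $\bar{L}_{A}(u_k,u_k)$ and the weighted term $\int_{\mathcal{H}_w}h(0)u_k^2\,\mu(dh)$: here I would use Assumption \ref{ass-Q} and the Gaussian integration-by-parts underlying the bound \eqref{ext} to reduce $\bar{L}_{A}(u_k,u_k)$ to a zeroth-order integral, and exploit the uniform higher integrability of $u_k$ from Corollary \ref{arto}, so that both terms genuinely \emph{converge} rather than merely being semicontinuous. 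Since these lower-order pieces converge, they do not spoil the lower semicontinuity of the full diagonal form, and one obtains $\liminf_k\int_{t_1}^{t_2}a_\mu(u_k,u_k)\,dt\ge\int_{t_1}^{t_2}a_\mu(\hat u,\hat u)\,dt$, which is exactly the sign required.

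Finally, collecting the convergent linear and source terms, the energy identity, and the lower semicontinuity of the diagonal term, one finds $0\le\limsup_k\mathrm{LHS}_k\le\mathrm{LHS}_{\hat u}$, the lower bound holding because $\mathrm{LHS}_k\ge 0$ for every $k$ and the upper bound following from the computation above; this is precisely the integrated inequality for $\hat u$ with $\mathcal{A}_\mu$. As $[t_1,t_2]$ is arbitrary, Lebesgue differentiation returns the a.e.-$t$ statement of \eqref{d-4} for $\hat u$, completing the proof. The hardest point is exactly the control of the sign-indefinite lower-order terms $\bar{L}_{A}(u_k,u_k)$ and $\int_{\mathcal{H}_w}h(0)u_k^2\,\mu(dh)$: one must establish their genuine convergence, not only their semicontinuity, so that the correct one-sided inequality for the diagonal form survives its combination with the energy term.
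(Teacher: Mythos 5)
Your overall strategy coincides with the paper's: pass to the limit $k\to\infty$ in the integrated inequality from Theorem \ref{inf-yos-vi}, using the strong convergence $u_k\to\hat u$ in $L^2(0,T;L^p(\mathcal{H}_w,\mu))$ (Propositions \ref{mollify1} and \ref{regular} plus exponential integrability of $\mu$), the uniform bounds from Corollary \ref{arto} to extract weak limits of $u_k$ and $\partial_t u_k$, and \eqref{prot2} to replace $\mathcal{A}_{k,\mu}$ by $\mathcal{A}_\mu$; the boundary constraints and the final recovery of the a.e.-$t$ inequality from arbitrary $[t_1,t_2]$ are also as in the paper. Where you genuinely diverge is the diagonal term $\int_{t_1}^{t_2}a_\mu(u_k,u_k)\,dt$: you argue term by term (weak lower semicontinuity of the nonnegative principal part, weak-times-strong for the $\hat C$ term, genuine convergence of $\bar L_A(u_k,u_k)$ and of $\int_{\mathcal{H}_w}h(0)u_k^2\,\mu(dh)$), whereas the paper expands around the limit, $a_\mu(u_k,u_k)=a_\mu(u_k-\hat u,u_k-\hat u)+a_\mu(\hat u,u_k)+a_\mu(u_k-\hat u,\hat u)$ (eq.~\eqref{serra2}), kills the cross terms by weak convergence, and bounds the diagonal difference from below by $-C'_{\mu,p,T}\,\|u_k-\hat u\|_{L^2(0,T;L^{2p}(\mathcal{H}_w,\mu))}\to 0$ (eq.~\eqref{serra1}), which packages all the lower-order estimates at once instead of analyzing them separately. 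Both routes rest on the same structural fact behind \eqref{ext}, namely that apart from the nonnegative principal part every piece of $a_\mu(w,w)$ is controlled by a gradient norm of one factor times an undifferentiated $L^{2p}$-norm of the other, so your version is correct in substance and arguably more transparent, while the paper's is more economical.

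One detail of your argument deserves a warning: the proposed reduction of $\bar L_A(u_k,u_k)$ to a \emph{zeroth-order} integral by Gaussian integration by parts (antisymmetrization) produces a quadratic weight $\rho(h)=\sum_i\big(\lambda_i^{-1}h_i\langle Ah,\varphi_i\rangle_w-\langle A\varphi_i,\varphi_i\rangle_w\big)$ whose second moment involves $\sum_{i,j}(\lambda_j/\lambda_i)\langle A\varphi_j,\varphi_i\rangle_w^2$, and this is \emph{not} controlled by Assumption \ref{ass-Q} alone, since $Tr[AQA^*]=\sum_j\lambda_j\|A\varphi_j\|_w^2$ carries no lower bound on the $\lambda_i$. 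The safe way to get the genuine convergence you need is the asymmetric split $\bar L_A(u_k,u_k)-\bar L_A(\hat u,\hat u)=\bar L_A(u_k-\hat u,u_k)+\bar L_A(\hat u,u_k-\hat u)$ together with the refined bound $|\bar L_A(v,u)|\le (Tr[AQA^*])^{1/2}\|v\|_{L^{2p}(\mathcal{H}_w,\mu)}\|Du\|_{L^{2p'}(\mathcal{H}_w,\mu;\mathcal{H}_w)}$ (which is what underlies \eqref{ext} and \eqref{serra1}): the first term vanishes by strong convergence and the uniform gradient bound, the second by weak convergence against the fixed element determined by $\hat u$. With that substitution your proof closes; as written, this single step is the only one that would not survive scrutiny under the paper's hypotheses.
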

\begin{proof}
The boundary conditions are clearly satisfied by $\hat{u}$, and the continuity of $\hat{u}$ is a consequence of Proposition \ref{lip0} and Corollary \ref{continuity-V}. It remains to prove that $\hat{u}$ solves the inequality in \eqref{d-4} when $F_k(v-u(\,\cdot\,))(t)$ is replaced by $\mathcal{A}_{\mu}(t;v-u(t))$. Recall that $\Psi_k$ and $V_k$ have the same bounds as $\Psi$ and $V$ in Proposition \ref{reg-psi2} and Corollary \ref{arto}, respectively. Then, for all $k\in\mathbb{N}$ and all $1< p<\infty$, $u_k$ of Theorem \ref{inf-yos-vi} and $\frac{\partial}{\partial t}u_k$ are bounded in $L^2(0,T;\mathcal{V}^p)$ and $L^2(0,T;L^p(\mathcal{H}_w,\mu))$, respectively, by a suitable constant $M_{\mu,p,T}>0$. Hence $u_k$ converges to some $\bar{u}$ as $k\to\infty$ weakly in $L^2(0,T;\mathcal{V}^p)$ and $\frac{\partial}{\partial t}u_k\rightharpoonup \frac{\partial}{\partial t}\bar{u}$ as $k\to\infty$ in $L^2(0,T;L^p(\mathcal{H}_w,\mu))$. However $u_k\to \hat{u}$ as $k\to\infty$ in $L^2(0,T;L^p(\mathcal{H}_w,\mu))$ for all $1\leq p<\infty$, by Propositions \ref{mollify1} and \ref{regular}. Therefore $\bar{u}=\hat{u}$.

Notice that, for all $v\in \mathcal{K}^p_{\mu}$, and arbitrary $0\le t_1<t_2\le T$ the function $u_k$ of Theorem \ref{inf-yos-vi} satisfies
\begin{align}\label{vecchio}
\hspace{-3pt}\displaystyle{\int_{t_1}^{t_2}\hspace{-4pt}\big[\hspace{-1pt}-\hspace{-1pt}(\frac{\partial u_k}{\partial t},v-u_k)_{\mu}+a_\mu(u_k,v-u_k)-\mathcal{A}_{k,\mu}(t;v-u_k)\big]dt\geq0}
\end{align}
by \eqref{d-4} and \eqref{ruiz}. In order to take limits as $k\to\infty$ in \eqref{vecchio} we write
\begin{align}\label{serra2}
\int_{t_1}^{t_2}&{a_\mu(u_k,u_k)dt}=\int_{t_1}^{t_2}{a_\mu(u_k-\hat{u},u_k-\hat{u})dt}+\int_{t_1}^{t_2}{a_\mu(\hat{u},u_k)dt}+
\int_{t_1}^{t_2}{a_\mu(u_k-\hat{u},\hat{u})dt}.
\end{align}
By using the bound $M_{\mu,p,T}$, \eqref{ext}, \eqref{bl-form} and estimates as in the proof of Theorem \ref{uni-est} we obtain
\begin{align}\label{serra1}
\int_{t_1}^{t_2}{\hspace{-7pt}a_\mu(u_k-\hat{u},u_k-\hat{u})dt}&\hspace{-1pt}\geq\hspace{-1pt}-C_{\mu,p}\big\|Du_k-D\hat{u}\big\|_{L^2(0,T;L^{2p'}
(\mathcal{H}_w,\mu;\mathcal{H}_w))}
\big\|u_k-\hat{u}\big\|_{L^2(0,T;L^{2p}(\mathcal{H}_w,\mu))}\nonumber\\
&\geq-C^\prime_{\mu,p,T}\,\big\|u_k-\hat{u}\big\|_{L^2(0,T;L^{2p}(\mathcal{H}_w,\mu))},
\end{align}
where $C^\prime_{\mu,p,T}=2\,C_{\mu,p}\,M_{\mu,p,T}$ and $\frac{1}{p}+\frac{1}{p'}=1$. Now we use \eqref{prot2}, \eqref{serra2}, \eqref{serra1} together with the convergence properties of $u_k$ and $\frac{\partial}{\partial t}u_k$ to obtain, in the limit,
\begin{align}
\hspace{-3pt}\displaystyle{\int_{t_1}^{t_2}\hspace{-4pt}\big[\hspace{-1pt}-\hspace{-1pt}(\frac{\partial \hat{u}}{\partial t},v-\hat{u})_{\mu}+a_\mu(\hat{u},v-\hat{u})-\mathcal{A}_{\mu}(t;v-\hat{u})\big]dt\geq0}.
\end{align}
Since $t_1$ and $t_2$ are arbitrary Theorem \ref{vi-fin} follows.
\end{proof}
Next we show that the stopping time
\begin{eqnarray}\label{stop}
\tau^*(t,h):=\inf\{s\geq t\,:\,V(s,r^{t,h}_s)=\Psi(s,r^{t,h}_s)\}\wedge T
\end{eqnarray}
is optimal for $V$ in \eqref{priceHJM2}. For that we need the next Lemma, whose proof follows along the lines of arguments adopted in \cite{Ben-Lio82}, Chapter 3, Section 3, Theorem 3.7 (cf.~in particular p. 322).
\begin{lemma}\label{taustar}
Let $\tau^*_{k}(t,h)$ be as in \eqref{rosp} and let $\tau^*(t,h)$ be as in \eqref{stop}. Then
\begin{eqnarray}\label{optstt}
\lim_{k\to\infty}\tau^*_{k}(t,h)\wedge\tau^*(t,h)=\tau^*(t,h),\quad\mathbb{P}\textrm{-a.s.}
\end{eqnarray}
\end{lemma}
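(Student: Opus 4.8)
The plan is to split the desired a.s.\ equality into its two one-sided halves, one of which is automatic. Since $\tau^*$ does not depend on $k$, one trivially has $\tau^*_k(t,h)\wedge\tau^*(t,h)\le\tau^*(t,h)$ for every $k$, so $\limsup_k\big(\tau^*_k\wedge\tau^*\big)\le\tau^*$ almost surely. Moreover, because $x\mapsto x\wedge\tau^*$ is continuous and nondecreasing, the elementary identity $\liminf_k\big(\tau^*_k\wedge\tau^*\big)=\big(\liminf_k\tau^*_k\big)\wedge\tau^*$ holds pathwise. Hence it suffices to prove $\liminf_k\tau^*_k(t,h)\ge\tau^*(t,h)$ almost surely; the full statement \eqref{optstt} then follows at once from these two bounds.

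I would argue pathwise. Fix $\omega$ outside the null set on which $s\mapsto r^{t,h}_s(\omega)$ is not continuous, and introduce the gap functions $G_k(s):=V_k(s,r^{t,h}_s)-\Psi_k(s,r^{t,h}_s)$ and $G(s):=V(s,r^{t,h}_s)-\Psi(s,r^{t,h}_s)$, both nonnegative because $V_k\ge\Psi_k$ and $V\ge\Psi$ (evaluate the suprema in \eqref{sfinito} and \eqref{priceHJM2} at the initial time). By Proposition \ref{rem-cont-Vk}, Corollary \ref{continuity-V}, Proposition \ref{lip0} and Proposition \ref{mollify1} the functions $V_k,V,\Psi_k,\Psi$ are continuous on $[0,T]\times\mathcal{H}_w$, so composing them with the continuous trajectory makes $G_k$ and $G$ continuous in $s$. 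In particular the infima defining $\tau^*_k$ in \eqref{rosp} and $\tau^*$ in \eqref{stop} are attained whenever they are strictly below $T$, i.e.\ $G_k(\tau^*_k)=0$ on $\{\tau^*_k<T\}$. Setting $\ell:=\liminf_k\tau^*_k(\omega)$ and extracting a subsequence $(k_j)$ with $\tau^*_{k_j}\to\ell$, I may assume $\ell<T$ (otherwise $\tau^*\le T=\ell$ and there is nothing to prove); then $\tau^*_{k_j}<T$ for $j$ large, so $V_{k_j}\big(\tau^*_{k_j},r^{t,h}_{\tau^*_{k_j}}\big)=\Psi_{k_j}\big(\tau^*_{k_j},r^{t,h}_{\tau^*_{k_j}}\big)$.

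The crucial step is to pass to the limit in this identity along $(k_j)$. The trajectory $\{r^{t,h}_s(\omega):s\in[t,T]\}$ is a compact, hence bounded, subset $\mathcal{B}_\omega\subset\mathcal{H}_w$, so Proposition \ref{regular} yields $\sup_{\mathcal{B}_\omega}|V_{k_j}-V|\to0$ while \eqref{uni-psi} yields $\sup|\Psi_{k_j}-\Psi|\le 1/k_j\to0$. Splitting $V_{k_j}\big(\tau^*_{k_j},r^{t,h}_{\tau^*_{k_j}}\big)-V\big(\ell,r^{t,h}_\ell\big)$ into a uniform-convergence part (controlled by Proposition \ref{regular} on $\mathcal{B}_\omega$) and a continuity part (controlled by Corollary \ref{continuity-V}, since $(\tau^*_{k_j},r^{t,h}_{\tau^*_{k_j}})\to(\ell,r^{t,h}_\ell)$), and doing the same for $\Psi$, I obtain $V\big(\ell,r^{t,h}_\ell\big)=\Psi\big(\ell,r^{t,h}_\ell\big)$, that is $G(\ell)=0$. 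Since $\tau^*$ is the first zero of $G$, this forces $\tau^*\le\ell=\liminf_k\tau^*_k$, which is exactly the inequality sought.

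The hard part will be this interchange of the limit in $k$ with evaluation at the \emph{random} times $\tau^*_{k_j}$, precisely because the convergence $V_k\to V$ is only uniform on bounded sets: one must first observe, path by path, that the evaluation points remain inside the compact image $\mathcal{B}_\omega$ of the trajectory, and only then perform the uniform-versus-continuity splitting above. A secondary technical point to verify carefully is that the infimum in \eqref{rosp} is genuinely attained, so that $G_{k_j}(\tau^*_{k_j})=0$ holds rather than merely $\inf G_{k_j}=0$; this is guaranteed by joint continuity of $V_k-\Psi_k$ together with path continuity. Note that no monotonicity of $\tau^*_k$ in $k$ is required, so the whole argument runs through subsequential pathwise limits as described.
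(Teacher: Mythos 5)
Your proof is correct, and it rests on the same two pillars as the paper's own argument: pathwise continuity of $s\mapsto r^{t,h}_s$ combined with joint continuity of $V_k,V,\Psi_k,\Psi$, and the uniform convergences $\Psi_k\to\Psi$ from \eqref{uni-psi} and $V_k\to V$ on bounded sets from Proposition \ref{regular}, applied along the compact trajectory image. The organization, however, is genuinely different. The paper proves the lower bound directly: for $\omega$ with $\tau^*(\omega)>0$ and $\delta<\tau^*(\omega)$, it extracts a strictly positive minimum $\eta(\delta,\omega)$ of $V-\Psi$ along the path on $[0,\tau^*(\omega)-\delta]$, then uses the two uniform convergences to force $V_k>\Psi_k$ there for all $k\ge N_\eta$, whence $\tau^*_k(\omega)>\tau^*(\omega)-\delta$, and concludes by letting $\delta\to0$. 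You instead argue by limit points: after reducing to $\liminf_k\tau^*_k\ge\tau^*$ via the pathwise identity $\liminf_k(\tau^*_k\wedge\tau^*)=(\liminf_k\tau^*_k)\wedge\tau^*$, you extract a subsequence $\tau^*_{k_j}\to\ell<T$, pass to the limit in $G_{k_j}(\tau^*_{k_j})=0$ to obtain $G(\ell)=0$, and invoke the first-hitting characterization to get $\tau^*\le\ell$. Your route needs one ingredient the paper's avoids, and you correctly supply it: attainment of the hitting time, i.e.\ $G_{k_j}(\tau^*_{k_j})=0$ on $\{\tau^*_{k_j}<T\}$, which follows from closedness of the zero set of the continuous map $G_{k_j}$. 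In exchange you dispense with the $\delta$/$\eta(\delta,\omega)$ bookkeeping, whose closing step (``$N_\eta\to\infty$ as $\delta\to0$'') is the least transparent part of the paper's proof. One small point in your favor: the paper's proof cites uniform convergence of $V_k\to V$ on all of $[0,T]\times\mathcal{H}_w$, whereas Proposition \ref{regular} only delivers it on bounded subsets; your explicit restriction to the compact image $\mathcal{B}_\omega$ of the trajectory is exactly the observation needed to make that citation rigorous, and it is implicitly what the paper uses since all evaluations occur along the path.
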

\begin{proof}
For simplicity we consider the diffusion $r^h$ that starts at time zero from $h$. There is no loss of generality as all results below hold for arbitrary initial time $t$. We set $\tau^*_k:=\tau^*_{k}(0,h)$ and $\tau^*:=\tau^*(0,h)$. By Theorem \ref{inf-yos-vi}, $\tau^*_k$ is optimal for the $k$-th regularized problem. The limit \eqref{optstt} is trivial for those $\omega\in\Omega$ such that $\tau^*(\omega)=0$. Set $\Omega_0:=\big\{\omega\in\Omega\,:\,\tau^*(\omega)>0\big\}$. Fix $\omega\in\Omega_0$ and take $\delta<\tau^*(\omega)$. Then, for $t\in[0,\tau^*(\omega)-\delta]$,
\begin{eqnarray*}
V(t,r^{h}_t(\omega))>\Psi(t,r^{h}_t(\omega)).
\end{eqnarray*}
Since $t\mapsto r^{h}_t(\omega)$ is continuous, the continuous map $t\mapsto V(t, r^{h}_t(\omega))-\Psi(t, r^{h}_t(\omega))$ attains its minimum on $[0,\tau^*(\omega)-\delta]$; that is, there exists $\eta(\delta,\omega)>0$ such that
\begin{eqnarray*}
\eta(\delta,\omega):=\min\{V(t, r^{h}_t(\omega))-\Psi(t, r^{h}_t(\omega)),\,t\in[0,\tau^*(\omega)-\delta]\}
\end{eqnarray*}
and
\begin{eqnarray*}
V(t, r^{h}_t(\omega))\geq\Psi(t, r^{h}_t(\omega))+\eta(\delta,\omega),\qquad \textrm{for all $t\in[0,\tau^*(\omega)-\delta]$}.
\end{eqnarray*}
Recall that $\Psi_k\to\Psi$ and $V_k\to V$ uniformly on $[0,T]\times\mathcal{H}_w$ (cf.~Propositions \ref{mollify1} and \ref{regular}), therefore there exists $N_\eta=N(\eta(\delta,\omega))\in\mathbb{N}$ large enough and such that
\begin{eqnarray*}
V_k(t, r^{h}_t(\omega))>\Psi_k(t, r^{h}_t(\omega)),\qquad \textrm{for all $t\in[0,\tau^*(\omega)-\delta]$ and $k\geq N_\eta$.}
\end{eqnarray*}
It follows that $\tau^*(\omega)-\delta<\tau^*_{k}(\omega)$ for all $k\geq N_\eta$. Notice that $\eta(\delta,\omega)\to0$ as $\delta\to0$ and hence $N_\eta\to\infty$. Therefore
\begin{align*}
\lim_{\footnotesize
\begin{array}{c}
N_\eta\to\infty \\
k\geq N_\eta
\end{array}
}(\tau^*_{k}\wedge\tau^*)(\omega)=\tau^*(\omega).
\end{align*}
Since $\omega\in\Omega_0$ is arbitrary, \eqref{optstt} follows.
\end{proof}
\begin{theorem}\label{optstt2}
An optimal stopping time for $V$ in \eqref{priceHJM2} is given by $\tau^*(t,h)$ of \eqref{stop}.
\end{theorem}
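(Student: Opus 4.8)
The plan is to prove the single nontrivial inequality, namely that $\tau^*$ \emph{attains} the supremum in \eqref{priceHJM2}, since $V(t,h)\ge\mathbb{E}\{\,\cdot\,\}$ holds for every admissible stopping time by definition. First I would exploit the optimality already available at the regularized level: by Theorem \ref{inf-yos-vi} the stopping time $\tau^*_k(t,h)$ is optimal for $V_k$ and, more importantly, the martingale identity \eqref{sbruff} holds for \emph{every} stopping time $\tau\le\tau^*_k(t,h)$. Choosing $\tau=\tau^*_k(t,h)\wedge\tau^*(t,h)$, which is admissible precisely because it does not exceed $\tau^*_k$, yields
\[
V_k(t,h)=\mathbb{E}\Big\{e^{-\int_t^{\tau^*_k\wedge\tau^*}r^{t,h}_u(0)\,du}\,V_k\big(\tau^*_k\wedge\tau^*,r^{t,h}_{\tau^*_k\wedge\tau^*}\big)\Big\}=\mathbb{E}\big\{Y^{t,h}_k(\tau^*_k\wedge\tau^*)\big\},
\]
with $Y^{t,h}_k$ as in \eqref{family2}. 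The whole argument then rests on letting $k\to\infty$ in this identity.

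Next I would pass to the limit on both sides. The left-hand side converges to $V(t,h)$ by Proposition \ref{regular}. For the right-hand side the crucial input is Lemma \ref{taustar}, giving $\tau^*_k\wedge\tau^*\to\tau^*$ $\mathbb{P}$-a.s.; together with the a.s.\ continuity of $s\mapsto r^{t,h}_s$ this produces $r^{t,h}_{\tau^*_k\wedge\tau^*}\to r^{t,h}_{\tau^*}$ and convergence of the stochastic discount factor a.s. To upgrade this to $V_k(\tau^*_k\wedge\tau^*,r^{t,h}_{\tau^*_k\wedge\tau^*})\to V(\tau^*,r^{t,h}_{\tau^*})$ a.s.\ I would use the splitting
\[
\big|V_k(s_k,x_k)-V(s,x)\big|\le\sup_{[0,T]\times\mathcal{B}}|V_k-V|+\big|V(s_k,x_k)-V(s,x)\big|,
\]
where $\mathcal{B}$ is a (path-dependent) bounded set containing the continuous, hence bounded, trajectory: the first term vanishes by the uniform-on-bounded-sets convergence of Proposition \ref{regular}, the second by the joint continuity of $V$ (Corollary \ref{continuity-V}). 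Finally, the uniform integrability of $\big(Y^{t,h}_k(\tau^*_k\wedge\tau^*)\big)_{k}$, guaranteed by Proposition \ref{unif-intk}, converts this a.s.\ convergence into convergence of the expectations, yielding
\[
V(t,h)=\mathbb{E}\Big\{e^{-\int_t^{\tau^*}r^{t,h}_u(0)\,du}\,V\big(\tau^*,r^{t,h}_{\tau^*}\big)\Big\}.
\]

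It then remains to identify $V$ with $\Psi$ at $\tau^*$. On $\{\tau^*<T\}$ the infimum defining $\tau^*$ in \eqref{stop} is attained, since $s\mapsto V(s,r^{t,h}_s)-\Psi(s,r^{t,h}_s)$ is continuous by Corollary \ref{continuity-V} and Proposition \ref{lip0}, whence $V(\tau^*,r^{t,h}_{\tau^*})=\Psi(\tau^*,r^{t,h}_{\tau^*})$; on $\{\tau^*=T\}$ the same equality holds by the terminal condition $V(T,\cdot)=\Psi(T,\cdot)$, valid because $\tau\equiv T$ is the only admissible stopping time at maturity. Substituting gives
\[
V(t,h)=\mathbb{E}\Big\{e^{-\int_t^{\tau^*}r^{t,h}_u(0)\,du}\,\Psi\big(\tau^*,r^{t,h}_{\tau^*}\big)\Big\},
\]
which exhibits $\tau^*$ as a maximiser in \eqref{priceHJM2}. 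I expect the main obstacle to be precisely the limit passage inside the expectation: one must simultaneously control the convergence of the random arguments $(\tau^*_k\wedge\tau^*,r^{t,h}_{\tau^*_k\wedge\tau^*})$ and of the approximating functions $V_k$, and the resulting a.s.\ convergence is \emph{not} by itself sufficient to conclude convergence of expectations. It is the $L^2$-type bound behind the uniform integrability of Proposition \ref{unif-intk} (cf.~\eqref{unif-int01}), combined with Lemma \ref{taustar}, that closes this gap.
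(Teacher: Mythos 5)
Your proposal is correct and follows essentially the same route as the paper's proof of Theorem \ref{optstt2}: you take $\tau=\tau^*_k\wedge\tau^*$ in \eqref{sbruff}, pass to the limit on both sides using Proposition \ref{regular}, Lemma \ref{taustar} and the uniform integrability of Proposition \ref{unif-intk}, exactly as the paper does. The only cosmetic differences are your use of a path-dependent bounded set together with Corollary \ref{continuity-V} where the paper uses a three-term splitting via Propositions \ref{regular} and \ref{rem-cont-Vk}, and your explicit justification of $V(\tau^*,r^{t,h}_{\tau^*})=\Psi(\tau^*,r^{t,h}_{\tau^*})$ (including on $\{\tau^*=T\}$), which the paper leaves implicit in its final display.
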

\begin{proof}
Set $\tau^*=\tau^*(t,h)$ and $\tau^*_k=\tau^*_k(t,h)$ and take $\tau=\tau^*_k\wedge\tau^*$ in \eqref{sbruff} to obtain
\begin{eqnarray}\label{bufalo}
V_k(t,h)=\mathbb{E}\left\{e^{-\int_t^{\tau^*_k\wedge\tau^*}{r^{t,h}_u(0)du}}\,V_k(\tau^*_k\wedge\tau^*,r^{t,h}
_{\tau^*_k\wedge\tau^*})
\right\}.
\end{eqnarray}
By Proposition \ref{regular}, the left hand side of \eqref{bufalo} converges to $V(t,h)$ in the limit as $k\to\infty$. We now show the convergence of the right hand side. Take $(t,h)\in[0,T]\times\mathcal{H}_w$ and a sequence $(t_k,h_k)\to (t,h)$ as $k\to\infty$. Then
\begin{align*}
\big|V_k(t_k,h_k)-V(t,h)\big|\le& \big|V_k(t_k,h_k)-V_k(t_k,h)\big|\\
&+\big|V_k(t_k,h)-V(t_k,h)\big|+\big|V(t_k,h)-V(t,h)\big|\to0
\end{align*}
as $k\to\infty$ by Propositions \ref{regular} and \ref{rem-cont-Vk}, and the fact that $V_k$ has the same bounds as $V$ in Theorem \ref{reg-v-fun}. Hence,
\begin{align*}
e^{-\int_t^{\tau^*_k\wedge\tau^*}{r^{t,h}_u(0)du}}\,V_k(\tau^*_k\wedge\tau^*,r^{t,h}
_{\tau^*_k\wedge\tau^*})\to e^{-\int_t^{\tau^*}{r^{t,h}_u(0)du}}\,V(\tau^*,r^{t,h}
_{\tau^*})\qquad\mathbb{P}-\text{a.s.}
\end{align*}
as $k\to\infty$ by Lemma \ref{taustar}. Now, an application of Proposition \ref{unif-intk} provides
\begin{align*}
\lim_{k\to\infty}\mathbb{E}\left\{e^{-\int_t^{\tau^*_k\wedge\tau^*}{r^{t,h}_u(0)du}}\,V_k(\tau^*_k\wedge\tau^*,r^{t,h}
_{\tau^*_k\wedge\tau^*})\right\}=\mathbb{E}\left\{e^{-\int_t^{\tau^*}{r^{t,h}_u(0)du}}\,V(\tau^*,r^{t,h}
_{\tau^*})
\right\}
\end{align*}
and by taking limits in \eqref{bufalo} we conclude that
\begin{eqnarray*}
V(t,h)=\mathbb{E}\left\{e^{-\int_t^{\tau^*}
{r^{t,h}_u(0)du}}\,V(\tau^*,r^{t,h}_{\tau^*})\right\}=\mathbb{E}\left\{e^{-\int_t^{\tau^*}
{r^{t,h}_u(0)du}}\Psi(\tau^*,r^{t,h}_{\tau^*})\right\}.
\end{eqnarray*}
The optimality of $\tau^*$ now follows.
\end{proof}
\begin{remark}
The variational problem in Theorem \ref{vi-fin} is degenerate as the second order differential operator is not uniformly elliptic. Degenerate variational inequalities require non-classical arguments even at a finite-dimensional level. The question of uniqueness is very hard to address when such degeneracy occurs especially in the generality of our setting. In Section 5 of \cite{Ch-DeA12a} a class of problems where uniqueness may be obtained is discussed. There the key argument requires the operator $A$ to be self-adjoint, hence extending such result to the Musiela's parametrization of the HJM model is not straightforward and we leave it for future research.
\end{remark}

\begin{appendix}
\section*{Appendix}
\section{Proof of Theorem \ref{reg-v-fun}}\label{app-1}
\renewcommand{\theequation}{A-\arabic{equation}}
\begin{proof}
The first claim follows from \eqref{psi1} and Lemma \ref{bound-discount}. To show \eqref{lip-V} take $h,g\in\mathcal{H}_w$ and fix $t\in[0,T]$. Then \eqref{priceHJM3} implies
\begin{align}\label{lip-V1}
V(t,h)-V(t,g) \leq &\sup_{t\leq\tau\leq T}\mathbb{E}\left\{D(t,\tau;r^{t,h}_\cdot(0))\Psi(\tau,r^{t,h}_{\tau})-D(t,\tau;r^{t,g}_\cdot(0))
\Psi(\tau,r^{t,g}_{\tau})\right\}\nonumber\\
\le & K\sup_{t\leq\tau\leq T}\mathbb{E}\bigg\{\Big|D(t,\tau;r^{t,h}_\cdot(0))-D(t,\tau;r^{t,g}_\cdot(0))\Big|\bigg\}\\
&+ \sup_{t\leq\tau\leq T}\mathbb{E}\bigg\{D(t,\tau;r^{t,g}_\cdot(0))\Big|\Psi(\tau,r^{t,h}_{\tau})-\Psi(\tau,r^{t,g}_{\tau})\Big|\bigg\}.\nonumber
\end{align}
Now, recall \eqref{injection} and notice that
\begin{align}\label{lip-V2}
\Big|D(t,\tau;r^{t,h}_\cdot(0))&-D(t,\tau;r^{t,g}_\cdot(0))\Big|\nonumber\\
\le&\, \Big(D(t,\tau;r^{t,h}_\cdot(0))+D(t,\tau;r^{t,g}_\cdot(0))\Big)\Big|\int_t^\tau{\big(r^{t,h}_u(0)-r^{t,g}_u(0)\big)du}
\Big|\nonumber\\
\le&\: C\,T\Big(D(t,\tau;r^{t,h}_\cdot(0))+D(t,\tau;r^{t,g}_\cdot(0))\Big)\sup_{t\le s\le T}\big\|r^{t,h}_s-r^{t,g}_s\big\|_w.
\end{align}
Hence, Proposition \ref{lip0}, Lemma \ref{bound-discount}, \eqref{lip-V1}, \eqref{lip-V2} and H\"older's inequality give
\begin{align}\label{lip-V3}
V(t,h)-V(t,g)\le L_1 \,\gamma^{\frac{1}{2}}\Big(e^{\frac{\beta}{2}\|h\|_w}+e^{\frac{\beta}{2}\|g\|_w}\Big)\mathbb{E}\bigg\{\sup_{t\le s\le T}\big\|r^{t,h}_s-r^{t,g}_s\big\|^2_w\bigg\}^{\frac{1}{2}}
\end{align}
where $L_1:=C_1+C\,K\,T$. Since the same arguments hold for $V(t,g)-V(t,h)$, we obtain
\begin{equation*}
|V(t,h)-V(t,g)|\leq L_1\,\gamma^{\frac{1}{2}}\Big(e^{\frac{\beta}{2}\|h\|_w}+e^{\frac{\beta}{2}\|g\|_w}\Big)\mathbb{E}\bigg\{\sup_{t\leq s\leq T}\|r^{t,h}_s-r^{t,g}_s\|^2_{w}\bigg\}^{\frac{1}{2}}.
\end{equation*}
The coefficients in \eqref{infiniteSDEb} are time-homogeneous, hence (cf.~\eqref{lip})
\begin{align*}
\mathbb{E}\bigg\{\sup_{t\leq s\leq T}\|r^{t,h}_s-r^{t,g}_s\|^2_{w}\bigg\} &= \mathbb{E}\bigg\{\sup_{0\leq s\leq T-t}\|r^{h}_s-r^{g}_s\|^2_{w}\bigg\}\leq C_{2,T}\|h-g\|^2_w.
\end{align*}
Condition \eqref{lip-V} follows with $L_V=\, L_1\,\sqrt{\gamma\,C_{2,T}}$.

To prove \eqref{lip-Vt} we we need to write problem \eqref{priceHJM2} in a more convenient form. For any stopping time $\tau\in[t,T]$ with respect to the filtration $(\mathcal{F}^t_s)_{s\ge t}=\sigma\big\{B_s-B_t\,;\,s\ge t\big\}$ we set $\sigma:=\tau-t$. A change of variable gives
\begin{align*}
\mathbb{E}&\bigg\{e^{-\int_t^\tau{r^{t,h}_u(0)\,du}}\Big[K-e^{-\int^{T-\tau}_0{r^{t,h}_\tau(x)\,dx}}\Big]^+\bigg\}=
\mathbb{E}\bigg\{e^{-\int_0^{\sigma}{r^{t,h}_{t+u}(0)\,du}}\Big[K-e^{-\int^{T-t-\sigma}_0{r^{t,h}_{t+\sigma}(x)\,dx}}
\Big]^+\bigg\}
\end{align*}
and hence (cf.~\eqref{priceHJM2})
\begin{align}\label{timeLip03}
V(t,h)=\sup_{0\le \sigma\le T-t}\mathbb{E}\bigg\{e^{-\int_0^{\sigma}{r^{t,h}_{t+u}(0)\,du}}\Big[K-e^{-\int^{T-t-\sigma}_0{r^{t,h}_{t+\sigma}(x)\,dx}}
\Big]^+\bigg\}.
\end{align}
Note that the SDE \eqref{infiniteSDEb} has time-homogeneous coefficients and its solution is adapted to the filtration $(\mathcal{F}^t_s)_{s\ge t}$. Denoting $B^t_s:=B_s-B_t$, $s\ge t$, it follows that $r^{t,h}_{\,\cdot\,}=F(h, B^t)(\,\cdot\,)$ for a suitable Borel-measurable function $F$ (cf.~for instance \cite{Kall}, Lemma 21.15). Moreover $F(h, B^t)(\,\cdot\,)$ has the same law as $r^{h}_{\,\cdot\,}=F(h, B)(\,\cdot\,)$ (cf.~again \cite{Kall}, Lemma 21.15) and the set of the $(\mathcal{F}^t_s)_{s\ge t}$-stopping times is equivalent to the set of the $(\mathcal{F}_t)_{t\ge0}$-stopping times. Therefore \eqref{timeLip03} may be equivalently written as
\begin{align}\label{timeLip00}
V(t,h)=\sup_{0\le \sigma\le T-t}\mathbb{E}\bigg\{e^{-\int_0^{\sigma}{r^{h}_{u}(0)\,du}}\Big[K-e^{-\int^{T-t-\sigma}_0{r^{h}_{\sigma}(x)\,dx}}
\Big]^+\bigg\}
\end{align}
where the supremum is taken over the $(\mathcal{F}_t)_{t\ge0}$-stopping times.

Take $0\le t_1<t_2\le T$, then by arguing as in \eqref{lip-V1} and using \eqref{injection} and \eqref{lip1} we have
\begin{align}\label{timeLip01}
V(t_2,h)-V(t_1,h)\le& K\sup_{0\le\tau\le T-t_2}\mathbb{E}\bigg\{e^{-\int^\tau_0{r^h_u(0)du}}\Big|\int^{T-t_1-\tau}_{T-t_2-\tau}{r^h_\tau(x)dx}\Big|\bigg\}\nonumber\\
\le&K\,C\,(t_2-t_1)\,\mathbb{E}\bigg\{e^{-\int^\tau_0{r^h_u(0)du}}\sup_{0\le t\le T}\|r^h_t\|_w \bigg\}.
\end{align}
To find a bound for $V(t_1,h)-V(t_2,h)$ notice that $\tau\wedge(T-t_2)$ is admissible for $V(t_2,h)$ if $\tau\in[0,T-t_1]$. Then, again as for \eqref{lip-V1}, we get
\begin{align}
V&(t_1,h)-V(t_2,h)\\
\le& K\sup_{0\le\tau\le T-t_1}\mathbb{E}\bigg\{\Big(e^{-\int^\tau_0{r^h_u(0)du}}+e^{-\int^{\tau\wedge(T-t_2)}_0{r^h_u(0)du}}\Big)\Big|\int^{\tau}_{
\tau\wedge(T-t_2)}{\hspace{-5pt}r^h_\tau(x)dx}\Big|\bigg\}\nonumber\\
&+\sup_{0\le\tau\le T-t_1}\mathbb{E}\bigg\{e^{-\int^{\tau\wedge(T-t_2)}_0{r^h_u(0)du}}\Big|\int_0^{T-t_2-\tau\wedge(T-t_2)}
{r^h_{\tau\wedge(T-t_2)}(x)dx}-\int_0^{T-t_1-\tau}
{r^h_{\tau}(x)dx}\Big|
\bigg\}.\nonumber
\end{align}
The above expectations simplify by splitting their arguments over the events $\big\{\tau\le T-t_2\big\}$ and $\big\{\tau> T-t_2\big\}$; hence by recalling \eqref{injection} we obtain
\begin{align}\label{timeLip02}
V(t_1,h)-V(t_2,h)\le& K\,C(t_2-t_1)\sup_{0\le\tau\le T-t_1}\mathbb{E}\bigg\{\Big(e^{-\int^\tau_0{r^h_u(0)du}}+e^{-\int^{\tau\wedge(T-t_2)}_0{r^h_u(0)du}}\Big)\|r^h_\tau\|_w
\bigg\}\nonumber\\
&+2\,C(t_2-t_1)\sup_{0\le\tau\le T-t_1}\mathbb{E}\bigg\{\|r^h_\tau\|_w\,\,e^{-\int^{\tau\wedge(T-t_2)}_0{r^h_u(0)du}}
\bigg\}.
\end{align}
At this point in the right-hand side of \eqref{timeLip01} and \eqref{timeLip02} we use H\"older's inequality, \eqref{apr1}, \eqref{bound-r0} to obtain \eqref{lip-Vt} with a suitable $L'_V$.
\end{proof}

\section{Properties of the gain function}\label{app0}
\renewcommand{\theequation}{B-\arabic{equation}}
In order to prove Proposition \ref{reg-psi2} we need some auxiliary results about the regularity of $\Psi$. Let $\{\varphi_1,\varphi_2,\ldots\}$ be a set of orthonormal basis functions of ${\mathcal{H}_w}$ and for $n\geq1$ let $P_n:{\mathcal{H}_w}\to{\mathcal{H}_w}$ be the projection map defined by
\begin{equation}\label{projection}
P_nh:=\sum^n_{i=1}{\langle h,\varphi_i\rangle_w\,\varphi_i},
\end{equation}
where $\langle\cdot,\cdot\rangle_w$ is the scalar product in ${\mathcal{H}_w}$. Set $h^{(n)}:=P_nh$ and define $\Psi^{(n)}:[0,T]\times{\mathcal{H}_w}\to\mathbb{R}$ by
\begin{equation}\label{psi-n}
\Psi^{(n)}(t,h):=\Psi(t,P_nh)=\Psi(t,h^{(n)}).
\end{equation}
Dini's Theorem (cf.~\cite{Die}, Theorem 7.2.2) and \eqref{lip-psi1} give
\begin{equation}\label{psi-n2}
\lim_{n\to\infty}\sup_{(t,h)\in[0,T]\times\mathcal{K}}\big|\Psi^{(n)}(t,h)-\Psi(t,h)\big|=0,\quad\textrm{for every compact}\:
\mathcal{K}\subset{\mathcal{H}_w}.
\end{equation}
We will show now that $\Psi^{(n)}$ belongs to a suitable Sobolev space. The arguments of the proof are similar to those employed in \cite{DaPr}, Chapter 10.
\begin{lemma}\label{reg-psi}
For $1\le p <+\infty$ there exists $C_{\Psi,p}>0$ such that
\begin{equation}\label{rp1}
\sup_{t\in[0,T]}\|\Psi^{(n)}(t)\|_{W^{1,p}({\mathcal{H}_w},\mu)}<C_{\Psi,p}\qquad\text{and}\qquad
\int_0^T{\bigg\|\frac{\partial\Psi^{(n)}}{\partial t}(t)\bigg\|^2_{L^p({\mathcal{H}_w},\mu)}dt}\,<C_{\Psi,p}
\end{equation}
for all $n\in\mathbb{N}$.
\end{lemma}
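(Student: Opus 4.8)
The plan is to use the factorization $\Psi^{(n)}(t,h)=g\bigl(v^{(n)}(t,h)\bigr)$, where $g(z)=[z]^+$ and $v^{(n)}(t,h):=K-e^{\Phi_t(P_nh)}$ is smooth in both variables, and to bound separately the $L^p$-norm of $\Psi^{(n)}(t)$, that of $D\Psi^{(n)}(t)$, and the time-integral of $\|\partial_t\Psi^{(n)}(t)\|^2_{L^p}$, keeping all constants independent of $n$. The $L^p$-bound is immediate: since $0\le\Psi\le K$ by \eqref{psi1} and $\mu$ is a probability measure, $\|\Psi^{(n)}(t)\|_{L^p({\mathcal{H}_w},\mu)}\le K$ for every $t$ and $n$.

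For the gradient I would first compute the Friedrichs derivative of the smooth map $v^{(n)}$. Writing $h_i=\langle h,\varphi_i\rangle_w$ and $\Phi^i_t:=\Phi_t(\varphi_i)$, one has $\Phi_t(P_nh)=\sum_{i=1}^n\Phi^i_t\,h_i$, so that $D_kv^{(n)}(t,h)=-e^{\Phi_t(P_nh)}\Phi^k_t\,I_{\{k\le n\}}$ and hence $\|Dv^{(n)}(t,h)\|_w\le e^{\Phi_t(P_nh)}\|\Phi_t\|_w\le C_T\,e^{\Phi_t(P_nh)}$, the last step using the uniform bound $\|\Phi_t\|_w\le C_T$ coming from the continuous injection \eqref{injection}. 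Mollifying $g$ by the sequence $g_j=\rho_j\star g$ of Proposition \ref{mollify1} (for which $|g_j'|\le1$, cf.~\eqref{aste}) and applying the chain rule to the smooth composition gives $D(g_j\circ v^{(n)})=g_j'(v^{(n)})Dv^{(n)}$; passing to the limit by dominated convergence and using that $D$ is closed (cf.~\cite{DaPr}, Chapter 10) — exactly as in Proposition \ref{mollify1} — identifies $D\Psi^{(n)}(t,h)=g'(v^{(n)})Dv^{(n)}(t,h)$ together with the pointwise bound $\|D\Psi^{(n)}(t,h)\|_w\le C_T\,e^{\Phi_t(P_nh)}$.

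It then remains to control $\int_{\mathcal{H}_w}e^{p\Phi_t(P_nh)}\mu(dh)$. Under $\mu$ the coordinates $h_i$ are independent centred Gaussians of variance $\lambda_i$, so $\Phi_t(P_nh)$ is centred Gaussian with variance $\sum_{i=1}^n(\Phi^i_t)^2\lambda_i\le(\sup_i\lambda_i)\,\|\Phi_t\|_w^2\le(\sup_i\lambda_i)\,C_T^2$, a quantity finite by the trace-class property and independent of $n$ and $t$; the Gaussian moment generating function then yields $\int_{\mathcal{H}_w}e^{p\Phi_t(P_nh)}\mu(dh)\le e^{p^2(\sup_i\lambda_i)C_T^2/2}$, which combined with the previous paragraph gives the first estimate in \eqref{rp1}. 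For the time derivative I would use (again by the chain rule of Proposition \ref{mollify1}) that $\partial_t\Psi^{(n)}=g'(v^{(n)})\partial_t v^{(n)}$ with $\partial_t\Phi_t(P_nh)=(P_nh)(T-t)$, whence $|\partial_t\Psi^{(n)}(t,h)|\le e^{\Phi_t(P_nh)}|(P_nh)(T-t)|\le C\,\|h\|_w\,e^{C_T\|h\|_w}$, where I have used \eqref{injection}, $\|P_nh\|_w\le\|h\|_w$, and $e^{\Phi_t(P_nh)}\le e^{C_T\|h\|_w}$. The exponential integrability $\int_{\mathcal{H}_w}\|h\|^p_w\,e^{\lambda\|h\|_w}\mu(dh)<\infty$ recalled before Corollary \ref{arto} then bounds $\|\partial_t\Psi^{(n)}(t)\|_{L^p}$ uniformly in $t$ and $n$, and integrating its square over $[0,T]$ yields the second estimate in \eqref{rp1}.

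The main obstacle is the passage through the non-smooth profile $g$: because of the corner at $z=0$ the chain rule must be read in the Gaussian Sobolev sense rather than classically, and this is precisely what the mollification-plus-closability scheme already deployed in Proposition \ref{mollify1} delivers. The only other point requiring attention is the uniformity of every constant in $n$, which is guaranteed throughout by the contraction property $\|P_nh\|_w\le\|h\|_w$ and by the $n$-independent bounds $\|\Phi_t\|_w\le C_T$ and $\sup_i\lambda_i<\infty$.
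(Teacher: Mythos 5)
Your proof is correct, but it takes a genuinely different route from the paper's. The paper never factorizes $\Psi^{(n)}=g\circ v^{(n)}$ and never computes $D\Psi^{(n)}$ explicitly: it mollifies the finite-dimensional function $\Psi^{(n)}(t,\cdot)$ itself on $\mathbb{R}^n$, observes that the mollified functions inherit the spatial Lipschitz bound \eqref{lip-psi1} uniformly in $n$ (because $\|P_nh-P_ng\|_w\le\|h-g\|_w$), so that $\|D\Psi^{(n)}_k(t)\|_{L^p(\mathbb{R}^n,\mu_n;\mathbb{R}^n)}\le C_1$, and then extracts a weakly convergent subsequence in $W^{1,p}(\mathcal{H}_w,\mu)$, identifies the weak limit with $\Psi^{(n)}(t)$ via the strong $L^p$ convergence, and invokes lower semicontinuity of the norm to get \eqref{rp1} with the explicit constant $C_{\Psi,p}=K+C_1$; the time-derivative bound is obtained by the same compactness pattern from \eqref{lip-psi2} together with $\int_{\mathcal{H}_w}\|h\|^p_w\,\mu(dh)<\infty$. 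Your route instead pushes the chain rule of Proposition \ref{mollify1} through the composition $g\circ v^{(n)}$, which buys you an explicit formula $D\Psi^{(n)}=g'(v^{(n)})Dv^{(n)}$ and explicit constants, at the price of the Gaussian moment-generating-function computation for $\Phi_t(P_nh)$ and the Fernique-type integrability recalled before Corollary \ref{arto}; in fact both of these are avoidable even within your scheme, since on the set $\{g'(v^{(n)})\neq0\}$ one has $e^{\Phi_t(P_nh)}<K$, so your pointwise bounds sharpen to $\|D\Psi^{(n)}(t,h)\|_w\le C_T\,K$ and $|\partial_t\Psi^{(n)}(t,h)|\le C\,K\,\|h\|_w$, after which only $\int_{\mathcal{H}_w}\|h\|^p_w\,\mu(dh)<\infty$ is needed. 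The paper's argument is shorter and applies verbatim to any payoff satisfying \eqref{lip-psi1} and \eqref{lip-psi2}, while yours yields structural information on the derivative that the compactness argument cannot give.

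Two small points you should make explicit. First, the $\mu$-a.e.\ convergence $g_j'(v^{(n)}(t,\cdot))\to g'(v^{(n)}(t,\cdot))$ fails exactly on the level set $\{h:\Phi_t(P_nh)=\ln K\}$, so you need this set to be $\mu$-null: this holds because $\Phi_t(P_nh)$ is a centred Gaussian under $\mu$, and in the degenerate case of zero variance one has $v^{(n)}(t,\cdot)\equiv K-1\neq0$, so the issue is vacuous (the paper's own Proposition \ref{mollify1} glosses over the same point, so this is a shared and easily repaired omission, not a flaw of your approach). Second, there is no circularity in borrowing the mollifiers $g_j=\rho_j\star g$ and the bound \eqref{aste} from Proposition \ref{mollify1}: their construction is elementary and independent of the present lemma, even though in the paper's logical order this lemma precedes that proposition.
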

\begin{proof}
Fix $1\le p<+\infty$. The uniform bound $K$ of (\ref{psi1}) gives
\begin{eqnarray*}
\sup_{t\in[0,T]}\|\Psi^{(n)}(t)\|_{L^p({\mathcal{H}_w},\mu)}\leq K\mu({\mathcal{H}_w})=K, \quad\textrm{for}\:\: n\in\mathbb{N}.
\end{eqnarray*}
Notice that $\Psi^{(n)}$ is a function defined on $[0,T]\times\mathbb{R}^n$; 
that is $\Psi^{(n)}(t,h)\equiv\Psi^{(n)}(t,h_1,\ldots h_n)$ for $h_i:=\langle h,\varphi_i\rangle_w$, $i=1,\ldots n$. Hence we may mollify $\Psi^{(n)}$ by the standard mollifiers $(\rho_k)_{k\in\mathbb{N}}$. In fact, fix $t\in[0,T]$ and define $\Psi^{(n)}_k(t,\cdot):=\rho_k\star \Psi^{(n)}(t,\cdot)$. Clearly the pointwise convergence holds, $\Psi^{(n)}_k(t,h)\to\Psi^{(n)}(t,h)$ as $k\to\infty$, for $h\in{\mathcal{H}_w}$ (cf. \cite{Brezis}, Proposition 4.21) and
\begin{eqnarray*}
|\Psi^{(n)}_k(t,z)|=\big|\int_{\mathbb{R}^n}{\rho_k(y)\Psi^{(n)}(t,z-y)dy}\big|\leq K\qquad\textrm{for $z:=(h_1,\ldots,h_n)\in\mathbb{R}^n$}
\end{eqnarray*}
by \eqref{psi1}. Hence 
$\Psi^{(n)}_k\to\Psi^{(n)}$ as $k\to\infty$ in $L^p(0,T;L^p({\mathcal{H}_w},\mu))$ by dominated convergence.

It is easy to see that the mollified functions $\Psi^{(n)}_k$ are equi-Lipschitz in the space variable, uniformly with respect to $t$, with the same constant $C_1$ (cf. \eqref{lip-psi1}). Therefore $\|D\Psi^{(n)}_k(t)\|_{L^p({\mathcal{H}_w},\mu;{\mathcal{H}_w})}=\|D\Psi^{(n)}_k(t)\|_{L^p(\mathbb{R}^n,\mu_n;\mathbb{R}^n)}\leq C_1\mu_n(\mathbb{R}^n)=C_1$, for all $n$, $k$ and $t$, by Remark \ref{conn}. We may conclude that
\begin{equation}\label{b01}
\|\Psi^{(n)}_k(t)\|_{L^p({\mathcal{H}_w},\mu)}+\|D\Psi^{(n)}_k(t)\|_{L^p({\mathcal{H}_w},\mu;{\mathcal{H}_w})}\leq (K+C_1)\qquad\textrm{for $n,k\in\mathbb{N}$ and $t\in[0,T]$.}
\end{equation}
Now \eqref{b01} guarantees that for each $n\in\mathbb{N}$ and $t\in[0,T]$ there exists a function $\phi^{(n)}(t)\in W^{1,p}({\mathcal{H}_w},\mu)$ and a subsequence $(\Psi^{(n)}_{k_j}(t))_{j\in\mathbb{N}}$ such that $\Psi^{(n)}_{k_j}(t)\rightharpoonup\phi^{(n)}(t)$ in $W^{1,p}({\mathcal{H}_w},\mu)$ as $j\to\infty$ (cf.~\cite{Brezis}, Theorem 3.18). It must be $\Psi^{(n)}(t)=\phi^{(n)}(t)$ by uniqueness of the limit, since $\Psi^{(n)}_k(t)\to\Psi^{(n)}(t)$ as $k\to\infty$ in $L^p({\mathcal{H}_w},\mu)$ and hence $\Psi^{(n)}(t)\in W^{1,p}({\mathcal{H}_w},\mu)$ for $t\in[0,T]$. The lower semi-continuity of the weak limit and (\ref{b01}) imply the first estimate of \eqref{rp1} with $C_{\Psi,p}=K+C_1$.

Similarly, to show the second bound in \eqref{rp1} we recall that $\int_{\mathcal{H}_w}{\|h\|^p_w\mu(d\,h)}<\infty$, apply the same arguments as before and use the Lipschitz property of $\Psi^{(n)}(\,\cdot\,,h)$, locally with respect to $h\in{\mathcal{H}_w}$ (cf.~\eqref{lip-psi2}). 
\end{proof}

\begin{proof}[Proof of \textbf{\textrm{Proposition \ref{reg-psi2}}}]
Take the projection $\Psi^{(n)}$ of $\Psi$ as in \eqref{psi-n}. The sequence $(\Psi^{(n)}(t,\cdot))_{n\in\mathbb{N}}$ is bounded in $W^{1,p}({\mathcal{H}_w},\mu)$, uniformly in $t\in[0,T]$, for all $1\le p<+\infty$ by Lemma \ref{reg-psi}. Therefore for every $t\in[0,T]$ there exists $\Phi(t)\in W^{1,p}({\mathcal{H}_w},\mu)$ such that $\Psi^{(n)}(t)\rightharpoonup\Phi(t)$ in $W^{1,p}({\mathcal{H}_w},\mu)$ as $n\to\infty$. By \eqref{psi-n2} and dominated convergence $\Psi^{(n)}(t)\to\Psi(t)$ in $L^p({\mathcal{H}_w},\mu)$, $1\leq p<\infty$ as $n\to\infty$ for all $t\in[0,T]$; hence $\Phi(t)=\Psi(t)$ for all $t\in[0,T]$. Moreover, since the first bound in \eqref{rp1} is uniform with respect to $n\in\mathbb{N}$ and $t\in[0,T]$, the lower semi-continuity of the weak convergence gives the first bound in \eqref{rp1-2}.

To prove the second bound in \eqref{rp1-2} it suffices to adopt arguments as above and use the second bound in \eqref{rp1}.
\end{proof}
\end{appendix}



\begin{thebibliography}{99}
\bibitem{Bar-Mar08} \textsc{Barbu, V. \emph{and} Marinelli, C.} (2008). \emph{Variational Inequalities in Hilbert Spaces with Measures and Optimal Stopping Problems}. Appl.~Math.~Optim.~\textbf{57} (237-262).
\bibitem{BaZa12} \textsc{Barski, M. \emph{and} Zabczyk, J.}~(2012). \emph{Forward rate models with linear volatilities}. Finance Stoch.~\textbf{16} (537-570).
\bibitem{BaZa12b} \textsc{Barski, M.~\emph{and} Zabczyk, J.}~(2012). \emph{Heath-Jarrow-Morton-Musiela equation with L\'evy perturbation}. J.~Differential Equations \textbf{253} (2657-2697).
\bibitem{Ben84} \textsc{Bensoussan, A.} (1984).~\emph{On the theory of option pricing}. Acta Appl.~Math.~\textbf{2} Vol.~2 (139-158).
\bibitem{Ben-Lio82} \textsc{Bensoussan, A. \emph{and} Lions, J.L.} (1982). \textit{Applications of Variational Inequalities in Stochastic Control}. North-Holland.
\bibitem{Bjork} \textsc{Bj\"ork, T.}~(2004).~\textit{Arbitrage Theory in Continuous Time}, (Second Edition). Oxford University Press, New York.
\bibitem{Bj-Ch99} \textsc{Bj\"ork, T.~\emph{and} Christensen, B.J.}\ (1999). \textit{Interest Rate Dynamics and Consistent Forward Rate Curves}, Math.~Finance \textbf{9} (4) (323-348).
\bibitem{Bj-La02} \textsc{Bj\"ork, T.~\emph{and} Land\'en, C.}~(2002). \textit{On the construction of finite dimensional realizations for nonlinear forward rate models}, Finance Stoch.~\textbf{6} (303-331).
\bibitem{Bj-La-Sv04} \textsc{Bj\"ork, T., Land\'en~\emph{and} Svensson, C. L.}~(2004). \textit{Finite-dimensional Markovian realizations for stochastic volatility forward-rate models}, Proc.~R.~Soc.~A \textbf{460} (53-83).
\bibitem{Bogachev} \textsc{Bogachev, V.I.} (1997). \textit{Gaussian Measures}. American Mathematical Society.
\bibitem{Br-Mer} \textsc{Brigo, D. \emph{and} Mercurio, F.}~(2006).~\textit{Interest Rate Models, Theory and Practice (with Smiles, Inflation and Credit)}, (Second Edition). Springer-Verlag Berlin Heidelberg.
\bibitem{Brezis} \textsc{Brezis, H.} (2010). \textit{Functional Analysis, Sobolev Spaces and Partial Differential Equations}. Universitext, Springer, New York.
\bibitem{ChKw03} \textsc{Chiarella, C.~\emph{and} Kwon, O.~K.}~(2003). \textit{Finite dimensional affine realization of HJM models in terms of forward rates or yields}. Review of Derivative Research \textbf{6} (129-155).
\bibitem{Ch-DeA12a} \textsc{Chiarolla, M.~B.~\emph{and} De Angelis, T.} (2012). \textit{Optimal stopping of a Hilbert space valued diffusion: an infinite dimensional variational inequality}. Submitted. \href{http://arxiv.org/abs/1207.0720}{http://arxiv.org/abs/1207.0720}
\bibitem{Ch-Men89} \textsc{Chow, P.L.\ \emph{and} J.L. Menaldi} (1989). \textit{Variational Inequalities for the Control of Stochastic Partial Differential Equations}. Stochastic Partial Differential Equations and Applications II, Lecture Notes in Mathematics, Springer-Verlag Berlin Heidelberg (42-52).
\bibitem{DaPr} \textsc{Da Prato, G.} (2006). \textit{An Introduction to Infinite-Dimensional Analysis}. Springer-Verlag Berlin Heidelberg.
\bibitem{DaPr-Zab} \textsc{Da Prato, G.\ \emph{and} J. Zabczyk} (1992). \textit{Stochastic Equations in Infinite Dimensions}. Cambridge University Press.
\bibitem{DaPr-Zab04} \textsc{Da Prato, G.\ \emph{and} Zabczyk, J.} (2004). \textit{Second Order Partial Differential Equations in Hilbert Spaces}. Cambridge University Press.
\bibitem{PhD-T} \textsc{De Angelis, T.}~(2012). \textit{Pricing American Bond Options under HJM: an Infinite Dimensional Variational Inequality}. PhD Thesis. University of Rome ``La Sapienza''.
\bibitem{Die} \textsc{Dieudonn\'e, J.} (1969). \textit{Foundations of Modern Analysis}. Academic Press, London.
\bibitem{Fil01} \textsc{Filipovi\'c, D.} (2001). \textit{Consistency Problems for Heath-Jarrow-Morton Interest Rate Models}. Springer-Verlag Berlin Heidelberg.
\bibitem{Fil09} \textsc{Filipovi\'c, D.}~(2009).~\textit{Term-Structure Models, A Graduate Course}. Springer-Verlag Berlin Heidelberg.
\bibitem{Fil-Te03} \textsc{Filipovi\'c, D.~\emph{and} Teichmann, J.}~(2003). \textit{Existence of invariant manifolds for stochastic equations in infinite dimension}, J.~Funct.~Anal.~\textbf{197} (398-432).
\bibitem{Fil-Te03b} \textsc{Filipovi\'c, D.~\emph{and} Teichmann, J.}~(2004). \textit{On the geometry of the term structure of interest rates}, Proc.~R.~Soc.~A \textbf{460} (129-167).
\bibitem{Fri82} \textsc{Friedman, A.}~(1982).~\textit{Variational Principles and Free Boundary Problems}. John Wiley and Sons, New York.
\bibitem{Gat-Sw99} \textsc{G\c{a}tarek, D.\ \emph{and} \'Swi\c{e}ch, A.} (1999). \textit{Optimal Stopping in Hilbert Spaces and Pricing of American Options}. Math. Methods of Oper.~Res.~\textbf{50} (135-147).
\bibitem{Gol-Mus} \textsc{Goldys, B.\ \emph{and} Musiela, M.} \textit{On Partial Differential Equations Related to Term Structure Models}. unpublished paper.
\bibitem{GMS} \textsc{Goldys, B., Musiela, M.\ \emph{and} Sondermann, D.}\ (2000). \textit{Lognormality of Rates and Term Structure Models}. Stoch.~Anal.~Appl.~\textbf{18} Vol.\ 3 (375-396).
\bibitem{HJM92} \textsc{Heath, D., Jarrow, R.\ \emph{and} Morton,  A.}\ (1992). \textit{Bond Pricing and the Term Structure of Interest Rates: A New Methodology for Contingent Claims Valuation}. Econometrica \textbf{60} Vol.\ 1 (77-105).
\bibitem{Jai-Lam90} \textsc{Jaillet, P.~Lamberton, D.~ \emph{and} Lapeyre, B.}~(1990).~\textit{Variational Inequalities and the Pricing of American Options}. Acta Appl.~Math.~\textbf{21} (263-289).
\bibitem{Kall} \textsc{Kallenberg, O.}~(2002). \textit{Foundations of Modern Probability}, (Second Edition), Springer.
\bibitem{Kar88}  \textsc{Karatzas, I.}~(1988).~\textit{On the pricing of American options}. Appl.~Math.~Optim.~\textbf{17} Vol.~1 (37-60).
\bibitem{Ma-Rock} \textsc{Ma, Z.M.\ \emph{and} R\"ockner, M.} (1992). \textit{Introduction to the Theory of (Non-Symmetric) Dirichlet Forms}. Springer-Verlag Berlin Heidelberg.
\bibitem{Mor88} \textsc{Morton, A.} (1988). \textit{Arbitrage and Martingales}. Technical Report n.821, School of Operations Research and Industrial Engineering, Cornell University.
\bibitem{Mus93} \textsc{Musiela, M.} (1993). \textit{Stochastic PDEs and term structure models}. Journ\'ees Internationales de Finance, IGR-AFFI, La Baule.
\bibitem{Mu-Rut} \textsc{Musiela, M.~\emph{and} Rutkowski, M.}~(2005).~\textit{Martingale Methods in Financial Modeling}, (Second Edition). Springer-Verlag Berlin. 
\bibitem{Pazy} \textsc{Pazy, A.} (1983). \textit{Semigroups of Linear Operators and Applications to Partial Differential Equations}. Springer-Verlag, New York.
\bibitem{Pes-Shir} \textsc{Peskir, G.~\emph{and} Shiryaev, A.}~(2000).~\textit{Optimal Stopping and Free-Boundary Problems}. Lectures in Mathematics, ETH Zurich.
\bibitem{Ta10} \textsc{Tappe, S.}~(2012). \textit{An alternative approach on the existence of affine realizations for HJM term structure models}. Proc.~R.~Soc.~A\textbf{466} (3033-3060).
\bibitem{Zab01} \textsc{Zabczyk, J.} (2001). \textit{Bellman's Inclusions and Excessive Measures}. Probab.~Math.~Statist.~\textbf{21} Vol.\ 1 (101-122).
\end{thebibliography}
\end{document}